\documentclass[12pt]{amsart}

\usepackage{graphicx, amssymb}

\usepackage{axodraw4j}
\usepackage{pstricks}
\usepackage{color}

\newtheorem{thm}{Theorem}
\newtheorem{lemma}[thm]{Lemma}
\newtheorem{prop}[thm]{Proposition}
\newtheorem{remark}[thm]{Remark}
\newtheorem{cor}[thm]{Corollary}

\theoremstyle{definition}
\newtheorem{definition}[thm]{Definition}
\newtheorem{example}[thm]{Example}

\newcommand{\C}{\mathbb C}
\newcommand{\Q}{\mathbb Q}

\newcommand{\x}{\mathsf{x}}
\newcommand{\gr}{\mathrm{gr}}
\newcommand{\LL}{\mathcal{L}}

\newcommand{\lc}{\langle \langle}
\newcommand{\rc}{\rangle \rangle}

\title[Weight of Feynman graphs]{Spanning forest polynomials and the transcendental weight of  Feynman graphs}

%\title[Signs and weight drop]{Signs in Dodgson polynomials and Feynman graphs with weight drop}

\author{Francis Brown and Karen Yeats}
\thanks{Karen Yeats is supported by an NSERC discovery grant}

\begin{document}
\begin{abstract}
  We give combinatorial criteria for predicting the transcendental weight of Feynman integrals of certain graphs in $\phi^4$ theory.
   By studying spanning forest polynomials, we obtain 
    operations on graphs which are weight-preserving, and a list of   subgraphs which induce a drop in the transcendental weight.
\end{abstract} 

\maketitle

\vspace{-0.5in}
\section{Introduction}

It is well-known  since the work of Broadhurst and  Kreimer  that single-scale massless Feynman integral calculations in perturbative quantum field theories
give rise empirically to multiple zeta values. In particular,  there is a map from primitive graphs in $\phi^4$ theory at low loop orders to linear combinations of multiple zeta values.
Currently there is no way to predict this map without intensive numerical analysis.

%These  reflect the underlying physics  in a way that is far from being completely understood.

%Thus  perturbative expansions in a quantum field theory should carry a lot of extra hidden structure.

In this paper, we consider  the most simple invariant of multiple zeta values: their transcendental  weight. Conjecturally, there should exist no linear relations between 
multiple zetas of different weights, and hence one expects there to  be a grading on the ring of MZVs.
%
%Conjecturally, the set of periods should be graded by their  weight, but because of the inaccessibility of the transcendence conjectures, it only makes sense to 
%speak of the filtration by the weight.\footnote{One way around this is to promote all objects to a more sophisticated category, such as the category of mixed Hodge %structures, which has a weight grading.}    Naively, the weight of a period is bounded above by the number of integrations in some integral representation of the period.
On the other hand, the perturbative expansion in massless $\phi^4$ theory is  also graded by the number of loops. The surprising fact is that these gradings do not quite coincide.
In the generic case, the transcendental weight of a   graph is equal to twice its loop number minus 3. This is the case for the left and middle graphs in the examples below:
%The surprising fact,  is that these notions of weight \emph{do not quite coincide}. 
%For example, consider the following three primitive graphs in $\phi^4$ theory, at $3,4$ and $5$ loops respectively:
%
%
%
% The typical graph has transcendental weight $2\times \hbox{loops} - 3$. This is the case for the left and middle graphs below:
%
%
%
%
\begin{center}
\fcolorbox{white}{white}{
  \begin{picture}(322,94) (15,-120)
    \SetWidth{1.0}
    \SetColor{Black}
    \Text(48,-135)[lb]{\Black{$6\zeta(3)$}}
    \Text(161,-135)[lb]{\Black{$20\zeta(5)$}}
    \Text(280,-135)[lb]{\Black{$36\zeta(3)^2$}}
    \Text(-2,-35)[lb]{\Black{$3$ loops:}}
    \Text(111,-35)[lb]{\Black{$4$ loops:}}
    \Text(230,-35)[lb]{\Black{$5$ loops:}}
    \Arc(60,-74)(44,270,630)
    \Arc(174,-74)(44,270,630)
    \Arc(292,-74)(44,270,630)
    %w4
    \Vertex(60,-30){2.828}
    \Vertex(174,-30){2.828}
    \Vertex(130,-74){2.828}
    \Vertex(218,-74){2.828}
    \Vertex(174,-118){2.828}
    %w3
    \Vertex(94,-101){2.828}
    \Vertex(25,-101){2.828}
    \Line(60,-30)(60,-74)
    \Line(60,-74)(94,-101)
    \Line(60,-74)(25,-101)
    \Line(174,-30)(174,-116)
    \Line(132,-74)(218,-74)
    %w4
    \Vertex(60,-74){2.828}
    \Vertex(174,-74){2.828}
   %5np
    \Vertex(292,-30){2.828}
    \Vertex(292,-58){2.828}
   %square
    \Vertex(271,-82){2.828}
    \Vertex(315,-82){2.828}
    \Vertex(315,-112){2.828}
    \Vertex(271,-112){2.828}
    \Line(292,-30)(292,-58)
    \Line(292,-58)(271,-82)
    \Line(271,-82)(271,-112)
    \Line(292,-58)(315,-82)
    \Line(315,-82)(315,-112)
    \Line(315,-82)(271,-112)
    \Line[dash,dashsize=22](271,-82)(315,-112)
  \end{picture}
}
\end{center}

\vspace{0.2in}
But the   non-planar graph on the  right has 5 loops 
and hence its expected  weight should be  $2\times 5-3=7$; yet it has weight 6. In other words,   a \emph{weight drop} can occur.
The goal of this paper is to understand     combinatorially why  such a weight drop arises.

Our main result  describes some operations on graphs in $\phi^4$ theory under which the weight is preserved. As a corollary, we produce some
infinite families of graphs which are of maximal weight, and other families  which have  a weight drop.  These two classes  should contribute to $\phi^4$ theory in a 
quite different way.  In order to obtain  physical predictions in practice,  one must  sum a large number of diagrams in a given quantum field theory  at each loop order,  and it is known that not all graphs contribute equally
to the final sum. Indeed, some can even be discarded altogether.  We hope that the notion of  weight drop  may shed some light on this phenomenon.
 
 \begin{remark} It is currently not known whether all primitive graphs in $\phi^4$ theory evaluate to multiple zetas or not, but the residues are always periods in the sense
 of \cite{Ko-Za}. Therefore the  general conjectural picture is that there should exist a  large  pro-algebraic  (`motivic') Galois group which acts on the set of all periods, and should in particular equip   the perturbative expansion of a quantum field theory with a lot of extra structure. The notion of weight is the first non-trivial piece of information that such a theory would provide.
 \end{remark}
 
\subsection{Outline} Let $G$ be a primitive graph in $\phi^4$ theory with $e_G$ edges. Its residue is defined by the formula
$$I_G = \int  {1\over \Psi_G^2} \prod_{i=1}^{e_G} d\alpha_i \delta(\alpha_{\kappa}=1)\ ,$$
where $\alpha_i$ is the Schwinger coordinate of each edge, and $I_G$ does not depend on the choice of  $\kappa$.
The graph polynomial $\Psi_G$ is  defined by 
$$\Psi_G = \sum_{T\subseteq G} \prod_{e\notin T} \alpha_e$$
where $T$ ranges over all spanning trees of $G$.  In order to understand the  integral $I_G$, one is naturally led \cite{Brbig} to consider auxilliary (or `Dodgson') polynomials 
$\Psi^{I,J}_{G,K}$,   where $I,J,K$ are subsets of edges of $G$ satisfying $|I|=|J|$.
 In the first part of this paper, we introduce \emph{spanning forest polynomials}  $\Phi^{P}_G$ associated to 
 any partition $P$ of a subset  of vertices of $G$.  These are sums over families of trees whose leaves   contain the vertices in each partition of $P$.
 We show that every polynomial $\Psi^{I,J}_{G,K}$ can be written as a linear combination of $\Phi^{P}_G$, which in particular gives a formula for the signs in $\Psi^{I,J}_{G,K}$.

Next, in \S\ref{sectId} we study some algebraic identities between spanning forest polynomials, and
give a universal formula for the graph polynomial of any 3-vertex connected graph  $G$ in terms of the $\Phi^{P}_G$. The graph polynomial for any such graph
is the graph polynomial of the single graph $M(1,1,1)$ which has 6 edges, which are decorated by spanning forest polynomials. Thus to prove a statement about
any 3-vertex connected graph, it suffices to prove it in this single case.

In \S\ref{sectHyper} we recall some elementary properties of hyperlogarithms  and give a sufficient condition for a graph to have a weight drop. This can be phrased in terms
of higher graph invariants \cite{Brbig}, which in some special cases can be computed in terms of spanning forest polynomials.
 Using the properties of the $\Phi^P_G$ proved previously, we show that any 2-vertex connected graph always has a weight drop. We then show 
 that the operation of  splitting a triangle and moving some of its outer edges preserves weights.
 These two results alone suffice to explain almost all of the known weight-drop graphs in $\phi^4$ theory up to 8 loops.

In \S\ref{sect3vw}, we  seek a classification of weight-preserving, or weight-dropping  operations.  Using the universal formula for 3-vertex  connected graphs, it suffices to write down
all the local minors of 3-vertex connected graphs at $k$ edges, for small values of $k$, and compute their graph polynomials.
From this we deduce some new families of weight-preserving operations which are not attainable by splitting triangles.
\\

Very many thanks to S. Bloch, D. Broadhurst,  D. Kreimer, and  O. Schnetz for discussions and enthusiasm.
\\

\begin{remark}
One way to circumvent the transcendence conjectures for periods is to replace them with  mixed Hodge structures, which was  initiated in  \cite{bek}. The question
one can then ask, following (loc cit), is where in the  weight filtration does  the differential form defining the period sit? In this context, the weight always makes sense, but it is currently not known how to carry this out save for a  few examples 
of graphs. Therefore, in the face of the geometric difficulties of this problem, we have instead focused on the combinatorial aspects of the weight drop, which seems to be a necessary prerequisite before trying to tackle the Hodge side. It came as a surprise to us quite how intricate this first step already is.
% were surprised  by how intricate this problem already is.
\end{remark}

%\subsection{Summary of weight operations}

%1. Weight drop: 2VW, 2-valent, or $K_2$, $K_4$, $K_{3,4}$.

%2. Weight preserving operations.

\subsection{Background}
Let $G$ be a connected multigraph, with self-loops\footnote{or \emph{tadpoles}} allowed. Let $E(G)$, $V(G)$ denote the set of edges and vertices of $G$, and let $e_G=|E(G)|$, $v_G = |V(G)|$.
 To each edge $e$ of $G$, we  associate a Schwinger parameter $\alpha_e$. The graph polynomial of $G$ is defined by
 $$\Psi_G= \sum_{T\subseteq G} \prod_{e\notin E(T)} \alpha_e\in \mathbb{Z}[\alpha_e, e\in E(G)]\ , $$
 where the sum is over all spanning trees $T$ of $G$.

\begin{definition} Choose an orientation on the edges of $G$, and for every edge $e$ and vertex $v$ of $G$, define an incidence matrix:
$$(\mathcal{E}_G)_{e,v} = \left\{
                           \begin{array}{ll}
                             1, & \hbox{if the edge } e \hbox{ begins at } v, \\
                             -1, & \hbox{if the edge } e  \hbox{ ends at } v ,\\
                             0, & \hbox{otherwise}.
                           \end{array}
                         \right.
 $$
Let $A$ be the diagonal matrix with entries  $\alpha_e$, for $e \in E(G)$, and set
$$\widetilde{M}_G=\left(
  \begin{array}{c|c}
    A  & \mathcal{E}_G  \\
    \hline
  {-}\mathcal{E}_G^T&  0  \\
  \end{array}
\right)
$$
where the first $e_G$ rows (resp. columns) are indexed by the set of edges of $G$, and the remaining $v_G$ rows (resp. columns) are indexed by the set of vertices of $G$, in some order.
The matrix $\widetilde{M}_G$ has zero determinant.
\end{definition}

\begin{definition} Choose any vertex of $G$ and let $M_G$ denote the minor  obtained by  deleting the corresponding row and column of $\widetilde{M}_G$.  
\end{definition}
The matrix $M_G$ is not well-defined, but one can show that 
\[
    \Psi_G = \det (M_G)
  \]
  is the graph polynomial of $G$. This motivates the following:

\begin{definition} Let $I,J, K$ be subsets of the set of edges of $G$ which satisfy $|I|=|J|$. Let  
$M_G(I,J)_K$ denote the matrix obtained from $M_G$ by removing the rows (resp. columns) indexed by the set $I$ (resp. $J$) and setting $\alpha_e=0$ for all $e\in K$.
Set 
$$\Psi_{G,K}^{I,J}=\det M_G(I,J)_K\ . $$
\end{definition}

It is clear that $\Psi_{G,\emptyset}^{\emptyset,\emptyset} = \Psi_G$.  The polynomials $\Psi^{I,J}_{G,K}$ are well-defined up to sign.
The following results are proved in \cite{Brbig}.
\begin{prop}\label{contract} Let $e\in E(G)$ such that $e\notin I \cup J \cup K$. Let $G\backslash e$ denote the graph obtained by deleting the edge $e$ (and removing any isolated vertices) and let 
$G/ e$ denote the graph obtained by contracting $e$ (i.e., deleting $e$ and identifying its endpoints).\footnote{We require that the contraction of tadpoles be zero} Then
 \begin{eqnarray} \Psi_{G\backslash e, K}^{I,J} &= & \Psi_{G, K}^{I\cup e, J\cup e} \nonumber \\
   \Psi_{G/ e, K}^{I,J} &= & \Psi_{G, K\cup e}^{I, J} \nonumber 
 \end{eqnarray} 
 Since  $\Psi_{G,K}^{I,J}$ is linear in the Schwinger parameters this implies that
 $$\Psi_{G,K}^{I,J} = \Psi_{G\backslash e, K}^{I, J} \,\alpha_e + \Psi_{G/e, K}^{I, J}.$$
\end{prop}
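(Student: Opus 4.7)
My strategy is to derive the first two identities from the block structure of $M_G$, and then recover the Deletion-Contraction formula as a consequence of the multilinearity of the determinant.

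The polynomial $\alpha_e$ appears in $M_G$ at exactly one position: the diagonal entry indexed by the edge $e$. Thus, viewing $M_G$ as a matrix with polynomial entries, we can write $M_G(I,J)_K = \alpha_e\, E_{ee} + N$, where $E_{ee}$ has a single nonzero entry (equal to $1$) in position $(e,e)$, and $N$ is the matrix obtained by setting $\alpha_e = 0$. Since $e \notin I\cup J\cup K$, the $(e,e)$ entry is present in $M_G(I,J)_K$, and multilinearity of the determinant in the $e$-th row (or column) gives
$$
\det M_G(I,J)_K \;=\; \alpha_e \cdot \det M_G(I\cup e,\, J\cup e)_K \;+\; \det N,
$$
where the first determinant on the right is the $(e,e)$-cofactor. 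The second term $\det N$ is exactly $\Psi^{I,J}_{G, K\cup e}$ by definition. So the deletion-contraction formula follows immediately from identities $(1)$ and $(2)$, which I now sketch.

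For identity $(1)$: the matrix $\widetilde{M}_{G\backslash e}$ is obtained from $\widetilde{M}_G$ precisely by deleting the row and column indexed by $e$, since the only entries of $\widetilde{M}_G$ depending on the presence of the edge $e$ live in that row and column (the entry $\alpha_e$ on the diagonal, and the $\pm 1$ incidence entries at the endpoints of $e$). Choosing a compatible vertex row/column to remove for both $G$ and $G\backslash e$, removing the further rows $I$, columns $J$ and setting $\alpha_{e'} = 0$ for $e' \in K$ commutes with this, so both $\Psi^{I,J}_{G\backslash e, K}$ and $\Psi^{I\cup e, J\cup e}_{G, K}$ are determinants of literally the same matrix (the sign ambiguity in the definition of these polynomials is harmless here). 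The only subtlety is when an endpoint of $e$ becomes isolated in $G\backslash e$, or when $e$ is a tadpole; in the latter case the incidence contributions cancel so that the $e$-row of $\widetilde{M}_G$ reduces to the $\alpha_e$ entry alone, and the identity holds trivially.

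For identity $(2)$: with $\alpha_e = 0$, the $e$-row of $M_G$ vanishes outside the two vertex columns $v_1, v_2$ that are the endpoints of $e$, where it has entries $+1$ and $-1$. Add column $v_1$ to column $v_2$: the $(e, v_2)$ entry becomes $0$, while every other entry of column $v_2$ becomes the sum of the $v_1$ and $v_2$ entries --- this is exactly the incidence pattern obtained by identifying $v_1$ with $v_2$. Perform the analogous row operation on rows $v_1, v_2$. The $e$-row now has a single nonzero entry (in column $v_1$), and similarly the $e$-column has a single nonzero entry (in row $v_1$); Laplace expansion removes the $e$-row, $e$-column and the $v_1$-row, $v_1$-column, and what remains is (up to the omnipresent sign) the matrix $M_{G/e}$ with endpoint $v_2$ playing the role of the identified vertex. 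If $e$ is a tadpole, $v_1 = v_2$, the incidence entries in the $e$-row cancel already, so the $e$-row is identically zero and the determinant vanishes, which matches the convention that the contraction of a tadpole is zero.

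The main obstacle is purely bookkeeping: tracking the signs coming from the row/column reorderings and from Laplace expansion, and handling the degenerate cases (tadpoles and pendant edges). Since the polynomials $\Psi^{I,J}_{G,K}$ are only defined up to sign, the signs are not actually part of the statement, so the content of the proof really is the linear-algebraic identification above.
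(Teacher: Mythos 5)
The paper does not prove this proposition at all --- it is quoted from \cite{Brbig} --- so there is no in-text argument to compare against; your proof is the standard one from that reference and it is correct. The decomposition $\det M_G(I,J)_K = \alpha_e\cdot(\text{$(e,e)$-cofactor}) + \det(M_G(I,J)_{K\cup e})$, the identification of the cofactor with $M_{G\backslash e}(I,J)_K$ by striking row and column $e$, and the row/column additions at the endpoints of $e$ followed by Laplace expansion to produce $M_{G/e}(I,J)_K$ are exactly the intended mechanism, and your handling of the tadpole case matches the paper's footnote convention. The one loose end is the case you flag but do not close: if an endpoint of $e$ becomes isolated in $G\backslash e$, then $M_G(I\cup e, J\cup e)_K$ acquires an all-zero vertex column and its determinant vanishes, whereas $\Psi^{I,J}_{G\backslash e,K}$ computed on the graph with the isolated vertex deleted need not vanish (e.g.\ for a pendant edge, $\Psi^{e,e}_G=\partial\Psi_G/\partial\alpha_e=0$ while $\Psi_{G\backslash e}\neq 0$); this degenerate case is excluded implicitly by the intended convention and never occurs for the graphs of interest, but as written your sentence suggests it resolves itself when it does not. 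Since the statement's three-term identity does require the three signs to be chosen compatibly, it would also be cleaner to say that a single consistent ordering of rows and columns fixes all three determinants simultaneously, rather than that the signs ``are not part of the statement.''
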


\begin{cor}\label{IJdisj} Let $G$, $I,J,K$ be as above. Then 
  $$\Psi^{I,J}_{G,K} = \Psi^{I',J'}_{G',\emptyset}$$
  where $G' = G\backslash (I\cap J)/ (K\backslash (I\cap J))$, and $I'\cap J'=\emptyset$. In other words, by passing to a minor of $G$, we can  assume that $I\cap J=K=\emptyset$.
  \end{cor}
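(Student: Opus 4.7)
The plan is to iterate the two identities of Proposition \ref{contract} in reverse, combined with one elementary observation about the matrix $M_G$, in order to remove edges from $I\cap J$ and from $K$ one at a time until both are empty.

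The key preliminary observation is that setting $\alpha_e=0$ is vacuous whenever $e\in I\cup J$. Indeed, the variable $\alpha_e$ appears only in the diagonal entry $(e,e)$ of the block $A$ of $M_G$; if $e\in I$ then row $e$ has been deleted from $M_G(I,J)_K$, and if $e\in J$ then column $e$ has been deleted, so in either case the $(e,e)$ entry is already gone. Hence for any $e\in K\cap(I\cup J)$ we have
$$\Psi^{I,J}_{G,K} = \Psi^{I,J}_{G,K\setminus\{e\}},$$
and in particular we may assume from the outset that $K\cap(I\cup J)=\emptyset$.

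With this reduction in hand, I would induct on $|I\cap J|+|K|$. If $I\cap J\ne\emptyset$, pick any $e\in I\cap J$; since $e\notin K$ by the previous step, reading the first identity of Proposition \ref{contract} backwards yields $\Psi^{I,J}_{G,K} = \Psi^{I\setminus e,\,J\setminus e}_{G\backslash e,\,K}$, and the inductive hypothesis applies to a strictly smaller instance. Otherwise, if $I\cap J=\emptyset$ but $K\ne\emptyset$, pick any $e\in K$; since $K\cap(I\cup J)=\emptyset$ we have $e\notin I\cup J$, and reading the second identity backwards gives $\Psi^{I,J}_{G,K} = \Psi^{I,J}_{G/e,\,K\setminus e}$. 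The base case $I\cap J=K=\emptyset$ is the statement with $G'=G$ and $I'=I$, $J'=J$. Setting $I':=I\setminus J$ and $J':=J\setminus I$, which are disjoint and have equal cardinality since $|I|=|J|$, yields the desired identity.

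The subtle point, which I view as the main obstacle, is matching the minor produced by the induction with the $G'=G\backslash(I\cap J)/(K\setminus(I\cap J))$ in the statement. My reduction contracts exactly the edges of $K\setminus(I\cup J)$, whereas the corollary's $G'$ additionally contracts the edges of $K\cap(I\triangle J)$. To reconcile, one observes that each such edge $e\in K\cap(I\triangle J)$ has either its row (if $e\in I$) or its column (if $e\in J$) already removed in $M_{G^*}$, and the corresponding surviving column or row contains only two nonzero entries, namely $\pm 1$ at the two endpoint vertex rows of $e$. A cofactor expansion along that line identifies the determinant with that of a matrix naturally attached to $M_{G^*/e}$ (up to sign), and the edge disappears from the corresponding $I'$ or $J'$. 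Iterating this argument contracts all edges of $K\cap(I\triangle J)$ and completes the identification of $G'$ with the graph in the statement.
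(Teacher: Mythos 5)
Your core argument is correct and is essentially the intended one: the corollary is stated as a consequence of Proposition \ref{contract}, and reading its two identities in reverse to delete the edges of $I\cap J$ and contract the edges of $K$ one at a time is exactly the point. Your preliminary observation that setting $\alpha_e=0$ is vacuous for $e\in I\cup J$ (because $\alpha_e$ sits only in the $(e,e)$ entry, whose row or column has already been removed) is also correct, and it is the right way to dispose of $K\cap(I\cup J)$.

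Where you go astray is the final paragraph. The mismatch you perceive between your minor $G\backslash(I\cap J)/(K\setminus(I\cup J))$ and the stated $G'=G\backslash(I\cap J)/(K\setminus(I\cap J))$ is not something to be repaired by further contractions: if $e\in K\cap(I\bigtriangleup J)$, say $e\in I\setminus J$, then contracting $e$ removes it from the graph, so it cannot survive in $I'$, and you would be left with $|I'|\neq |J'|$ --- the resulting ``Dodgson polynomial'' is not even well formed (try $G$ a triangle with $I=\{1\}$, $J=\{2\}$, $K=\{1\}$: the literal $G'=G/1$ no longer contains the edge $1\in I$). The correct resolution is the one your own first step already provides: elements of $K$ lying in $I\cup J$ may simply be discarded from $K$ without altering the polynomial or the graph, after which $K\setminus(I\cap J)=K\setminus(I\cup J)$ and your minor coincides with the corollary's $G'$. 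In other words, the formula in the statement should be read after this harmless normalization of $K$; the cofactor-expansion argument you sketch is both unnecessary and, as a claim about additionally contracting edges of $I\bigtriangleup J$, not one you could complete. Deleting that paragraph and replacing it with the remark that your step~1 makes the two minors agree yields a complete and correct proof.
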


Now let $U\subset E(G)$ be a set of $|U|=h_1(G)=e_G-v_G+1$ edges.  Define $\mathcal{E}_G(U)$ to be a square $ (v_G-1)\times (v_G-1)$ matrix obtained from $\mathcal{E}_G$ by removing  the rows $U$, and any one column corresponding  to a vertex.  The matrix-tree theorem  states that
$$\det \mathcal{E}_G(U) \in \{0,\pm 1\}\ ,$$
and is non-zero if and only if $U$ is a spanning tree of $G$. 
\begin{prop}\label{E}
 Suppose that $I\cap J=\emptyset$. Then 
 $$\Psi_{G,\emptyset}^{I,J} =\sum_{U \subset G\backslash (I\cup J)} \prod_{u \notin U} \alpha_u \det (\mathcal{E}_G(U\cup I)) \det (\mathcal{E}_G(U \cup J))$$
  where $U$ ranges over all subgraphs of $G\backslash (I\cup J)$  which have the property that 
  %
  %
   %The product of determinants $\det (\mathcal{E}_G(U\cup I)) \det (\mathcal{E}_G(U \cup J))$  vanishes unless 
   %
   $U\cup I$ and $U\cup J$ are both spanning trees in $G$. % in which case it is equal to $\pm 1$.
\end{prop}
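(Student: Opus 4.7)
The plan is to compute $\det M_G(I,J)_\emptyset$ directly by combining a block Laplace expansion with the Cauchy--Binet formula. The key observation is that, since $I\cap J=\emptyset$ and $A$ is diagonal, reordering the edge rows as $(R,J)$ and the edge columns as $(R,I)$, where $R:=E(G)\setminus(I\cup J)$, puts $M_G(I,J)_\emptyset$ in the block form
$$\begin{pmatrix} \mathrm{diag}(\alpha_e)_{e\in R} & 0 & \mathcal{E}_R \\ 0 & 0 & \mathcal{E}_J \\ -\mathcal{E}_R^T & -\mathcal{E}_I^T & 0 \end{pmatrix},$$
where $\mathcal{E}_S$ denotes the submatrix of $\mathcal{E}_G$ on rows $S\subseteq E(G)$ and the final block index ranges over the $m:=v_G-1$ chosen vertex labels. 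The middle $|J|$ rows and middle $|I|$ columns are supported only in the vertex block, which is what drives the rest of the argument.

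I would next carry out two generalised Laplace expansions. The first, along the $J$-rows, selects a $|J|$-subset $S$ of vertex columns and extracts a factor $\pm\det\mathcal{E}_{J,S}$, where $\mathcal{E}_{J,S}$ is the square submatrix of $\mathcal{E}_G$ on rows $J$ and columns $S$; the second, along the $I$-columns, symmetrically extracts $\pm\det\mathcal{E}_{I,T}$ for some $T\subseteq V\setminus\{v_0\}$ of size $|I|$. Writing $\bar S,\bar T$ for the complementary vertex subsets, what remains is the determinant of
$$\begin{pmatrix}\mathrm{diag}(\alpha_e)_{e\in R} & \mathcal{E}_{R,\bar S}\\ -\mathcal{E}_{R,\bar T}^T & 0\end{pmatrix},$$
whose top-left block becomes invertible after localising at the $\alpha_e$. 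Schur complement against that diagonal block, followed by Cauchy--Binet applied to $\mathcal{E}_{R,\bar T}^T\,\mathrm{diag}(\alpha_R)^{-1}\,\mathcal{E}_{R,\bar S}$, then produces
$$\sum_{U\subseteq R,\,|U|=m-|J|}\prod_{e\in R\setminus U}\alpha_e\,\det\mathcal{E}_{U,\bar T}\,\det\mathcal{E}_{U,\bar S}.$$
Substituting back and interchanging summations, the inner sum $\sum_S\pm\det\mathcal{E}_{J,S}\det\mathcal{E}_{U,\bar S}$ is precisely the Laplace expansion along the $J$-rows of the square $m\times m$ matrix $\mathcal{E}_G(U\cup J)$ and thus collapses to $\det\mathcal{E}_G(U\cup J)$; the $T$-sum analogously yields $\det\mathcal{E}_G(U\cup I)$. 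The matrix-tree theorem recalled above then forces each term to vanish unless both $U\cup I$ and $U\cup J$ are spanning trees of $G$, giving exactly the range of summation asserted.

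The main obstacle is sign bookkeeping. The block reordering, the two outer Laplace expansions, the factor $(-1)^m$ arising from the $-\mathcal{E}_R^T$ block during the Schur-complement step, and the internal Laplace expansions used to reassemble $\det\mathcal{E}_G(U\cup I)$ and $\det\mathcal{E}_G(U\cup J)$ each contribute signs that depend on the selected subsets and on $|I|,|J|,m$. Verifying that these signs collapse into a single global $\pm$ factoring out of the entire sum is the routine but delicate step; since $\Psi^{I,J}_{G,\emptyset}$ is itself only well-defined up to an overall sign (as noted after Corollary~\ref{IJdisj}), this is exactly the form of identity the proposition claims.
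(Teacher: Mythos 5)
The paper does not actually prove this proposition itself --- it is quoted from \cite{Brbig} --- and your argument is essentially the standard one used there: put $M_G(I,J)$ in block form using $I\cap J=\emptyset$, Laplace-expand along the $J$-rows and $I$-columns (which are supported only in the vertex block), take the Schur complement against $\mathrm{diag}(\alpha_e)_{e\in R}$, and apply Cauchy--Binet; the sign bookkeeping is routine, and as you note only an overall sign is at stake since $\Psi^{I,J}_G$ is itself only well-defined up to sign. The one point worth making explicit is that your reassembly step reads $\mathcal{E}_G(W)$ as the square matrix on the \emph{retained} rows $W$ (with one vertex column deleted), which is the reading under which the matrix-tree criterion ``$U\cup I$ and $U\cup J$ are spanning trees'' is the correct non-vanishing condition. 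The proof is correct.
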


The first goal of this paper is to provide combinatorial interpretations of the $\Psi^{I,J}_{G,K}$.  The objects we will use to this end are spanning forests, that is, subgraphs of $G$ which contain all vertices of $G$ and are  disjoint unions of trees.

\begin{definition}
  Let $P=P_1\cup \ldots \cup P_k$ be a set partition of a subset of the vertices of $G$.  Define
  \[
    \Phi_G^P = \sum_F\prod_{e \not\in F} \alpha_e
  \]
  where the sum runs over spanning forests $F=T_1\cup \ldots \cup T_k$ where each tree $T_i$ of $F$ 
  contains the  vertices in $P_i$ and no other vertices of $P$, i.e., $V(T_i) \supseteq P_i$ and $V(T_i)\cap P_j =\emptyset $ for  $j\neq i$.  Trees consisting of a single vertex are permitted.
 % is associated to a part of $P$ and contains every vertex of that part.  
 Call $\Phi_G^P$ a \emph{spanning forest polynomial} of $G$. 
\end{definition}

Graphically we will represent $\Phi_G^P$ by associating a colour to each part of $P$ and drawing $G$ with the vertices in $P$ coloured accordingly.
%Note that a tree can consist of a single vertex in the definition above.
For example, let $G$ be the wheel with three spokes, labelled as illustrated.
\[
\includegraphics{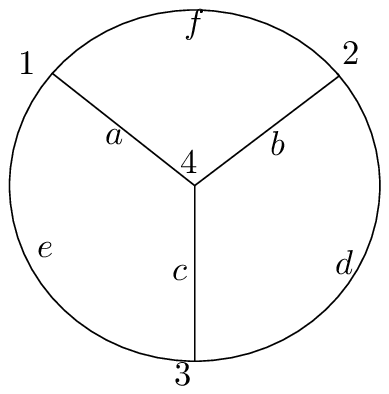}
\]
Let $P = \{1\},\{2,4\}$.  Then
\[
  \Phi_G^P = af(be + cd + ce + de) 
\]
Graphically we represent this
\[
\includegraphics{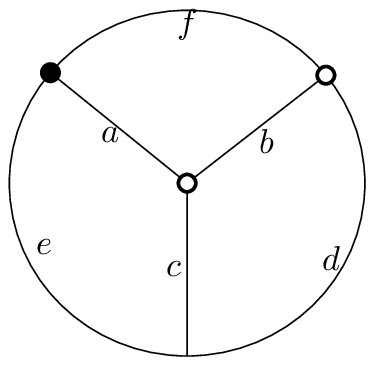}
\]

Spanning forest polynomials are well behaved under contraction and deletion.  The following propositions can be most easily understood simply by drawing each $\Phi^P_G$.

\begin{prop}\label{propdiffparts}
  Let $e$ be an edge variable of $G$ and let $P$ be a set partition of some vertices of $G$.  Then
  \[
  \Phi^P_G = \begin{cases} e \Phi^P_{G \backslash e} & \text{if the ends of $e$ are in different parts of $P$} \\ e \Phi^P_{G \backslash e} + \Phi^{P / e}_{G / e} & \text{otherwise}\end{cases}
  \]
  Where $P/e$ is the set partition made from $P$ by identifying the two ends of $e$ should they appear in $P$.
\end{prop}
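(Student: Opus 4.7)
The plan is to split the defining sum for $\Phi^P_G$ according to whether the edge $e$ belongs to the spanning forest $F$ or not, and to check that each piece matches the claimed contribution.

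First I would partition the sum
\[
\Phi^P_G = \sum_{F} \prod_{e' \notin F} \alpha_{e'}
\]
into two subsums $\Phi^P_G = S_{\text{out}} + S_{\text{in}}$, where $S_{\text{out}}$ runs over the spanning forests $F$ satisfying the $P$-condition with $e \notin F$, and $S_{\text{in}}$ runs over those with $e \in F$. For $S_{\text{out}}$, each such $F$ is literally a spanning forest of $G \setminus e$ whose trees realise the partition $P$ on $V(G) = V(G\setminus e)$, and the factor $\alpha_e$ is present in every monomial. So $S_{\text{out}} = \alpha_e\, \Phi^P_{G\setminus e}$, matching the first term in both branches of the formula.

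Next I would analyse $S_{\text{in}}$. If $e \in F$, its two endpoints $u,v$ lie in a single tree $T_i$ of $F$. In the case where $u$ and $v$ belong to different parts $P_j \neq P_k$ of $P$, the condition $V(T_i) \cap P_\ell = \emptyset$ for $\ell \neq i$ is violated no matter which $i$ we choose, so no such $F$ exists and $S_{\text{in}} = 0$. This handles the first branch. Otherwise (both endpoints in the same part, or at least one endpoint not in any part of $P$), the set partition $P/e$ is well-defined on the vertex set $V(G/e)$ obtained by identifying $u$ and $v$. I would then establish a bijection between spanning forests $F$ of $G$ containing $e$ and satisfying the $P$-condition, and spanning forests $F'$ of $G/e$ satisfying the $(P/e)$-condition, by the map $F \mapsto F/e$. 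Its inverse lifts $F'$ by reinstating $e$; tree-structure and the colouring of parts are preserved under the identification, so the bijection is clear. Under this bijection the monomials match exactly, since the edges not in $F$ correspond one-to-one to the edges not in $F/e$ (with $e$ itself absent from both), hence $S_{\text{in}} = \Phi^{P/e}_{G/e}$.

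Putting $S_{\text{out}}$ and $S_{\text{in}}$ together yields the two branches of the claim. The only delicate point is to check that the bijection in the second case really does respect the partition condition in every sub-case of "otherwise": when both endpoints lie in the same part $P_j$, the part simply loses one element but remains non-empty and still belongs to the unique tree of $F/e$ containing it; when exactly one endpoint is in $P$, the part is unchanged as a set; and when neither endpoint is in $P$, the partition $P/e$ equals $P$. This compatibility is the only place where the edge cases need to be written out, and is the main (mild) obstacle in the argument.
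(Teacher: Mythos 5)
Your proof is correct and follows essentially the same route as the paper's: split the defining sum according to whether $e$ lies in the spanning forest, observe that the "different parts" condition forces $e$ out of every admissible forest, and handle the remaining case by the standard deletion--contraction bijection. The paper's own proof is just a terser version of this (it defers the second case to "the graph polynomial case"), so your write-up adds detail but no new idea.
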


\begin{proof}
  If the ends of $e$ are in different parts of $P$ then the edge $e$ must not appear in the spanning forest polynomial.  Thus factoring out $e$ is equivalent to cutting $e$.  This leaves a spanning forest of $G\backslash e$ compatible with $P$.
  The second case follows as in the graph polynomial case.
\end{proof}

\begin{prop}
  Let $v$ and $w$ be the two ends of $e$.  Assume that $v$ and $w$ are not in different parts of $P$.  Then 
  \[
  \Phi^{P / e}_{G / e} = 
  \begin{cases}
    \displaystyle\sum_{p \text{ part of } P} \sum_{\substack{p_1 \cup p_2 = p \\ p_1 \cap p_2 = \emptyset}}\Phi^{P\backslash p , p_1 \cup \{v\}, p_2 \cup \{w\}}_{G\backslash e} 
    & \text{$v,w \not\in P$} \\
    \displaystyle\sum_{\substack{p_1 \cup p_2 = p \cap (V(G) \backslash \{v,w\}) \\ p_1 \cap p_2 = \emptyset}}\Phi^{P \backslash p, p_1 \cup \{v\}, p_2 \cup \{w\}}_{G\backslash e} 
    & \text{$v$ in part $p$ of $P$}
  \end{cases}
  \]
  where $P \backslash p$ means the partition consisting of the parts of $P$ other than $p$, and $P, q$ means the partition with the parts of $P$ and with the part $q$.
\end{prop}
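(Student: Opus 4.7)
The plan is a direct combinatorial bijection between spanning forests of $G/e$ compatible with $P/e$ and spanning forests of $G\backslash e$ compatible with one of the refined partitions on the right-hand side. Both $G/e$ and $G\backslash e$ share the edge set $E(G)\backslash\{e\}$, so any edge subset $S$ can be interpreted in either graph. First I would observe that if $S$ is a spanning forest of $G/e$ with $T_*$ denoting the tree containing the contracted vertex $v_*$, then the same $S$ viewed in $G\backslash e$ is still acyclic, and the component containing $v_*$ has gained one vertex (namely $v$ and $w$ in place of $v_*$) without gaining an edge; so $T_*$ splits into exactly two trees $T^v$ and $T^w$ with $v\in V(T^v)$ and $w\in V(T^w)$, while all other trees are unchanged. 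Because $\prod_{e'\notin S}\alpha_{e'}$ depends only on $S\subseteq E(G)\backslash\{e\}$, the monomial factors on the two sides of the claimed identity agree automatically.

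Next I would match the partition constraints in the two cases. When $v,w\notin P$ we have $P/e=P$ and $v_*$ is undistinguished, so $T_*$ can be any of the trees of the forest; writing $p$ for the part of $P$ whose tree is $T_*$ and setting $p_1=V(T^v)\cap p$, $p_2=V(T^w)\cap p$, the resulting forest of $G\backslash e$ is compatible with $P\backslash p, p_1\cup\{v\}, p_2\cup\{w\}$, and summing over $p$ and over disjoint bipartitions $p=p_1\cup p_2$ gives the first formula. When at least one of $v,w$ lies in some part $p$, the vertex $v_*$ must belong to the image of $p$ in $P/e$, so $T_*$ is forced to be the tree covering that image with no freedom in the choice of $p$; the vertices of $p\cap(V(G)\backslash\{v,w\})$ are then bipartitioned according to whether they lie in $T^v$ or $T^w$, while $v$ joins $T^v$'s part and $w$ joins $T^w$'s part, producing the second formula.

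The inverse map sends a spanning forest $F'$ of $G\backslash e$ compatible with the refined partition back to the forest of $G/e$ obtained by merging $T^v$ and $T^w$ at $v_*$; since these trees are vertex-disjoint in $G\backslash e$, their union after identification has the correct Euler-characteristic count and is acyclic because any cycle would have to traverse $v_*$ twice. The main bookkeeping, and really the only substantive obstacle, is verifying that the hypothesis that $v$ and $w$ are not in different parts of $P$ makes the two cases mutually exclusive and exhaustive, and that the intersection $p\cap(V(G)\backslash\{v,w\})$ in the second formula uniformly handles the subcases where exactly one of $v,w$ lies in $p$ versus both --- which is precisely why this intersection is written that way.
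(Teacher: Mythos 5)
Your proof is correct and takes essentially the same approach as the paper's: both rest on the observation that a spanning forest of $G/e$ compatible with $P/e$ is, when viewed in $G\backslash e$, a spanning forest with exactly one extra tree, obtained by splitting the tree of the contracted vertex into the trees containing $v$ and $w$, with the monomial $\prod_{e'\notin F}\alpha_{e'}$ unchanged. The paper merely routes this through the intermediate decomposition $\Phi^{P/e}_{G/e}=\sum_{p}\Phi^{P\backslash p,\,p\cup\{u\}}_{G/e}$ before splitting, which your direct bijection subsumes.
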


Note that $p_1\cup p_2$ and $p_2\cup p_1$ are counted as two different terms of the sums provided $p_1 \neq p_2$.

\begin{proof}
  Let $u$ be the vertex $v\sim w$ in $G/e$.

  Suppose $v$ and $w$ are not in $P$.  Then $u$ is not in $P/e$, so $P/e = P$ and $u$ may belong to any tree of a spanning forest of $G/e$ giving
  \[
    \Phi^{P/e}_{G/e}= \sum_{p \text{ part of } P} \Phi^{P\backslash p, p\cup \{u\}}_{G/e} .
  \]
  Any spanning forest corresponding to the partition
  \[
  P\backslash p, p\cup \{u\}
  \] in $G/e$, is also a spanning forest with one more tree in $G$.  This extra tree splits $p \cup \{u\}$ into $p_1 \cup \{v\}$ and $p_2 \cup \{w\}$.  Specifically,
  \[
    \Phi^{P\backslash p, p\cup \{u\}}_{G/e} =  \sum_{\substack{p_1 \cup p_2 = p \\ p_1 \cap p_2 = \emptyset}}\Phi^{P\backslash p , p_1 \cup \{v\}, p_2 \cup \{w\}}_{G\backslash e} ,
  \]
  giving the formula for $v,w\not\in P$.

  Now suppose $v$ is in part $p$ of $P$, and so $w$ is either not in $P$ or is also in $p$.   Then 
\[
  P/e = P\backslash p, p'\cup \{u\}
\] where $p' = p \cap (V(G) \backslash \{v, w\})$.   Any spanning forest in $G/e$ corresponding to this set partition is again also a spanning forest with one more tree in $G$.  This extra tree splits $p' \cup \{u\}$ into $p_1 \cup \{v\}$ and $p_2 \cup \{w\}$ where $p_1 \cup p_2 = p'$, leading, as above, to the desired sum decomposition.
\end{proof}

\section{Signs in Dodgson polynomials}

%Spanning forest polynomials are useful for understanding the Dodgson polynomials. 
 The $\Psi_{G,K}^{I,J}$ can be expanded in terms of spanning forest polynomials.  This expansion provides a simple combinatorial explanation for the signs of the monomials of the $\Psi_{G,K}^{I,J}$.

In view of Corollary \ref{IJdisj} it suffices to consider $\Psi_{G,K}^{I,J}$ with $K = \emptyset$ and $I\cap J = \emptyset$.  Thus we will suppress $K = \emptyset$ from the notation.

\begin{prop}\label{exists}
  Let $I$ and $J$ be sets of edges of $G$ with $|I| = |J|$ and $I \cap J = \varnothing$.  Then we can write
  \[
    \Psi_G^{I,J} = \sum_{k} f_k \Phi_G^{P_k}
  \]
  where the sum runs over partitions of $V(I\cup J)$ and $f_k \in \{-1, 0, 1\}$.
\end{prop}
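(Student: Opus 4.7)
The plan is to combine Corollary \ref{IJdisj} with Proposition \ref{E} and then to group terms according to a partition induced by connected components. First, Corollary \ref{IJdisj} reduces us to the case $I\cap J=K=\emptyset$. Proposition \ref{E} then gives
\[
  \Psi_G^{I,J} \;=\; \sum_U \epsilon_I(U)\,\epsilon_J(U)\prod_{u\notin U}\alpha_u,
\]
where $U\subset E(G)\setminus(I\cup J)$ runs over subgraphs such that $U\cup I$ and $U\cup J$ are both spanning trees of $G$, and $\epsilon_X(U):=\det \mathcal{E}_G(U\cup X)\in\{\pm 1\}$. Each admissible $U$ is a spanning forest of $G$ with exactly $|I|+1$ components, and its restriction to $V(I\cup J)$ yields a partition $P_U$ of $V(I\cup J)$ into $|I|+1$ parts. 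Grouping the sum by $P_U$ produces an outer sum over partitions $P$ of $V(I\cup J)$, with each inner sum running over admissible $U$ having $P_U=P$.

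The key step, and the main obstacle, is to show that the product $\epsilon_I(U)\epsilon_J(U)$ depends only on $P_U$. My plan is to argue that any two admissible $U, U'$ inducing the same partition $P$ are connected by a sequence of elementary edge exchanges $U\mapsto U-e+f$, where $e\in U$, $f\notin U\cup I\cup J$, and $f$ joins the same pair of components of $U-e$ (equivalently, the same pair of parts of $P$) that $e$ did. Each such exchange alters a single common row of the incidence matrices $\mathcal{E}_G(U\cup I)$ and $\mathcal{E}_G(U\cup J)$, and I expect to verify that the resulting sign changes in $\epsilon_I$ and $\epsilon_J$ are governed by the same cofactor data coming from the shared forest $U$, hence cancel in the product. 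This is essentially a cofactor/matrix-tree computation, and pinning down the sign bookkeeping is the most delicate part of the proof.

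Given sign invariance, define $f_P\in\{-1,0,1\}$ to be this common value (or $0$ when no admissible $U$ induces $P$). For $f_P\neq 0$, the partition $P$ necessarily places the endpoints of each edge of $I\cup J$ in distinct parts of $P$, so any spanning forest of $G$ compatible with $P$ automatically avoids $I\cup J$, and conversely every such forest provides an admissible $U$. Hence the inner sum coincides with the spanning forest polynomial $\Phi_G^P$ (with the variable convention implicit in Proposition \ref{E}), and
\[
  \Psi_G^{I,J} \;=\; \sum_P f_P\,\Phi_G^P,
\]
the sum ranging over partitions $P$ of $V(I\cup J)$, as required.
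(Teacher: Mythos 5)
Your overall strategy is the same as the paper's: reduce to $I\cap J=K=\emptyset$, observe that the relevant subgraphs $U$ (equivalently the complements $N$ of the monomials) are spanning forests with $|I|+1$ components, and group terms by the induced partition of $V(I\cup J)$. But the heart of the proposition is precisely the claim you defer --- that $\det(\mathcal{E}_G(U\cup I))\det(\mathcal{E}_G(U\cup J))$ depends only on the partition $P_U$ --- and your proposal does not prove it. You outline an exchange-move argument ("I expect to verify\ldots", "pinning down the sign bookkeeping is the most delicate part"), which requires two further facts you do not establish: (a) that any two admissible forests inducing the same partition are connected by elementary exchanges $U\mapsto U-e+f$ whose intermediate stages remain admissible, and (b) that each such exchange multiplies $\epsilon_I$ and $\epsilon_J$ by the same sign. (Both are in fact true --- (a) is the matroid exchange property applied within each component, and (b) follows because the fundamental cycle of $f$ in $U\cup I$ lies entirely inside $U$, so the expansion of row $f$ over the rows of $U\cup I$ and of $U\cup J$ uses the same coefficients --- but neither is carried out, and your parenthetical that $e$ and $f$ join "the same pair of parts of $P$" is off, since both edges live inside a single tree of $U$, hence inside a single part.) As written this is a genuine gap.

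The paper avoids the exchange argument entirely with a slicker device: the coefficient of the monomial $m=\prod_{e\in M}\alpha_e$ is obtained by setting the variables of $N$ to zero and extracting the coefficient of the variables in $M$, which by Proposition \ref{contract} is the Dodgson polynomial $\Psi^{I,J}_{G\backslash M/N}$ of a \emph{minor}. Applying Proposition \ref{E} to that minor (whose only edges are $I\cup J$) gives $\det(\mathcal{E}_{G\backslash M/N}(I))\det(\mathcal{E}_{G\backslash M/N}(J))\in\{0,\pm1\}$, and the minor $G\backslash M/N$ manifestly depends only on which endpoints of $I\cup J$ are identified by contracting $N$, i.e.\ only on the partition. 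Sign invariance is then immediate, with no connectivity or cofactor bookkeeping needed. If you want to salvage your route, either complete steps (a) and (b) above, or replace them by the paper's observation that the coefficient is itself a Dodgson polynomial of a partition-dependent minor.
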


\begin{proof} 
  Take a particular monomial $m$ in $\Psi_G^{I,J}$.   Let  $M$ denote the set of edges in $m$, and $N$  the complementary set of edges in $G\backslash (I\cup J)$.  
  We know that $N\cup I $ and $N\cup J$ are  spanning trees in $G$, and so $N$ is a forest.
  The coefficient in front of $m$  is obtained by setting all Schwinger parameters in $N$ to  zero, and taking the coefficient of all remaining parameters $M$. In other words,   the  coefficient of $m$ is exactly
  $$\Psi^{I,J}_{G\backslash M/N} $$
  which by Proposition \ref{E} is given by 
    $$  \det(\mathcal{E}_{G\backslash M/N}( I)) \det(\mathcal{E}_{G\backslash M/N}(J)) \in \{0,\pm1\}\ .$$

    This vanishes if $I$ and $J$ are not spanning trees in $G\backslash M/N$.
    % This expression is unaffected by deleting the edges of $G$ which do not appear in $m$ since these edges do not contribute to the $E$ matrices at all.  The sign of this expression %is also unaffected by contracting the edges of $m$ by Proposition \ref{contract} ***or maybe we need a specialized lemma for this***.  
The only information of $m$ which remains in $G\backslash M/N$ is which  end points of edges of $I$ and $J$ belong to the same tree of $N$.
  Thus every monomial which gives the same set partition of $V(I \cup J)$ has the same sign, and every monomial corresponds to some such set partition. 
\end{proof}

\begin{example}\label{middle}
One simple example is the case where $I$ and $J$ each consist of a single edge with a common vertex $v$.  Let $u$ and $w$ be the other two end points.  The only set partition with a nonzero coefficient in this case is $\{v\}\{u,w\}$.  Graphically, if
\[
  G = \raisebox{-1cm}{\includegraphics{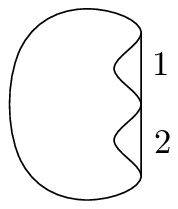}} \qquad \text{ then } \qquad \Psi_G^{1,2} = \pm\raisebox{-1cm}{\includegraphics{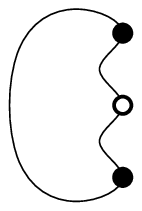}}
\]

\end{example}

\begin{definition} Let $I, J$ be two subsets of edges of $G$ with $|I|=|J|$ and let $P$ be a partition of $V(I\cup J)$. Let $I_P$ (resp. $J_P$,  $\big(I\cup J\big)_P $) denote the graph obtained from the subgraph $I$ (resp. $J$, $I\cup J$) by identifying vertices which lie in the same partition. If the edges of $G$ are oriented, or are ordered, then the graphs $I_P$, $J_P$, $\big(I\cup J\big)_P$ inherit these extra structures also.  If the vertices of $G$ are ordered then the graphs $I_P$, $J_P$, $\big(I\cup J\big)_P$ inherit this structure by using the first vertex in each part to give the order.
\end{definition}

The proof of the previous proposition shows that $f_k = \Psi^{I,J}_{(I\cup J)_P}$. The coefficient is non-zero precisely when both
$I_P$ and $J_P$ are trees, and  have exactly one vertex of every colour.

\begin{definition}\label{signdef} Let $T$ be a rooted tree, with edges labelled from $\{1,\ldots, n\}$ and  non-root vertices labelled $\{1,\ldots, n\}$.  Choose an orientation on its edges.
We define a number by constructing a bijection
$$\phi: E(T) \rightarrow V(T)$$
and a map $s:E(T) \rightarrow \{\pm 1\}$ as follows.  To  each $e\in E(T)$, $\phi(e)$ associates the vertex meeting $e$ which is furthest from the root, and  $s(e)$
is $+1$ if $\phi(e)$ is the endpoint of the oriented edge $e$, and $-1$ if it is the initial point. Define the sign of the (oriented, numbered) tree $T$ to be:
$$\varepsilon(T) = \mathrm{sgn}(\phi) \prod_{e\in E(T)} s(e)\ .$$
\end{definition}

\begin{prop} Choose an ordering on the edges and vertices of $G$ and an orientation of its edges. Let $I,J$ be subsets of edges of $G$ and $P$ a partition on $V(I\cup J)$. Then
%Let $v \in V(I\cup J)$. Then
$$\Psi^{I,J}_{(I\cup J)_P} = \varepsilon(I_P) \varepsilon(J_P).$$
\end{prop}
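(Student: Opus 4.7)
The plan is to reduce the identity to a classical determinantal fact about trees.  First I apply Proposition \ref{E} to the graph $(I\cup J)_P$ directly.  Because $E((I\cup J)_P)=I\cup J$, the only admissible index set in the sum is $U=\emptyset$, so there is a single non-vanishing term:
\[
\Psi^{I,J}_{(I\cup J)_P} \;=\; \det \mathcal{E}_{(I\cup J)_P}(I)\,\cdot\,\det \mathcal{E}_{(I\cup J)_P}(J),
\]
and this is non-zero exactly when $I$ (resp. $J$) is a spanning tree of $(I\cup J)_P$, i.e.\ when $I_P$ and $J_P$ are trees on $V((I\cup J)_P)$.  Removing the rows indexed by $I$ from $\mathcal{E}_{(I\cup J)_P}$ leaves precisely the rows indexed by $J$, and these rows form (after also deleting the fixed vertex column required to make the matrix square) the reduced incidence matrix of the tree $J_P$.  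The edge orientations and the orderings on edges and vertices of $J_P$ are those inherited from $G$, so no sign correction is introduced by this identification; and symmetrically for the other factor.  It therefore suffices to prove that for any oriented, edge-labelled, non-root-vertex-labelled tree $T$, rooted at the vertex whose column was deleted,
\[
\det \mathcal{E}_T \;=\; \varepsilon(T),
\]
where $\mathcal{E}_T$ is the square incidence matrix with the root-vertex column removed.

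To prove this tree-incidence statement I expand the determinant as a signed sum over bijections $\sigma:E(T)\to V(T)\setminus\{\text{root}\}$.  A summand is non-zero only when $\sigma(e)$ is an endpoint of $e$ for every edge $e$.  I claim the map $\phi$ of Definition \ref{signdef} is the only such bijection.  The argument is by induction on $|E(T)|$: any leaf $v$ of $T$ (other than the root) has a unique incident edge $e_v$, which must be matched with $v$ by any admissible $\sigma$; stripping off all such leaves together with their edges yields a strictly smaller rooted tree, to which the induction hypothesis applies.  Thus the determinant collapses to the single surviving term
\[
\det \mathcal{E}_T \;=\; \mathrm{sgn}(\phi)\prod_{e\in E(T)} \mathcal{E}_{e,\phi(e)} \;=\; \mathrm{sgn}(\phi)\prod_{e\in E(T)} s(e) \;=\; \varepsilon(T),
\]
since $\mathcal{E}_{e,\phi(e)}=+1$ or $-1$ precisely according to whether $e$ is oriented toward or away from its far-from-root endpoint, which is the definition of $s(e)$.

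Combining the two steps gives
\[
\Psi^{I,J}_{(I\cup J)_P} \;=\; \varepsilon(J_P)\,\varepsilon(I_P),
\]
as required.  The main obstacle is bookkeeping: one must verify that the three sign conventions (the choice of removed vertex column, which plays the role of the root; the induced edge/vertex orderings on $I_P$ and $J_P$; and the edge orientations) are genuinely compatible between the two determinants on the one hand and the two quantities $\varepsilon(I_P)$, $\varepsilon(J_P)$ on the other.  Once one fixes the same vertex of $(I\cup J)_P$ as the common root of both $I_P$ and $J_P$ and agrees that the orderings on $I_P$ and $J_P$ are restrictions of the ordering on $G$, this matching is automatic and the identity holds on the nose rather than up to a global sign.
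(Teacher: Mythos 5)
Your proof follows the paper's own route: both arguments reduce, via Proposition \ref{E} applied to the contracted graph $(I\cup J)_P$ (equivalently, via the proof of Proposition \ref{exists}), to the product $\det\bigl(\mathcal{E}_{(I\cup J)_P}(I)\bigr)\det\bigl(\mathcal{E}_{(I\cup J)_P}(J)\bigr)$, and then identify each reduced tree-incidence determinant with the sign $\varepsilon$ of the corresponding rooted tree; the only difference is that you spell out, by the permutation expansion and leaf-stripping induction, the step the paper dismisses as ``an inspection of the matrices.'' One small caveat: your identification of $\mathcal{E}_{e,\phi(e)}$ with $s(e)$ is opposite to the paper's stated incidence convention ($+1$ at the initial vertex, $-1$ at the terminal one), which would introduce a factor $(-1)^{|I|}$ in each determinant separately, but since $|I|=|J|$ these cancel in the product and the stated identity is unaffected.
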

\begin{proof} From the proof of proposition \ref{exists}, we have
    $$ \Psi^{I,J}_{(I\cup J)_P} = \det(\mathcal{E}_{(I\cup J)_P}( I_P)) \det(\mathcal{E}_{(I\cup J)_P}(J_P)).$$
This vanishes if $I_P$ and $J_P$ are not both trees. In the case where they are, an inspection of the matrices $\mathcal{E}_{(I\cup J)_P}( I_P)$ shows that its determinant is exactly
$\varepsilon(I_P)$.  The choice of vertex $v$ to remove in $\Psi^{I,J}_{(I\cup J)_P}$ becomes the choice of roots for $I_P$ and $J_P$ by taking as roots the vertices of $I_P$ and $J_P$ which correspond to the part of $v$.%***NB needs independence of choice of removed vertex??***
\end{proof}

As a corollary we can easily understand how the sign changes under simple transformations of $P$.

  Let $I$ and $J$ be sets of edges of $G$ with $|I| = |J|$ and $I \cap J = \emptyset$.  Let
  \[
    \Psi_G^{I,J} = \sum_{k} f_k \Phi_G^{P_k}
  \]
  be as in Proposition \ref{exists}.  Suppose $P_i$ and $P_j$ are set partitions appearing in the sum which agree on $V(I)$.  %Note that $P_i$ and $P_j$ are then ordered set partitions of $V(J)$, relative to each other, where the order is determined by their common agreement on $V(I)$. 

Without loss of generality, order the vertices so that all of $V(I)$ comes before $V(J)$.  Then the matrices on the $I$ side are identical for $P_i$ and $P_j$ and we have an order of the parts of $P_i$ and $P_j$ on $J$ (coming from the vertex order on $I$) which gives the vertex order on $J_{P_i}$ and $J_{P_j}$.
%
%\begin{lemma}\label{parts}
%  Suppose $P_i$ and $P_j$ differ on $V_J$ by swapping two parts. Then
%  \[
%    f_i = - f_j
%  \]
%\end{lemma}
%
%\begin{proof}
%  As in Proposition \ref{exists}, $f_i$ is given by $\det \mathcal{E}_{G_i}(I)\det \mathcal{E}_{G_i}(J)$ where $G_i$ is the graph obtained from $G$ by removing all edges not in $I$ or in $J$ and identifying vertices in the same part of $P_i$, which is also, by Proposition \ref{exists}, $G_i = G\backslash M_i/N_i$ where $M_i$ is the set of edges of some monomial appearing in $P_i$ and $N_i$ is the remaining edges in $G\backslash (I\cup J)$.  Similarly write $f_j = \det E_{G_j}(I)\det E_{G_j}(J)$.  
%  \[
%  \det E_{G_i}(I) = \det E_{G_j}(I)
%  \]
%  since $P_i$ and $P_j$ agree on the end points of $I$.
%
%  In $E_{G_i}(J)$ and $E_{G_j}(J)$ each part has a and so swapping two parts is exactly swapping two columns.  Thus
%  \[
%  \det E_{G_i}(J) = -\det E_{G_j}(J)
%  \]
%  The result follows.
%\end{proof}

\begin{cor}\label{trans}
  Suppose $P_i$ and $P_j$ differ on $V_J$ by a transposition of two vertices in the same tree of $J$. Then
  \[
    f_i = - f_j
  \]
\end{cor}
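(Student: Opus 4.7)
The plan is to start from the factorization $f_k = \varepsilon(I_{P_k})\,\varepsilon(J_{P_k})$ established in the preceding proposition and to analyze each factor under the transposition independently. If either of $f_i,f_j$ vanishes then both do (the hypothesis preserves the combinatorial type of the quotient up to relabeling), so assume both are nonzero, i.e., $I_{P_k}$ and $J_{P_k}$ are all trees.

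Because $P_i$ and $P_j$ agree on $V(I)$, the two transposed vertices $v,w$ necessarily lie in $V(J)\setminus V(I)$. Then $I_{P_k}$ depends only on the restriction of $P_k$ to $V(I)$, so $I_{P_i}$ and $I_{P_j}$ coincide as oriented, edge- and vertex-labeled trees. This gives $\varepsilon(I_{P_i}) = \varepsilon(I_{P_j})$ at once, and reduces the problem to showing $\varepsilon(J_{P_i}) = -\varepsilon(J_{P_j})$.

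For the $J$-side, let $p,q$ be the parts of $P_i$ containing $v$ and $w$ respectively, so in $P_j$ we have $v\in q$ and $w\in p$. The central claim is that $J_{P_j}$ is obtained from $J_{P_i}$ purely by swapping the two vertex labels $p$ and $q$, leaving the topological tree, its edge labels, and its edge orientations alone. Edgewise: an edge of $J$ not incident to $\{v,w\}$ has the same endpoints in $J_{P_i}$ and $J_{P_j}$; an edge incident to $v$ has endpoint $p$ in $J_{P_i}$ and endpoint $q$ in $J_{P_j}$, matching the swap, and similarly for $w$. The hypothesis that $v,w$ lie in the same tree of $J$ is what guarantees this relabeling is globally coherent, because a $J$-path from $v$ to $w$ carries the transposition $p\leftrightarrow q$ consistently across all intermediate vertices. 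Choosing the distinguished vertex used to define $\Psi^{I,J}_{(I\cup J)_P}$ to lie outside $\{v,w\}$ ensures that the roots of $J_{P_i}$ and $J_{P_j}$ are identified by the relabeling as well.

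Given this claim, the effect on $\varepsilon(J_{P_k})$ follows directly from Definition \ref{signdef}. The bijection $\phi:E(T)\to V(T)\setminus\{\text{root}\}$ is topologically unchanged, but two of its values ($p$ and $q$) are transposed, so $\mathrm{sgn}(\phi)$ picks up a sign of $-1$. The auxiliary signs $s(e)$ depend only on edge orientations and topological endpoints, and so are preserved. We conclude $\varepsilon(J_{P_j}) = -\varepsilon(J_{P_i})$, which combined with the $I$-side identity yields $f_i = -f_j$. The main technical subtlety I foresee is the edgewise verification of the claim in the previous paragraph: one must trace how the ``same tree of $J$'' hypothesis keeps the label swap consistent even when parts $p$ or $q$ contain vertices of $V(J)$ other than $v,w$.
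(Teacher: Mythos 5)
There is a genuine gap, and it sits exactly where you flag it. Your central claim --- that $J_{P_j}$ is obtained from $J_{P_i}$ by a pure relabelling $p\leftrightarrow q$ of vertices, with the underlying tree, edge labels and orientations untouched --- is false in general. The obstruction is not the intermediate vertices on the $J$-path from $v$ to $w$ (in the nonzero case those lie in parts other than $p$ and $q$, since two distinct vertices of the same tree of $J$ in the same part would create a cycle in $J_P$); it is vertices of \emph{other} trees of the forest $J$ that happen to lie in $p$ or $q$. Concretely, take $J$ to be two disjoint edges $e_1=(v,w)$ and $e_2=(b,c)$, with $P_i$ having parts $p\supseteq\{v,b\}$, $q\ni w$, $r\ni c$ (and $V(I)$ meeting each of $p,q,r$ in new vertices so that $I_{P}$ is a tree and the hypotheses hold). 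Then $e_2$ joins $p$ to $r$ in both $J_{P_i}$ and $J_{P_j}$, whereas under your swap it would have to join $q$ to $r$; in this example $J_{P_i}$ and $J_{P_j}$ are in fact \emph{identical} as labelled trees, $\phi$ is unchanged, and the sign flip comes entirely from the reversal of the orientation of $e_1$ --- the opposite of the mechanism you describe. Your two edgewise rules are mutually inconsistent: edges meeting $\{v,w\}$ are required to transform by the swap, while edges not meeting $\{v,w\}$ keep their endpoints, and an edge of the second kind can perfectly well have an endpoint in $p$ or $q$ through a vertex other than $v,w$. Your argument is sound only in the special case $p\cap V(J)=\{v\}$, $q\cap V(J)=\{w\}$ with $|J|\geq 2$ (so the root can be taken outside $\{p,q\}$).

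The paper's proof never compares $J_{P_i}$ and $J_{P_j}$ as global labelled trees. It assigns a root to each tree of the forest $J$ recursively from the chosen root of $J_P$, observes that $\varepsilon(J_P)$ can then be computed tree by tree, and splits into two cases: if the root of $J_P$ can be chosen so that neither $v$ nor $w$ becomes a root of its tree, then the tree-by-tree permutation $\phi$ is composed with a transposition (sign $-1$); if not, then $v$ and $w$ are the only vertices of their tree of $J$ and exchanging their parts is equivalent to reversing the orientation of the edge between them (again sign $-1$). This per-tree bookkeeping, and in particular the second case, is precisely what is needed to handle the configuration you deferred; the global-relabelling shortcut does not survive it.
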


\begin{proof}
The choice of a root $r$ in $J_P$ determines a choice of root in each tree of $J$ recursively.  To see this, first take each vertex which reduces to $r$ as a root in its tree.  Next, for each tree which under $P$ has a vertex $v$ which is identified with a vertex in an already rooted tree of $J$, take $v$ as the root.  Continue until all trees are rooted.  Since only the vertex of each edge which is furthest from $r$ in $J_P$ contributes to the sign, we can compute the sign tree by tree and multiply, using the order of the vertices given by the order of the parts on $J$.

Let $v$ and $w$ be the two transposed vertices in the statement above.  If neither are a root, then the permutation $\phi$ from Definition \ref{signdef}, taken tree by tree, is composed with a transposition, changing the sign.  If it is not possible to choose the root of $J_P$ so that this occurs, then $v$ and $w$ must be the only vertices in their tree of $J$.  In this case switching the parts of $v$ and $w$ has the same effect as switching the orientation of the edge between them, again changing the sign.
\end{proof}

%The situation for general transpositions is similar, but not needed to prove Theorem \ref{signthm}.
%
%
%\begin{prop}
%  Suppose $P_i$ and $P_j$  differ on $V_J$ by a transposition.  
%
%Let $u$ and $v$ be the two vertices which are transposed from $P_i$ to $P_j$.  Let $F$ be the spanning forest resulting from removing the edges of $G$ not in $J$, and identifing vertices which are in the same part of $P_i$ \emph{except} for $u$ and $v$ which remain distinct from their part-mates.
%
%Then
%  \begin{align*}
%      f_i = f_j & \text{ if $F$ contains three trees and $u$ and $v$ are in different trees} \\ 
%      f_i = - f_j & \text{ otherwise}
%  \end{align*}
%
%\end{prop}
%
%\begin{proof}
 % This can be proved directly as in the previous lemma.  When $u$ and $v$ are in different trees of $F$.  $u$ and $v$ are ultimately connected by a path in $G_i$ giving two possibilites.  Either there are only these two trees and there is one part in the partition with a vertex in each of these trees, or $F$ contains extactly one other tree $T$ which contains a part-mate of one vertex each from the trees of $u$ and $v$.  Consider the block structure of $\mathcal{E}_{G_i}(J)$ in each case.
%\end{proof}

\begin{cor}\label{treeswitch}
   Suppose $P_i$ and $P_j$ differ on $V_J$ by one vertex having switched parts.  Then
  \[
    f_i = f_j 
  \]
\end{cor}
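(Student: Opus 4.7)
By the proposition preceding Cor.~\ref{trans}, $f_k = \varepsilon(I_{P_k})\varepsilon(J_{P_k})$, and since $P_i$ and $P_j$ agree on $V(I)$ the oriented labelled graphs $I_{P_i}$ and $I_{P_j}$ are identical, so $\varepsilon(I_{P_i}) = \varepsilon(I_{P_j})$. The plan is therefore to establish $\varepsilon(J_{P_i}) = \varepsilon(J_{P_j})$ by viewing the difference of the associated incidence matrices as a rank-one update, and then using the tree hypothesis on $J_{P_j}$ as a connectivity constraint inside $J_{P_i}$.

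Let $v$ be the vertex that moves from $p_a$ (in $P_i$) to $p_b$ (in $P_j$), and set $A := \mathcal{E}_{(I\cup J)_{P_i}}(J_{P_i})$ and $B := \mathcal{E}_{(I\cup J)_{P_j}}(J_{P_j})$, so that $\det A = \varepsilon(J_{P_i})$ and $\det B = \varepsilon(J_{P_j})$. These matrices share all rows (edges of $J$) and all columns (parts of $P$ other than the root); they differ only in that, for each edge $e \in E_v$ (edges of $J$ incident to $v$), the entry $\mathcal{E}_{e,v}$ sits in column $p_a$ of $A$ but in column $p_b$ of $B$. With $X_v := \sum_{e \in E_v}\mathcal{E}_{e,v}\,\mathbf{e}_e$, and the convention $\mathbf{e}_p = 0$ when $p$ is the root, this reads as the rank-one perturbation $B = A + X_v(\mathbf{e}_{p_b} - \mathbf{e}_{p_a})^T$. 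By the matrix determinant lemma, together with invertibility of $A$ (since $J_{P_i}$ is a tree), the identity $\det A = \det B$ is equivalent to $y[p_a] = y[p_b]$, where $y := A^{-1}X_v$ is extended by $y[p_r] = 0$ at the root part.

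The key step is to interpret $y$ combinatorially. Reading the equations $Ay = X_v$ edge by edge: for $e \in E(J)\setminus E_v$ with endpoints $v_1,v_2$ the equation is $y[p(v_1)] = y[p(v_2)]$, where $p(w)$ denotes the part of $P_i$ containing $w$; and for $e \in E_v$ with other endpoint $v'$ it says $y[p_a] - y[p(v')] = 1$. Hence $y$ is constant on each component of the forest $J_{P_i}\setminus E_v$. Since all removed edges are incident to $p_a$ in $J_{P_i}$, the components are $C_a$ (containing $p_a$) together with, for each neighbor $v'$ of $v$ in $J$, a component $C_{v'}$ containing $p(v')$.

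The main obstacle, and the only essential use of the tree hypothesis on $J_{P_j}$, is to show that $p_b \in C_a$. Here one notes that $J_{P_j}$ is obtained from $J_{P_i}\setminus E_v$ by re-attaching each edge of $E_v$ with its $v$-endpoint shifted from $p_a$ to $p_b$, producing edges $p_b - p(v')$. Acyclicity of $J_{P_j}$ forbids any such re-attached edge from joining two vertices already in the same component of $J_{P_i}\setminus E_v$, so $p_b \notin C_{v'}$ for every neighbor $v'$ of $v$; since $C_a$ together with the $C_{v'}$'s exhaust the components, this forces $p_b \in C_a$. Therefore $y[p_b] = y[p_a]$, so $\det B = \det A$ and hence $f_i = f_j$.
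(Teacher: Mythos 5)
Your proof is correct, but it takes a genuinely different route from the paper's. The paper works entirely inside the sign formalism of Definition \ref{signdef}: it passes to the common two-tree forest $F$ obtained by performing all identifications of $P_i$ (equivalently $P_j$) on $J$ \emph{except} those involving the migrating vertex $v$, observes that $J_{P_i}$ and $J_{P_j}$ are each recovered from $F$ by one further identification of $v$ into the other tree, and then chooses roots so that $v$ becomes the root of its tree of $F$ --- whence $v$ lies outside the image of the bijection $\phi$ and cannot affect $\varepsilon(J_{P})$. You instead work directly with the incidence matrices: you observe that $\mathcal{E}_{(I\cup J)_{P_j}}(J_{P_j})$ is a rank-one update $A + X_v(\mathbf{e}_{p_b}-\mathbf{e}_{p_a})^T$ of $A = \mathcal{E}_{(I\cup J)_{P_i}}(J_{P_i})$, apply the matrix determinant lemma, and reduce the claim to $y[p_a]=y[p_b]$ for the ``potential'' $y=A^{-1}X_v$, which you establish by showing $y$ is constant on components of $J_{P_i}\setminus E_v$ and that acyclicity of $J_{P_j}$ forces $p_b$ into the component of $p_a$. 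Both arguments are sound (and both, like the paper, tacitly assume $I_{P_k}$, $J_{P_k}$ are trees for $k=i,j$, and rely on the ordering of parts via $V(I)$ fixed before Corollary \ref{trans} so that the two matrices have identified, identically ordered column sets and the same root). The paper's proof is shorter and reuses the root-propagation device from Corollary \ref{trans}, at the cost of some terseness about how the tree-by-tree sign computation interacts with the vertex ordering; yours is more mechanical and self-contained, makes explicit exactly where each tree hypothesis is used (invertibility of $A$ from $J_{P_i}$, acyclicity from $J_{P_j}$), and as a bonus computes the ratio $\det B/\det A = 1 + y[p_b]-y[p_a]$ even when the switch does \emph{not} preserve the tree property, which the paper's argument does not give.
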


\begin{proof}
Let $v$ be the vertex which changes part.  Let $F$ be the forest of two trees given by $J$ with the identifications of $P_i$ or $P_j$ except that $v$ is not identified with its part-mates.  As in the previous corollary, we can compute the sign at the level of $F$.  By taking the root of $J_{P_i}$ and $J_{P_j}$ to be in the tree of $F$ which does not contain $v$ we get $v$ to be the root of the other tree of $F$.  Thus, by the definition of $\phi$, $v$ does not contribute to the sign.
\end{proof}

%\begin{proof}[Proof of Theorem \ref{signthm}]
%  It suffices to consider the relative sign of two set partitions $P_i$ and $P_j$ which agree on $I$.  It further suffices to suppose that $P_i$ is the set partition where the first vertex of each tree of $J$ (in the ordering of $V_J$) is in the first part, $p_1$, and the remaining vertices are coloured as per their order.  
%
%Apply Lemma \ref{parts} repeatedly to $P_j$ until the parts are ordered.  This contributes the $\sigma(P_j \cap J)$ to the final sign.  Next apply Lemma \ref{trans} repeatedly to $V_J$ until each tree of $J$ is ordered.  This contributed the $\sigma(T)$ terms to the final sign.  Finally pick a tree which contains only one vertex $v$ which is in a part of size $\geq 2$, and which does not contain a vertex in $p_1$.  Move $v$ to $p_1$.  This does not change the sign by Lemma \ref{treeswitch}.  Continue recursively until $P_j=P_i$.  This follows the recursive definition of $t(P_j)$, and hence contributes the $t(P_j)$ term to the sign.  There are no further contributions since $P_i$ now equals $P_j$.  
%\end{proof}

\begin{example}
  Let 
  \[
    G = \raisebox{-1.5cm}{\includegraphics{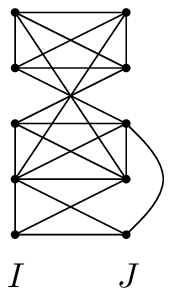}}
  \]
  with the three vertical edges on the left being $I$ and the three vertical edges on the left being $J$.  Consider the two set partitions
  \[
    \includegraphics{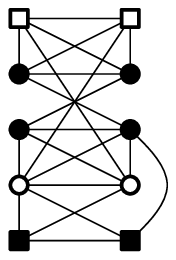} \qquad \includegraphics{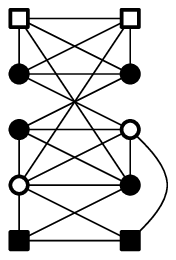}
  \]
  which agree on the ends of $I$ and differ by a transposition on the ends of $J$.  Orient the edges of $I$ and $J$ downwards.  The graphs $J_{P_1}$, $J_{P_2}$ are
  %corresponding to $E_{G_i}(J)$ and $E_{G_j}(J)$  are
    \[
    \includegraphics{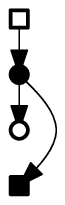} \qquad \includegraphics{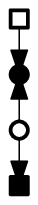}
    \]
    so with the filled square as the removed vertex
    \[
    \mathcal{E}_{(I \cup J)_{P_1}}(I_{P_1}) = \begin{bmatrix} 1 & -1 & 0 \\ 0 & 1 & -1 \\ 0 & 1 & 0 \end{bmatrix}
    \text{ and } 
    \mathcal{E}_{(I \cup J)_{P_2}}(I_{P_2}) = \begin{bmatrix} 1 & -1 & 0 \\ 0 & -1 & 1 \\ 0 & 0 & 1 \end{bmatrix} 
    \]  Calculate $\det(\mathcal{E}_{(I \cup J)_{P_1}}(I_{P_1})) = -1$ and $\det(\mathcal{E}_{(I \cup J)_{P_2}}(I_{P_2}))=1$ as expected from Proposition \ref{trans}.  %In the block decomposition of the proof of Proposition \ref{trans}, the upper left block is $\begin{bmatrix} -1 \end{bmatrix}$ and the lower right block varies by a transposition of the columns between $ \mathcal{E}_{G_i}(J)$ and $ \mathcal{E}_{G_j}(J)$.
\end{example}

\begin{example}
  Let 
  \[
    G = \raisebox{-1.5cm}{\includegraphics{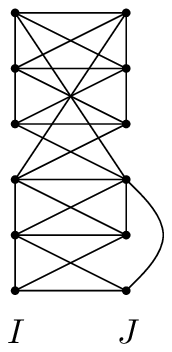}}
  \]
  with the four vertical edges on the left being $I$ and the four vertical edges on the left being $J$.  Consider the two set partitions
  \[
    \includegraphics{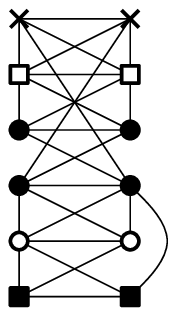} \qquad \includegraphics{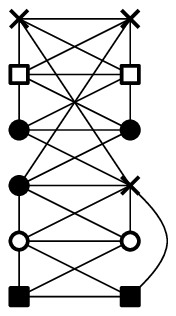}
  \]
  which agree on the ends of $I$ and differ by a single vertex having changed colour on the ends of $J$.  Orient the edges of $I$ and $J$ downwards.
    With the empty square as the removed vertex
    \[
  \mathcal{E}_{(I \cup J)_{P_1}}(I_{P_1}) = \begin{bmatrix}1 & 0 & 0 & 0 \\ 0 & -1 & 0 & 0 \\ 0 & 1 & -1 & 0 \\ 0 & 1 & 0 & -1 \end{bmatrix}
    \text{ and } 
   \mathcal{E}_{(I \cup J)_{P_2}}(I_{P_2}) = \begin{bmatrix}1 & 0 & 0 & 0 \\ 0 & -1 & 0 & 0 \\ 1 & 0 & -1 & 0 \\ 1 & 0 & 0 & -1 \end{bmatrix} 
    \]  Both determinants are $-1$ as expected from Proposition \ref{treeswitch}. % In the block decomposition of the proof of Proposition \ref{treeswitch}, the upper left block and lower left block are
%\[
%  \begin{bmatrix} -1 & 0 \\ 0 & 1 \end{bmatrix} \text{ and } \begin{bmatrix} 1 & 0 \\ 0 & 1 \end{bmatrix}
%\]
%respectively.  The lower left block varies between $ \mathcal{E}_{G_i}(J)$ and $ \mathcal{E}_{G_j}(J)$.
\end{example}

\section{Identities}\label{sectId}
Spanning forest polynomials give a nice way to look at graph polynomial identities.  To illustrate this we will first recast useful known identities and then prove a new identity which generalizes results of \cite{Brbig}.

We say a graph is \emph{two vertex reducible} if we can remove two vertices of the graph, and the adjoining edges, and the resulting graph is disconnected.

Let $G_1$ and $G_2$ be two graphs.  Let $e_1$ be an edge of $G_1$ and $e_2$ an edge of  $G_2$.  Then define a \emph{two-vertex join} of $G_1$ and $G_2$ to be the graph resulting from identifying $e_1$ and $e_2$ and then cutting the new edge.  Given $e_1$ and $e_2$ there are two ways to do this identification.  However, this ambiguity is of little interest to us because the period of the graph does not depend on it \cite{Sphi4}, nor does the graph polynomial.

\begin{prop}\label{2join} 
Let $G$  be  a two-vertex join as above with $v_1$ and $v_2$ the join vertices. Suppose that edges $i,j,k,l \in G$ are such that  $i,j\in G_1$ and $k,l\in G_2$. Then
\begin{align*}
\Psi_G & = \Phi_{G_1\backslash e_1}^{\{v_1\},\{v_2\}}\Phi_{G_2\backslash e_2} + \Phi_{G_1\backslash e_1}\Phi_{G_2\backslash e_2}^{\{v_1\},\{v_2\}}   \\
\Psi_G^{ij,kl} & =0 \\ 
\Psi_G^{ik,jl} & =\Psi_G^{il,jk}.
\end{align*}
\end{prop}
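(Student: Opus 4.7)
For the first identity I would argue by a bijection on spanning trees of $G$. Since $\{v_1,v_2\}$ is a two-vertex cut, the unique $v_1$--$v_2$ path in any spanning tree $T$ lies entirely in one of $G_1\backslash e_1$ or $G_2\backslash e_2$, and accordingly $T$ decomposes as a spanning tree of one side together with a spanning $2$-forest of the other side separating $v_1$ from $v_2$. Summing $\prod_{e\notin T}\alpha_e$ over the two cases reproduces the two terms of the claimed formula.

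For $\Psi_G^{ij,kl}=0$ I would apply Proposition \ref{E}: a nonzero contribution requires a $U\subseteq G\backslash(I\cup J)$ with $U\cup\{i,j\}$ and $U\cup\{k,l\}$ both spanning trees of $G$. Write $U=U_1\cup U_2$ with $U_a\subseteq E(G_a\backslash e_a)$, and run the path-case analysis of the previous identity on each condition. This forces $|U_1|\in\{v_{G_1}-3,v_{G_1}-4\}$ from the first requirement and $|U_1|\in\{v_{G_1}-1,v_{G_1}-2\}$ from the second; since these ranges are disjoint, no $U$ contributes and $\Psi_G^{ij,kl}$ vanishes.

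For $\Psi_G^{ik,jl}=\Psi_G^{il,jk}$ I would show that the Proposition \ref{E} expansions of the two sides agree term by term. The same case analysis forces a valid $U$ to fall into one of two symmetric profiles: either $U_1\cup\{i\}$ and $U_1\cup\{j\}$ are both spanning trees of $G_1\backslash e_1$ while $U_2\cup\{k\}$ and $U_2\cup\{l\}$ are both spanning $2$-forests of $G_2\backslash e_2$ separating $v_1,v_2$, or else the tree/$2$-forest roles are exchanged. Both profiles are invariant under swapping $k\leftrightarrow l$, so the admissible $U$ and their weights $\prod_{u\notin U}\alpha_u$ coincide on the two sides of the identity. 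To match the matrix-tree signs I would exploit the block structure of $\mathcal{E}_G$ induced by the cut---edges of $G_a\backslash e_a$ have nonzero incidences only with vertices of $V(G_a)$---and expand $\det\mathcal{E}_G(X)$ by Laplace along rows. This yields a factorization $\det\mathcal{E}_G(X)=f_1(X_1)\,f_2(X_2)$ with $f_a\in\{0,\pm 1\}$ depending only on the $G_a$-part $X_a$ of the admissible spanning tree $X=X_1\cup X_2$. Substituting into the two signed-determinant products gives a product of four factors, two in $U_1,\{i,j\}$ and two in $U_2,\{k,l\}$, visibly invariant under $k\leftrightarrow l$.

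The main obstacle is the sign bookkeeping in this factorization when one of $X_1,X_2$ is a spanning $2$-forest separating $v_1,v_2$ rather than a spanning tree. In that regime the block dimensions no longer align, the Laplace expansion has two candidate terms corresponding to shifting $v_1$ or $v_2$ across the cut, and one must verify that exactly one of them is nonzero with a sign determined by $X_1$ and $X_2$ alone, independently of which edge of $\{i,j,k,l\}$ has been inserted. Getting this local sign computation to depend separately on the two sides of the cut is where the real work of the proposition lies.
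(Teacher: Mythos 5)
Your first identity is proved exactly as in the paper (the two-case decomposition of a spanning tree across the cut $\{v_1,v_2\}$), and your second identity is correct, though by a different route: the paper expands $\Psi_G^{ij,kl}$ into spanning forest polynomials and observes that a nonzero term would force three disjoint trees to cross a two-vertex cut, whereas you get the same conclusion from an edge count on $U_1$ via Proposition \ref{E}. Your counting is right: the $G_1$-part of a spanning tree of $G$ is either a spanning tree ($v_{G_1}-1$ edges) or a $2$-forest separating $v_1$ from $v_2$ ($v_{G_1}-2$ edges), so the two requirements force $|U_1|$ into disjoint ranges. Both arguments are sound and of comparable length.

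The third identity is where there is a genuine gap, and also where you missed the intended shortcut. The paper deduces $\Psi_G^{ik,jl}=\Psi_G^{il,jk}$ in one line from the Pl\"ucker identity $\Psi_G^{ij,kl}-\Psi_G^{ik,jl}+\Psi_G^{il,jk}=0$ together with the vanishing you have already established; once you have $\Psi_G^{ij,kl}=0$, there is nothing left to do. Your direct term-by-term comparison correctly identifies that the admissible sets $U$ and their monomials coincide on both sides, but the sign argument is not complete, and the factorization you propose, $\det\mathcal{E}_G(X)=f_1(X_1)f_2(X_2)$ with each factor depending only on its own side, is false as stated. In the generalized Laplace expansion along the rows of, say, $X_1$, the nonvanishing term corresponds to assigning one of the two join vertices to the $G_1$-block and the other to the $G_2$-block, and \emph{which} assignment survives is dictated by the $2$-forest side (by which of its two components contains the deleted reference vertex); so the $G_1$-factor is really $f_1(X_1,u)$ with $u\in\{v_1,v_2\}$ determined by $X_2$. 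To rescue the argument one needs the additional lemma that for a fixed spanning tree $T$ of $G_1\backslash e_1$ the ratio $\det\mathcal{E}(T)_{\hat v_1}/\det\mathcal{E}(T)_{\hat v_2}$ is a sign independent of $T$ (which follows from the columns of the full incidence matrix summing to zero); only then does the product of four factors become visibly symmetric under $k\leftrightarrow l$. You flag this as ``the real work'' but do not supply it, so as written the third identity is unproved; I would simply replace this part by the Pl\"ucker identity.
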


\begin{proof}
  The first identity holds because every spanning tree of $G$ either connects $v_1$ and $v_2$ on the $G_1$ side or on the $G_2$ side.  In either case the remaining side has a forest of two trees, one connected to $v_1$ the other to $v_2$.  Pairings of such a forest with a spanning tree on the other side give all spanning trees of $G$.

Consider $\Psi^{ij,kl}$.  
Any monomial appearing in this polynomial is a monomial in $\Psi_{G/\{i,j\}}$ and in $\Psi_{G/\{k,l\}}$.
Thus any spanning forest polynomial appearing in the expansion of $\Psi^{ij,kl}$ comes from a partition with exactly three parts and all three parts are represented among the end points of $i$ and $j$ as well as among the end points of $k$ and $l$.  This means there are three trees which connect $G_1$ to $G_2$ in $G$.  However $G_1$ and $G_2$ join at only two vertices.  This is a contradiction and so there are no such monomials.

Then $\Psi^{ik,jl} = \Psi^{il,jk}$ by the Pl\"ucker identity \cite{Brbig}
\[
  \Psi_G^{ij,kl} - \Psi_G^{ik,jl} + \Psi_G^{il,jk} = 0.
\]
\end{proof}

%In fact, much more is true, but will not be needed for the results of this paper.
%
%\begin{prop} Let $I,J, K$ be sets of edges of $G$ such that $|I|=|J|$. Write $I=I_1\cup I_2$,
%where $I_1=I\cap G_1 $, $I_2= I \cap G_2$ and similarly for $J,K$. Let $\kappa = |(|I_1|-|J_1|)|=|(|I_2|-|J_2|)|$.
%Then
%$$\Psi^{I,J}_{G,K} = \Psi^{I_1,J_1}_{G_1\backslash e_1,K_1}\Psi^{I_2,J_2}_{G_2/e_2,K_2}+\Psi^{I_1,J_1}_{G_1/e_1,K_1}\Psi^{I_2,J_2}_{G_2\backslash e_2,K_2} \hbox{ if } \kappa =0$$
%$$\Psi^{i,j}_{G,K} = \Psi^{i,e_1}_{G_1,K} \Psi^{j,e_2}_{G_2,K} \hbox{ if } \kappa=1$$
%$$\Psi^{I,J}_{G,K} = 0 \hbox{ if } \kappa\geq 2$$
%\end{prop}
%
%This can either be proved diagrammatically by considering what colours the join vertices can be, or algebraically via the matrix $M_G$ and its properties.

\begin{prop}\label{transfer}
  Let $u$, $v$, and $w$ be vertices of a graph $G$.  Then
  \begin{align*}
    & \Phi_G^{\{u,v,w\}} \Phi_G^{\{u\},\{v\},\{w\}}
    \\ & = \Phi_G^{\{u,v\},\{w\}}\Phi_G^{\{u,w\},\{v\}} + \Phi_G^{\{u,v\},\{w\}}\Phi_G^{\{u\},\{v,w\}} + \Phi_G^{\{u,w\},\{v\}}\Phi_G^{\{u\},\{v,w\}}
  \end{align*}
  Graphically,
  \[
  \includegraphics[width=0.95\linewidth]{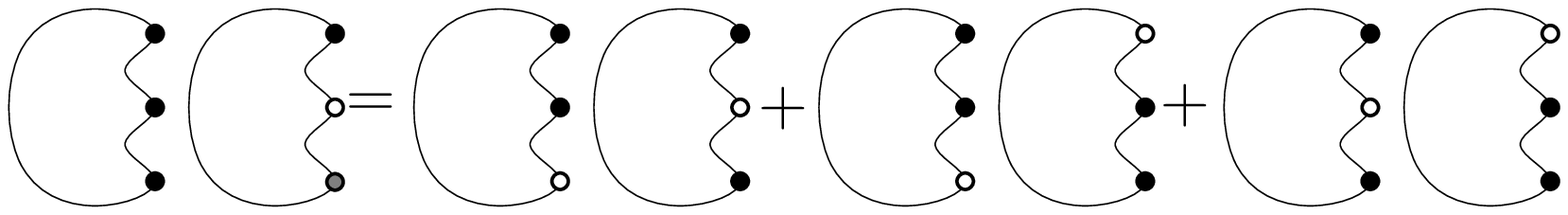}.
  \]
\end{prop}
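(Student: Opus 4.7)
The natural framework is the matrix--tree theorem applied to a one-vertex enlargement of $G$. Let $L$ be the weighted Laplacian of $G$ with edge weights $1/\alpha_e$, so that after multiplying by $\prod_e\alpha_e$ the principal minors of $L$ compute spanning forest polynomials $\Phi_G^P$ for discrete partitions $P=\{\{v_1\},\ldots,\{v_k\}\}$ of subsets of $V(G)$. Writing $a=\Phi_G^{\{u,v\},\{w\}}$, $b=\Phi_G^{\{u,w\},\{v\}}$, $c=\Phi_G^{\{u\},\{v,w\}}$, and $d=\Phi_G^{\{u\},\{v\},\{w\}}$, both $\Psi_G$ and $d$ appear directly as (normalized) principal minors of $L$, and one has the easy identities $\Phi_G^{\{u\},\{v\}}=b+c$, $\Phi_G^{\{u\},\{w\}}=a+c$, $\Phi_G^{\{v\},\{w\}}=a+b$, obtained by conditioning on which tree contains the unconstrained third vertex.

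The plan is to introduce formal parameters $t_u,t_v,t_w$ and form the augmented graph $G^+$ by adjoining a new vertex $z$ and edges $(u,z),(v,z),(w,z)$ of weights $t_u,t_v,t_w$ respectively. Deleting the $z$-row and $z$-column from the Laplacian of $G^+$ produces the matrix $L+T$, where $T=\operatorname{diag}(t_u,t_v,t_w,0,\ldots,0)$. Applying matrix--tree to $G^+$ and conditioning on which of the three new edges lie in the spanning tree yields
\[
\prod_e\alpha_e\cdot\det(L+T) = (t_u+t_v+t_w)\Psi_G + t_ut_v(b+c) + t_ut_w(a+c) + t_vt_w(a+b) + t_ut_vt_w\,d.
\]

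On the other hand, block-partitioning $L$ according to $\{u,v,w\}$ versus the remaining vertices and applying the Schur complement identity gives $\det(L+T) = \det(D)\cdot\det(S+\operatorname{diag}(t_u,t_v,t_w))$, where $S$ is the $3\times 3$ Schur complement of the bottom block $D$, and $\det(D)\cdot\prod_e\alpha_e = d$. Since $L\mathbf{1}=0$ one verifies that $S\mathbf{1}=0$ as well, so $\det S=0$. Expanding the $3\times 3$ determinant and matching coefficients with the formula above determines the three diagonal entries of $S$ (from the coefficients of $t_it_j$) as well as all three $2\times 2$ principal minors of $S$ (from the coefficients of $t_i$), each equal to $\Psi_G/d$. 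Combining the diagonal values with the kernel condition $S\mathbf{1}=0$ and symmetry then pins down the off-diagonal entries, yielding
\[
S = \frac{1}{d}\begin{pmatrix}a+b & -a & -b\\ -a & a+c & -c\\ -b & -c & b+c\end{pmatrix}.
\]

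Computing any $2\times 2$ principal minor of $S$ directly, e.g.\ $S_{vv}S_{ww}-S_{vw}^2 = \bigl((a+c)(b+c)-c^2\bigr)/d^2 = (ab+ac+bc)/d^2$, and equating with the known value $\Psi_G/d$ produces the claimed identity $\Psi_G\cdot d = ab+bc+ca$. The main technical subtlety lies in the setup: verifying the normalization between principal minors of the $1/\alpha_e$-weighted Laplacian and the polynomials $\Phi_G^P$, and checking that the Schur complement of a singular Laplacian inherits the all-ones vector in its kernel. Once these preliminaries are handled, the identity reduces to a routine $3\times 3$ determinant computation.
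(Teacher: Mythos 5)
Your proof is correct, but it takes a genuinely different route from the paper's. The paper disposes of this proposition in one line: it adds a two-edge path $u\,$--$\,v\,$--$\,w$, applies the Dodgson (Desnanot--Jacobi) identity $\det M_G(12,12)\det M_G=\det M_G(1,1)\det M_G(2,2)-\det M_G(1,2)\det M_G(2,1)$ to the expanded edge--vertex matrix $M_G$, and translates each factor into a spanning forest polynomial via Proposition \ref{contract} and Example \ref{middle}; setting $\alpha_1=\alpha_2=0$ then yields exactly $(a+b)(b+c)-b^2=ab+bc+ca$. You instead adjoin an apex vertex, invoke the all-minors matrix--tree theorem, and compute the Schur complement $S$ of the weighted Laplacian onto $\{u,v,w\}$ explicitly, using the kernel condition $S\mathbf{1}=0$ to pin down the off-diagonal entries; the identity then falls out of comparing two evaluations of a $2\times 2$ principal minor of $S$. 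Both arguments ultimately rest on determinant identities for the Laplacian (Dodgson's identity is the $2\times 2$ case of Jacobi's theorem on complementary minors, which is what your Schur complement encodes), but yours is self-contained where the paper's leans on its Dodgson-polynomial machinery, and it has the bonus of exhibiting $S$ as the Laplacian of an ``effective triangle'' on $u,v,w$ with conductances $a/d$, $b/d$, $c/d$ --- an electrical-network reading that makes the transfer identity transparent. The only points needing care, which you correctly flag, are the normalization $\prod_e\alpha_e\cdot\det(L_{\bar S})=\Phi_G^P$ and the invertibility of the block $D$; the latter holds whenever $d=\Phi_G^{\{u\},\{v\},\{w\}}\not\equiv 0$ (e.g.\ $G$ connected and $u,v,w$ distinct), and in the degenerate cases both sides of the identity vanish, so nothing is lost.
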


This identity is essentially the so-called Dodgson identity in this context.  %It is the first of a large family of identities, see \cite{Brbig}.

\begin{proof}
%  We will proceed with all gruesome details as an exercise in graphical manipulations.
%  Let 
%  \[
%  \widetilde{G} = \raisebox{-1cm}{\includegraphics{tie}}
%  \]
%  \[
%    \det(M_{\widetilde{G}}) = \Psi_{\widetilde{G}} = \raisebox{-0.9cm}{\includegraphics{Mblob}}
%  \]
%  Expanding into spanning forests
%  \[
%    \includegraphics{expandB} \text{, similarly for } \includegraphics{blobT}
%  \]
%  and 
%  \[
%    \includegraphics{3together}
%  \]
%
%
%  Next $\det(M_{\widetilde{G}}(1,1)) = [a_1]\Psi_{\widetilde{G}} = \Psi_{\widetilde{G}\smallsetminus a_1}$.  So expanding into spanning forests
%  \[
%    \det(M_{\widetilde{G}}(1,1)) 
%    %& = \raisebox{-1cm}{\includegraphics{M11blob}}  \\ & 
%    = \raisebox{-1cm}{\includegraphics{M11}},
%  \]
%  and similarly for $\det(M_{\widetilde{G}}(2,2))$. Likewise $\det(M_{\widetilde{G}}(12,12)) = \Psi_{\widetilde{G}\smallsetminus a_1,a_2}$ so
%  \[
%    \det(M_{\widetilde{G}}(12,12)) = \raisebox{-1cm}{\includegraphics{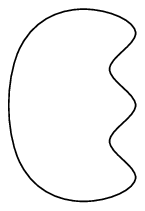}}.
%  \]
%  On the other hand $\det(M_{\widetilde{G}}(1,2))$ consists of monomials which correspond to a spanning tree of both 
%  \[
%    \includegraphics{blobB} \text{ and } \includegraphics{blobT}
%  \]
%  Thus
%  \[
%  \det(M_{\widetilde{G}}(1,2)) = \raisebox{-1cm}{\includegraphics{M21}}
%  \]
%
%  The Dodgson identity itself is
%  \begin{align*}
%    & \det(M_{\widetilde{G}}(12,12))\det(M_{\widetilde{G}}) \\
%    & = \det(M_{\widetilde{G}}(1,1))\det(M_{\widetilde{G}}(2,2)) - \det(M_{\widetilde{G}}(1,2))\det(M_{\widetilde{G}}(2,1))
%  \end{align*}
%  Using the above observations and taking $a_1=a_2=0$ gives the result.
  Use the Dodgson identity, 
   \begin{align*}
    & \det(M_{G}(12,12))\det(M_{G}) \\
    & = \det(M_{G}(1,1))\det(M_{G}(2,2)) - \det(M_{G}(1,2))\det(M_{G}(2,1))
  \end{align*}
  along with Proposition \ref{contract} and Example \ref{middle}.
\end{proof}

The graphical formulation suggests interpreting this proposition as a result about transferring an extra edge from any term in the left hand factor of the left hand side to the right hand factor of the left hand side, thus cutting a spanning tree into two in the left hand factor and joining two of the three trees together in the right hand factor.  The proposition says that transferring an edge in this way results in two spanning forests with exactly two trees in all possible ways.  However carrying this idea through to a proof is delicate as there are many ways the cutting can be done and many ways to build any particular forest of two trees with an extra edge.  It isn't obvious a priori that the counting works out and the authors are not aware of a proof along these lines, though one must surely exist.

\begin{thm} \label{thm3connected}
  Let 
  \[
    G = \raisebox{-1cm}{\includegraphics{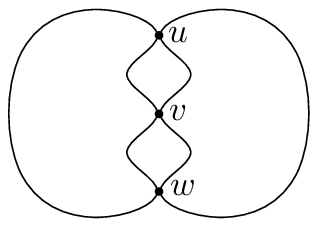}}
  \]
  be a graph which is three vertex reducible at vertices $\{u,v,w\}$.  Let $G'$ and $G''$ be the two halves of $G$ when separated at $u$, $v$, and $w$. Let
  \begin{align*}
    f_1 & = \Phi_{G'}^{\{u\},\{v,w\}} &\hfill  g_1 & = \Phi_{G''}^{\{u\},\{v,w\}}\\
    f_2 & = \Phi_{G'}^{\{v\},\{u,w\}} & \hfill g_2 & = \Phi_{G''}^{\{v\},\{u,w\}}\\
    f_3 & = \Phi_{G'}^{\{w\},\{u,v\}} & \hfill g_3 & = \Phi_{G''}^{\{w\},\{u,v\}}\\
    f & = \Psi_{G'} = \Phi_{G'}^{\{u,v,w\}} & \hfill g & = \Psi_{G''}= \Phi_{G''}^{\{u,v,w\}}  \\
    \widetilde{f} & = \Phi_{G'}^{\{u\},\{v\},\{w\}} & \hfill \widetilde{g} & = \Phi_{G''}^{\{u\},\{v\},\{w\}}
  \end{align*}
  Then
  \begin{align*}
    & f^{\deg(g)+1}g \Psi_G \\%(\alpha_1, \alpha_2, \ldots, f\beta_1, f\beta_2, \ldots) \\
    & = \bigg(f_1'f_2' + f_1'f_3' + f_2'f_3' + f_1'g_2 + f_1'g_3 + f_2'g_1  \\
    & \qquad + f_2'g_3 + f_3'g_1 + f_3'g_2 + g_1g_2 + g_1g_3 + g_2g_3\bigg)\bigg|_{\alpha_1, \alpha_2, \ldots, f\beta_1, f\beta_2, \ldots} \\
  \end{align*}
  where $f_i' = f_ig$, $\alpha_1, \cdots$ are the variables for the edges of $G'$ and $\beta_1, \cdots$ are the variables for the edges of $G''$.
\end{thm}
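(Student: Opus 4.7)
The plan is to prove this identity in three main steps: decompose $\Psi_G$ via the three-vertex cut as a bilinear combination of spanning forest polynomials of $G'$ and $G''$, eliminate the three-part polynomials $\widetilde{f}$ and $\widetilde{g}$ using Proposition~\ref{transfer}, and then match the result against the claimed right hand side by a homogeneity argument in the $\beta$-variables.

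\medskip

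\textit{Step 1: decomposition of $\Psi_G$.} Any spanning tree $T$ of $G$ restricts to a forest $T\cap G'$ of $G'$ and a forest $T\cap G''$ of $G''$. Every component of $T\cap G'$ must meet $\{u,v,w\}$ (otherwise it would be isolated in $T$), so $T\cap G'$ induces a set partition $P_1$ of $\{u,v,w\}$, and likewise $T\cap G''$ induces a partition $P_2$. Counting edges one sees that $|P_1|+|P_2|=4$, while the connectedness of $T$ is equivalent to $P_1\vee P_2$ being the single block $\{u,v,w\}$. Enumerating the admissible pairs — one of shape $(1,3)$, one of shape $(3,1)$, and the six shape-$(2,2)$ pairs with $P_1\neq P_2$ — and grouping terms according to the corresponding $\Phi^P$ polynomials yields
\[
\Psi_G \;=\; f\widetilde{g} \;+\; \widetilde{f}\,g \;+\; \sum_{\substack{i,j\in\{1,2,3\}\\ i\neq j}} f_i g_j .
\]

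\medskip

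\textit{Step 2: eliminating $\widetilde{f}$ and $\widetilde{g}$.} Proposition~\ref{transfer} applied to $G'$ with distinguished vertices $\{u,v,w\}$ gives $f\widetilde{f} = f_1f_2 + f_1f_3 + f_2f_3$, and the analogous identity holds for $G''$. Multiplying the decomposition of Step 1 through by $fg$ and substituting produces a purely polynomial identity
\[
fg\,\Psi_G \;=\; g^2\!\!\sum_{1\le i<j\le 3}\! f_if_j \;+\; f^2\!\!\sum_{1\le i<j\le 3}\! g_ig_j \;+\; fg\!\sum_{i\neq j} f_i g_j ,
\]
expressed entirely in terms of $f,g,f_i,g_j$.

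\medskip

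\textit{Step 3: matching the right hand side.} Since $f$ depends only on the $\alpha$-variables, the substitution $\beta_k\mapsto f\beta_k$ multiplies any monomial by $f^d$, where $d$ is its $\beta$-degree. The products $f_i'f_j'=f_if_jg^2$, $f_i'g_j=f_ig\,g_j$, and $g_ig_j$ have $\beta$-degrees $2\deg(g)$, $2\deg(g)+1$, and $2\deg(g)+2$ respectively, so after the substitution they acquire the powers of $f$ which are exactly what is needed to reorganise them into $g^2\sum f_if_j$, $fg\sum f_ig_j$, and $f^2\sum g_ig_j$ up to a common scalar factor of $f^{2\deg(g)}$. Comparing with the identity of Step 2 and multiplying through by the appropriate power of $f$ gives the claimed formula.

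\medskip

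The main obstacle lies in Step 1: one must verify cleanly that every spanning tree of $G$ contributes to exactly one of the eight compatible partition pairs and, conversely, that the product $\Phi^{P_1}_{G'}\Phi^{P_2}_{G''}$ enumerates precisely those spanning trees. Once this combinatorial decomposition is nailed down, Steps 2 and 3 are algebraic manipulations driven by Proposition~\ref{transfer} and by tracking $\beta$-degrees through the substitution $\beta_k\mapsto f\beta_k$.
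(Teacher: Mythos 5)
Your proposal follows essentially the same route as the paper: the same eight-term spanning-tree decomposition $\Psi_G = f\widetilde{g}+\widetilde{f}g+\sum_{i\neq j}f_ig_j$ across the three cut vertices, the same use of Proposition~\ref{transfer} to trade $f\widetilde{f}$ and $g\widetilde{g}$ for the symmetric sums $\sum f_if_j$ and $\sum g_ig_j$, and the same rescaling $\beta_k\mapsto f\beta_k$ tracked by $\beta$-degree. Your Step~2, which first establishes the clean polynomial identity $fg\,\Psi_G = g^2\sum f_if_j + fg\sum_{i\neq j} f_ig_j + f^2\sum g_ig_j$ and only then rescales, is if anything a tidier organisation than the paper's, which interleaves the substitution with the two applications of Proposition~\ref{transfer}.

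The one place you should not gloss over is the final sentence of Step~3. Your own degree count is correct: the three types of products have $\beta$-degrees $2\deg(g)$, $2\deg(g)+1$, $2\deg(g)+2$, so the substitution turns the right-hand side into $f^{2\deg(g)}\bigl(g^2\sum f_if_j + fg\sum f_ig_j + f^2\sum g_ig_j\bigr) = f^{2\deg(g)+1}g\,\Psi_G$, whereas the statement asserts this equals $f^{\deg(g)+1}g\,\Psi_G$. Saying ``multiplying through by the appropriate power of $f$'' does not reconcile these; the appropriate power according to your computation is $f^{2\deg(g)}$, not $f^{\deg(g)}$. A direct check on the smallest case (two triangles glued at $u,v,w$, so $f=a+b+c$, $g=d+e+h$, $\deg g=1$) confirms that the substituted right-hand side is homogeneous of degree $8$ while $f^{2}g\Psi_G$ has degree $7$ and $f^{3}g\Psi_G$ has degree $8$. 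So your bookkeeping is right and the exponent in the stated theorem should read $2\deg(g)+1$; the paper's own proof, when the final multiplication by $g$ is pulled inside the substitution as $f^{-\deg(g)}(g|_{\beta\mapsto f\beta})$, produces the same corrected exponent. You should state this explicitly rather than force agreement with the printed formula.
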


Note that the piece in parentheses of the expression for $\Psi_G$, 
\[
f_1'f_2' + f_1'f_3' + f_2'f_3' + f_1'g_2 + f_1'g_3 + f_2'g_1
    + f_2'g_3 + f_3'g_1 + f_3'g_2 + g_1g_2 + g_1g_3 + g_2g_3
\]
is itself the graph polynomial of the following graph with the indicated polynomials as edge variables
\[
  \includegraphics{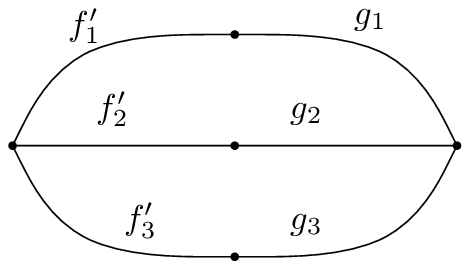}
\]

Graphically, then, we can represent the theorem as
\[
  \includegraphics{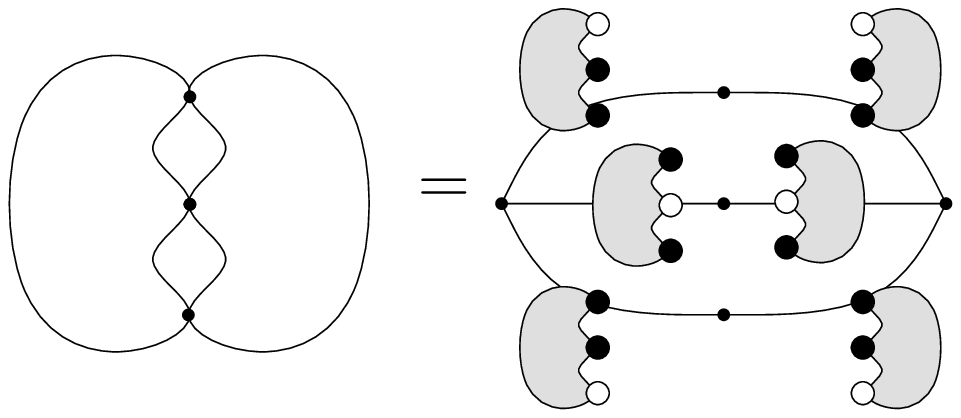}
\]
Note however that this picture does not capture all the details of the theorem as it does not indicate the scalings by powers of 
\[
\includegraphics{blank} \text{ and } \includegraphics{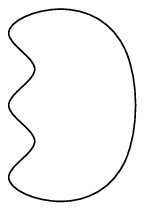}.
\]

\begin{proof}
  Any tree contributing to a term of $\Psi_G$ breaks up uniquely into a spanning forest of $G'$ and a spanning forest of $G''$.  A pair of a spanning forest of $G'$ and a spanning forest of $G''$ give a tree of $G$ precisely when each tree of each forest contains at least one of $u$, $v$, and $w$ and when there is exactly one path between each pair of $u$, $v$, and $w$ using both spanning forests.
Thus
  \[
    \Psi_G = \widetilde{f}g + f_1g_2 + f_1g_3 + f_2g_1 + f_2g_3 + f_3g_1 + f_3g_2 + f\widetilde{g}
  \]
  Let $n = \deg g$.  Note that $\deg g_i = n+1$ and $\deg \widetilde{g} = n+2$.
  Multiplying by $f^{n+1}$ and using Proposition \ref{transfer} we get
  \begin{align*}
    f^{n+1}\Psi_G = & f^{n}(f_1f_2g + f_1f_3g + f_2f_3g) \\
    & + f^{n+1}(f_1g_2 + f_1g_3 + f_2g_1 + f_2g_3 + f_3g_1 + f_3g_2) + f^{n+2}\widetilde{g}
  \end{align*}
  Let $\alpha_1, \ldots, $ be the edges of $G'$ and let $\beta_1, \ldots, $ be the edges of $G''$.  Scale the edges of $G''$ by $f$ giving
  \begin{align*}
  %&  \\ %(\alpha_1, \alpha_2, \ldots, f\beta_1, f\beta_2, \ldots) \\
   f^{n+1}\Psi_G & = \bigg(f_1f_2g + f_1f_3g + f_2f_3g + f_1g_2 + f_1g_3 + f_2g_1  \\
    & \qquad + f_2g_3 + f_3g_1 + f_3g_2 + \widetilde{g}\bigg)\bigg|_{\alpha_1, \alpha_2, \ldots, f\beta_1, f\beta_2, \ldots} \\
  \end{align*}
  Next multiply by $g$ and use Proposition \ref{transfer}
  \begin{align*}
  %  &  \\%(\alpha_1, \alpha_2, \ldots, f\beta_1, f\beta_2, \ldots) \\
& f^{n+1}g \Psi_G   \\ & = \bigg(f_1'f_2' + f_1'f_3' + f_2'f_3' + f_1'g_2 + f_1'g_3 + f_2'g_1  \\
    & \qquad + f_2'g_3 + f_3'g_1 + f_3'g_2 + g_1g_2 + g_1g_3 + g_2g_3\bigg)\bigg|_{\alpha_1, \alpha_2, \ldots, f\beta_1, f\beta_2, \ldots} \\
  \end{align*}
  where $f_i' = f_ig$, which is the desired result.
\end{proof}

\section{Weight drop in Feynman graphs}\label{sectHyper}

\subsection{Hyperlogarithms}
Let $\sigma_0=0$, and let $\sigma_1,\ldots, \sigma_N$ denote $N$ distinct points in $\C^*$. Let $\Sigma=\{\sigma_0,\ldots, \sigma_N\}$, and let
$$Y= \C \backslash \Sigma\ .$$
Let $A=\{\x_0,\ldots, \x_N\}$ denote an alphabet with $N+1$ letters, and let $A^*$ denote the free non-commutative monoid on $X$, which consists of the set of all words $w$ in the alphabet $A$ and the empty word $e$. Let $\C\lc A \rc $ denote the ring of non-commutative formal power series in $A$, equipped with the concatenation product. For any element $S\in \C\lc A \rc$, let $S_w$ denote the coefficient of $w$ in $S$, i.e.,  
$$S= \sum_{w\in A^*} S_w w \quad \ , \quad  S_w \in \C\ .$$
Consider the trivial bundle $E=Y \times \C\lc A \rc $ over $Y$, and consider the following one-form on $Y$: %with ordinary regular singularities along $\Sigma$:
$$  \Omega(z) =  \Big(\sum_{i=0}^N \x_i {dz \over z-\sigma_i} \Big)$$
Since $d\Omega(z) = \Omega(z) \wedge \Omega(z) =0$ this defines a flat connection on $E$. There is a unique multivalued  section $L:Y \rightarrow E$ which
satisfies:   
\begin{eqnarray}\label{diffeq}
d L(z)&=&\Omega(z) L(z)\ , \\
L(z) & \sim & \exp(\x_0 \log z) \nonumber 
\end{eqnarray}
 where the notation  $L(z) \sim \exp(\x_0 \log z)$ means that there exists a function $h(z)$ holomorphic in the neighbourhood of the origin, such that
 $h(0)=1$ and $L(z) = h(z) \exp(\x_0 \log z)$ for $z$ near $0$.
If $w=\x_{i_1}\ldots \x_{i_r}$ where $i_r \neq 0$, then one can  show that the coefficient  $L_w(z)$ of $L(z)$ is an iterated integral:
$$L_w(z) = \int_{0\leq t_r\leq t_{r-1}\leq \ldots \leq t_1 \leq z} {dt_r \over t_r- \sigma_{i_r}} \ldots{dt_2 \over t_2 - \sigma_{i_2}} {dt_1 \over t_1 -\sigma_{i_1}} $$
for $z\in \mathbb{R}$ in a neighbourhood of $0$.
%$$L(x) = \sum_{w\in X^*} w L_w(x)$$
%
The  equations $(\ref{diffeq})$ are equivalent to the following system of differential equations on the coefficients $L_w(z)$, and determine them uniquely:
\begin{eqnarray}
{\partial \over \partial z} L_{\x_iw}(z)& = & {1\over z-\sigma_i} L_w(z) \hbox{ for } i=0,\ldots, N \nonumber \\
L_{\x_0^n} ( z)& = &  { 1\over n!} \log^n (z) \nonumber \\
L_{w}(z) & \sim & 0 \hbox{ as } z\rightarrow 0 \hbox{ if } w\neq x_0^n  \nonumber 
\end{eqnarray}

\begin{lemma}\label{intlem} We deduce the following indefinite integrals (with constants of integration omitted), where the  denominators  are of degree at most 2 in $z$:
\begin{eqnarray}
&i). & \int {L_w(z) \over z-\sigma_i}dz = L_{\x_i w}(z) \nonumber \\
& ii). & \int{L_w(z) \over (z-\sigma_i)(z-\sigma_j)}dz = { 1\over \sigma_i -\sigma_j}\Big( L_{\x_iw}(z) - L_{\x_jw}(z) \Big)\nonumber \\
& iii). & \int L_{\x_{i_1}\ldots \x_{i_n}}(z) dz = \sum_{k=1}^n (-1)^{k-1}(z-\sigma_{i_k}) L_{\x_{i_{k+1}}\ldots \x_{i_n}}(z)   \nonumber \\
& iv). & \int {L_{\x_i^r \x_j w}(z) \over (z-\sigma_i)^2 } dz  =  {1\over z-\sigma_i} \Big(\sum_{k=1}^r (-1)^{k+1} L_{\x_i^{r-k}\x_jw}\Big) + { 1\over \sigma_i -\sigma_j}\Big( L_{\x_iw}(z) - L_{x_jw}(z) \Big)\nonumber  
\end{eqnarray}
where $i\neq j$, $r\geq 0$,  and $i_1,\ldots, i_n$ are any indices in $\{0,\ldots, n\}$.
\end{lemma}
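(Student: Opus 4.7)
The plan is to verify each identity by direct application of the defining first-order system for the coefficients $L_w(z)$, namely $\partial_z L_{\x_i w}(z) = L_w(z)/(z-\sigma_i)$. Identity (i) is immediate: it merely records that $L_{\x_i w}(z)$ is an antiderivative of the integrand. Identity (ii) follows by combining (i) with the partial fraction decomposition
$$\frac{1}{(z-\sigma_i)(z-\sigma_j)} = \frac{1}{\sigma_i - \sigma_j}\left(\frac{1}{z-\sigma_i} - \frac{1}{z-\sigma_j}\right),$$
which is valid because $i \neq j$; applying (i) to each resulting piece yields the stated linear combination.

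For identity (iii) the natural approach is iterated integration by parts in the word length $n$. Setting $u = L_{\x_{i_1}\ldots \x_{i_n}}(z)$ and $dv = dz$, the judicious choice of antiderivative $v = z - \sigma_{i_1}$ cancels the simple pole in $du$ coming from the defining ODE and gives the recurrence
$$\int L_{\x_{i_1}\ldots \x_{i_n}}(z)\,dz = (z-\sigma_{i_1}) L_{\x_{i_1}\ldots \x_{i_n}}(z) - \int L_{\x_{i_2}\ldots \x_{i_n}}(z)\,dz.$$
Iterating this relation $n$ times telescopes into exactly the alternating sum in the statement, with the final integral absorbed into the constant of integration.

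Identity (iv) is the most delicate. The natural route is another integration by parts, this time with $u = L_{\x_i^r \x_j w}(z)$ and $dv = dz/(z-\sigma_i)^2$, so that $v = -1/(z-\sigma_i)$ and, for $r \geq 1$, $du = L_{\x_i^{r-1}\x_j w}(z)/(z-\sigma_i)\,dz$. This produces a recurrence lowering $r$ by one,
$$\int \frac{L_{\x_i^r \x_j w}(z)}{(z-\sigma_i)^2}\,dz = -\frac{L_{\x_i^r \x_j w}(z)}{z-\sigma_i} + \int \frac{L_{\x_i^{r-1}\x_j w}(z)}{(z-\sigma_i)^2}\,dz,$$
which I would unwind down to the base case $r = 0$. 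For that base case, one more integration by parts turns $\int L_{\x_j w}(z)/(z-\sigma_i)^2\,dz$ into $-L_{\x_j w}(z)/(z-\sigma_i) + \int L_w(z)/((z-\sigma_i)(z-\sigma_j))\,dz$, and the remaining integral is supplied by (ii), contributing the Kummer-type correction $\frac{1}{\sigma_i - \sigma_j}(L_{\x_i w}(z) - L_{\x_j w}(z))$. Collecting the $r+1$ boundary terms in front of the common factor $1/(z-\sigma_i)$ gives the stated expression. The main obstacle, and the only place where genuine care is required, is aligning the signs and index shifts in the telescoping boundary sum with the formula as written; as a safeguard I would confirm the final answer by differentiating the proposed antiderivative, using the ODE $\partial_z L_{\x_i^m \x_j w} = L_{\x_i^{m-1}\x_j w}/(z-\sigma_i)$ to convert each internal contribution, and checking that all terms telescope so that only $L_{\x_i^r \x_j w}(z)/(z-\sigma_i)^2$ survives.
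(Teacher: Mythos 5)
Your argument is precisely the paper's own proof, fleshed out: the authors dispose of (i)--(ii) by ``the definition of the functions $L_w(z)$ and partial fractions'' and of (iii)--(iv) by ``integration by parts and induction'', which is exactly the route you take, including the reduction of (iv) to the base case $r=0$ and then to (ii). The one thing your proposed final differentiation check would surface is that the honest telescoping does \emph{not} reproduce the formulas as printed: in (iii) it yields $\sum_{k=1}^{n}(-1)^{k-1}(z-\sigma_{i_k})L_{\x_{i_k}\cdots\x_{i_n}}(z)+(-1)^{n}z$ (the final integral $\int L_{e}\,dz=z$ is not a constant of integration, and the word index should start at $i_k$, not $i_{k+1}$), and in (iv) it yields $-\tfrac{1}{z-\sigma_i}\sum_{m=0}^{r}L_{\x_i^{m}\x_j w}(z)$ plus the Kummer term, with uniform rather than alternating signs and with $r+1$ summands. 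These are typos in the lemma's statement rather than gaps in your proof, and they are immaterial for the only use made of the lemma (Corollary \ref{cordenom}): in either form the primitive involves only hyperlogarithms of weight at most $k$ (with rational-function coefficients) in the degenerate-discriminant cases, which is the weight-drop assertion that matters.
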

\begin{proof} (i) and  (ii) follow from the definition of the functions $L_w(z)$ and partial fractions. (iii) and (iv) follow by integration by parts and induction.
\end{proof}
\begin{definition}  Let $\LL$ denote the $\Q$-vector space spanned by the multivalued functions $L_w(z)$, for $w\in A^*$ (which can be shown to be linearly independent).
It is graded by the weight, where the weight $|w|$ of a word $w\in A^*$ is the number of letters in $w$. We write $\LL = \bigoplus_{n\geq 0 } \gr^W_n \LL$, and $W_k\LL = \bigoplus_{0\leq n\leq k} \gr^W_n \LL$.
\end{definition}
It turns out that $\LL$ is closed under multiplication (by the shuffle product formula), and is in fact a graded Hopf algebra.
The various cases of the previous lemma are summarized in the following corollary.
\begin{cor}\label{cordenom} Let $F(z)$ be an element of $\gr^W_k \LL$ of weight $k$, and let $P(z)= a z^2+bz+c$ be a polynomial in $z$ of degree at most 2 with zeros in $\Sigma$. Let $\Delta(P) = b^2-4ac$ denote the discriminant of $P(z)$.   Then
$$\int {F(z) \over P(z)} dz \in  
\begin{cases} {1\over \sqrt{\Delta(P)}} \gr^W_{k+1} \LL & \text{if $\Delta(P)\neq 0$, \qquad (``no weight drop")}
\\
W_{k}  \LL  &\text{if $\Delta(P)= 0$. \qquad (``weight drop")}
\end{cases}
 $$
Note that in the case when a weight drop occurs, the primitive is not necessarily  pure as  there may be mixing of weights.
\end{cor}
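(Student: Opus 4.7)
The plan is to reduce the claim directly to Lemma~\ref{intlem} by case analysis on $P$. I would write $F=\sum_w c_w L_w(z)$ with $|w|=k$ and, by linearity, reduce to the case $F = L_w$ for a single word of length $k$. The corollary then follows by summing the weight bounds obtained in each case.

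The low-degree cases are essentially free. If $\deg P = 0$, then $\Delta(P) = 0$ and part (iii) of Lemma~\ref{intlem} expresses $\int L_w\,dz$ as a $\Q[z]$-linear combination of $L_{w'}$ with $|w'| \le k-1$, which lies in $W_k\LL$. If $\deg P = 1$, then $P(z) = b(z-\sigma_i)$ with $\sigma_i \in \Sigma$, so $\Delta(P) = b^2$, and part (i) gives $\int L_w / P \, dz = (1/b) L_{\x_i w}$, a pure weight $k+1$ element scaled by $\pm 1/\sqrt{\Delta(P)}$.

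Next I would treat $\deg P = 2$ with $\Delta(P) \neq 0$. Here $P(z) = a(z-\sigma_i)(z-\sigma_j)$ with distinct $\sigma_i, \sigma_j \in \Sigma$, and part (ii) gives
\[
  \int \frac{L_w}{P}\,dz \;=\; \frac{1}{a(\sigma_i - \sigma_j)}\bigl(L_{\x_i w} - L_{\x_j w}\bigr).
\]
Since $\Delta(P) = a^2(\sigma_i - \sigma_j)^2$, the prefactor is $\pm 1/\sqrt{\Delta(P)}$, and the bracket is pure of weight $k+1$, giving the result.

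The remaining case is the weight-drop case $\Delta(P) = 0$ with $\deg P = 2$, so $P(z) = a(z-\sigma_i)^2$. For any word $w$ that contains at least one letter $\x_j$ with $j \neq i$, I would write $w = \x_i^r \x_j w'$ and apply part (iv) of Lemma~\ref{intlem}; a direct inspection of that formula shows that both summands on the right-hand side have weight at most $k$, once the rational prefactor $1/(z-\sigma_i)$ is declared to have weight zero (so that $1/(z-\sigma_i)\cdot L_{\x_i^{r-1}\x_j w'}$ has weight $r+|w'| = k-1$, and $(L_{\x_i w'} - L_{\x_j w'})/(\sigma_i-\sigma_j)$ has weight $1+|w'|\le k$). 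The main obstacle is that part (iv) does \emph{not} cover the exceptional words $w = \x_i^k$. For these I would argue by induction on $k$, integrating by parts using $\partial_z L_{\x_i^k}(z) = L_{\x_i^{k-1}}(z)/(z-\sigma_i)$, to express $\int L_{\x_i^k}(z)/(z-\sigma_i)^2 \, dz$ as a sum of terms $L_{\x_i^j}(z)/(z-\sigma_i)$ with $0 \le j \le k$, each again of weight $\le k$. Summing over $w$ yields the stated inclusion in $W_k \LL$.
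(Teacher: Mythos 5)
Your proof is correct and takes essentially the same route as the paper, whose entire proof is the one-line observation that cases (i)--(ii) of Lemma~\ref{intlem} correspond to $\Delta(P)\neq 0$ and cases (iii)--(iv) to $\Delta(P)=0$; you simply carry out that case-matching in full detail (including the discriminant computations $\Delta=b^2$ and $\Delta=a^2(\sigma_i-\sigma_j)^2$). Your observation that part (iv) does not cover the pure-power words $\x_i^k$, together with the integration-by-parts patch for them, is a legitimate small refinement that the paper glosses over, but it does not change the argument.
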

\begin{proof} The case $\Delta(P)=0$ corresponds to $(iii)$ and $(iv)$ in the previous lemma, and the remaining cases $(i)$ and $(ii)$ correspond to $\Delta(P)\neq 0$.
\end{proof}

\subsection{Initial integrations} One can use the algebras $\LL$ to integrate out the first few variables in a Feynman integral.
Let $G$ be a primitive-divergent graph, and choose an order on its edges. Consider the residue:
$$I_G = \int _{[0,\infty]^{e_G}} {1 \over \Psi_G^2}\prod_{i=1}^{e_G} d\alpha_i \, \delta(\alpha_{e_G}=1)$$
We can successively integrate out the variables $\alpha_1,\ldots, \alpha_5$ using lemma $\ref{intlem}$. Dropping the $\delta$s from the notation, this gives \cite{Brbig}:
$$I_G^1 = \int {1\over \Psi^{1,1}_{G} \Psi_{G,1} } \prod_{i=2}^{e_G} d\alpha_i$$
$$I_G^2 = \int{ \log \Psi^{1,1}_{G,2} +\log \Psi^{2,2}_{G,1} - \log \Psi^{12,12}_G -\log \Psi_{G,12} \over (\Psi^{1,2}_G)^2} \prod_{i=3}^{e_G} d\alpha_i$$
$$I_G^3 = \int\Big({\Psi^{123,123}_G \log \Psi_G^{123,123} \over    \Psi_G^{12,13}\Psi_G^{12,23}\Psi_G^{13,23} }-{\Psi_{G,123} \log \Psi_{G,123} \over    \Psi^{2,3}_{G,1}\Psi_{G,2}^{1,3}\Psi_{G,3}^{1,2} }+\qquad \qquad \qquad$$
$$\qquad \qquad \qquad 
\underset{\{i,j,k\}}{\sum} {\Psi^i_{G,jk} \log \Psi^i_{G,jk} \over    \Psi_G^{ij,ik}\Psi_{G,j}^{i,k}\Psi_{G,k}^{i,j} } - {\Psi^{ij}_{G,k} \log \Psi^{ij}_{G,k} \over    \Psi_G^{ij,ik}\Psi_G^{ij,jk}\Psi_{G,k}^{i,j} }   \Big) \prod_{i=4}^{e_G} d\alpha_i   $$
where the sum runs over permutations of $\{1,2,3\}$ and so there are 8 terms in the last integral.
Continuing in a similar way and exploiting the many algebraic relations between the polynomials $\Psi^{I,J}_K$ \cite{Brbig}, one verifies that:
\begin{eqnarray}
I_G^4 &=& \int \Big( {A \over \Psi^{12,34}_G\Psi^{13,24}_G} + {B \over \Psi^{14,23}_G\Psi^{13,24}_G} +{C \over \Psi^{12,34}_G\Psi^{13,24}_G} \Big)   \prod_{i=5}^{e_G} d\alpha_i \\
I_G^5 & = & \int{ F\over {}^5\Psi_G(1,2,3,4,5)} \prod_{i=6}^{e_G} d\alpha_i\ .
\end{eqnarray}
where $A,B,C$ are hyperlogarithms of weight 2, and $F$ is a hyperlogarithm of weight $3$ with singularities in $\{\Psi^{I,J}_{G,K}=0\}$ where $I\cup J \cup K= \{1,2,3,4,5\}$.
The 5-invariant ${}^5\Psi(1,2,3,4,5)$ is defined as follows:
\begin{definition} The 5-invariant of any 5 edges  $i,j,k,l,m$  in $G$ is:
$${}^5\Psi_G(i,j,k,l,m) =  \pm \det
\left(
\begin{array}{cc}
 \Psi_{G,m}^{ij,kl} & \Psi_G^{ijm,klm}   \\
\Psi_{G,m}^{ik,jl}  &     \Psi_G^{ikm,jlm} 
\end{array}
\right)
$$
It is well-defined, i.e., permuting $i,j,k,l,m$ in the above only changes the sign of the determinant.
\end{definition}

\subsection{Denominator reduction}
The denominator reduction is an algorithm for computing the denominators at successive stages of integration using Corollary $\ref{cordenom}$.

\begin{definition} Let $G$ be a  primitive-divergent graph and choose an ordering on its set of edges.
 Let $D_5={}^5\Psi_G(1,2,3,4,5)$.  Let $n\geq 5$, and suppose inductively that $D_n$ factorizes into a product of linear factors in $\alpha_{n+1}$, i.e.,
 $D_n = (a\alpha_{n+1}+b) (c\alpha_{n+1}+d)$. Then we define
 $$D_{n+1} = \sqrt{\Delta(D_n)} =\pm( ad-bc)\ ,$$
 where the discriminant $\Delta$ is taken with respect to $\alpha_{n+1}$. 
 A graph $G$ for which the polynomials $D_n$ can be defined for all $n$ is called \emph{denominator-reducible}. If for some $n$, $D_n$ is a perfect square in $\alpha_{n+1}$, then 
 $D_{n+1}$, and all $D_m$ for $m\geq n+1$ are identically $0$. In  this case, we say that $G$ has a \emph{weight drop}. Otherwise, $G$ is \emph{non-weight drop}.
\end{definition}

\begin{remark}The interpretation of $D_n$ as a denominator proves that $D_n$ does not depend on the chosen order of variables up to that point.  We will frequently use the notation
$${}^n\Psi_G(e_1,\ldots, e_n)\qquad n\geq 5, $$
to denote the denominator $D_n$ of the graph $G$ after reducing with respect to the edges $e_1,\ldots, e_n$.

\end{remark}

We have the following rather naive definition of the transcendental weight of a period:
\begin{definition} Let $P,Q$ be polynomials in $\overline{\Q}[x_1,\ldots, x_n]$  and consider an absolutely convergent period of the form:
$$I = \int_{[0,\infty]^n} {P(x_1,\ldots, x_n) \over Q(x_1,\ldots, x_n)} dx_1\ldots dx_n\ .$$
We say that such a  period  has weight at most $n$ if it can be written  as  a sum of  convergent period  integrals as above with at most $n$ integrations.\footnote{This corresponds to the fact that the mixed Hodge structure of a complex of open affine varieties of dimension  at most $n$ has weights contained in $[0,2n]$.}
This defines a filtration on the set of these periods. 
\end{definition}
The definition of weight as given above is compatible with the weight filtration on the elements $L_w(z)$ of $\LL$.
It  is satisfactory for periods of mixed Tate motives, and in particular  should give back the usual notion of weight   for multiple zeta values,
however, in the general case it is a little  simplistic; as remarked earlier,  %but is a little too simplistic for general periods.
a more sophisticated approach to the weight is to view $I$ as a period of a mixed Hodge structure.
It is remarked in \cite{Brbig} that the arguments we give in this paper do in fact prove an analogous result on the weights in the Hodge-theoretic sense.

It follows from the computations of $I^4_G$ above that $I$ can be written as  a $2+e_G-5=e_G-3$ fold integral (each term $A,B,C$  is of weight 2 and can therefore be written as a 2-fold integral), and there remain $e_G-5$ Schwinger parameters to integrate out owing to the $\delta(\alpha_{e_G}-1)$ term. It follows that the weight of $I_G$ is at most $e_G-3$.

\begin{thm}\cite{Brbig} Suppose that $G$ is primitively divergent as above and has a weight drop at the $n^{\mathrm{th}}$ stage of its denominator reduction.
If furthermore $G$  is linearly reducible up to the  $n^{\mathrm{th}}$ point, then the weight of $I_G$ is at most $e_G-4$. 
\end{thm}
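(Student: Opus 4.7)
The strategy is to run the hyperlogarithm integration of Section 4.2 through step $n+1$, exploit the weight drop at that critical step via Corollary \ref{cordenom}, and bound the weight added by the remaining integrations.

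First, using the linear reducibility hypothesis and the explicit formulas $I_G^4,I_G^5$ as base cases, iteratively integrate out $\alpha_6,\alpha_7,\ldots,\alpha_n$ using Lemma \ref{intlem}. At each stage the denominator depending on the current variable $\alpha_k$ is a product of linear factors, so after combining pairs of such factors one is in the situation of Corollary \ref{cordenom} with $\Delta\neq 0$: an integrand of the form $F(\alpha_k)/P(\alpha_k)$ with $F\in\mathrm{gr}^W_{k-3}\mathcal{L}$ and $P$ at most quadratic produces an antiderivative in $\tfrac{1}{\sqrt{\Delta(P)}}\mathrm{gr}^W_{k-2}\mathcal{L}$. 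The new quadratic discriminant that controls the next integration is, by definition of the denominator reduction, $D_{k+1}=\sqrt{\Delta(D_k)}$. Hence after the $n^{\mathrm{th}}$ integration the integrand is a sum of hyperlogarithms of weight at most $n-2$ divided by rational functions whose dependence on $\alpha_{n+1}$ is governed by $D_n$.

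Next, perform the $(n+1)^{\mathrm{th}}$ integration in $\alpha_{n+1}$. The weight-drop hypothesis means $D_n$ is a perfect square in $\alpha_{n+1}$, i.e.\ $\Delta(D_n)=0$. By the second case of Corollary \ref{cordenom} (equivalently, cases $(iii)$ and $(iv)$ of Lemma \ref{intlem}), each resulting antiderivative lies in $W_{n-2}\mathcal{L}$ rather than $\mathrm{gr}^W_{n-1}\mathcal{L}$: this step consumes an integration without raising the weight. Finally, the remaining $e_G-n-2$ integrations (one variable is eliminated by $\delta(\alpha_{e_G}=1)$) each add at most one unit of weight by Corollary \ref{cordenom}, so
\[
\mathrm{weight}(I_G)\;\leq\;(n-2)+(e_G-n-2)\;=\;e_G-4,
\]
as claimed.

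\textbf{Where the difficulty lies.} The substantive issue is coherence across terms. After $n$ reductions the integrand is a long sum of hyperlogarithm contributions (compare the eight-term expression for $I_G^3$), and a priori each summand might have its own quadratic denominator in $\alpha_{n+1}$. One must check that the top-weight contribution of \emph{every} such summand is simultaneously killed at the $(n+1)^{\mathrm{th}}$ step, not just the contribution of one distinguished term. The justification comes from the very definition of the denominator reduction: $D_n$ is the common quadratic factor that controls the leading-weight singularity of all summands produced by the linear reduction, so the single condition $\Delta(D_n)=0$ suffices. Making this precise—tracking the shared factor $D_n$ through the many Pl\"ucker-type identities used along the way—is the technical heart of the argument and the reason the statement requires linear reducibility through exactly the $n^{\mathrm{th}}$ stage.
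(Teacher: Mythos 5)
The paper does not actually prove this theorem --- it is cited to \cite{Brbig} --- so the only thing to compare against is the sketch given immediately before the statement, where the bound $e_G-3$ in the generic case is obtained by stopping the explicit integration at stage $4$ and counting: a weight-$2$ integrand (a $2$-fold integral) plus $e_G-5$ remaining Schwinger parameters. Your argument is exactly the natural extension of that sketch, and your final count $(n-2)+(e_G-n-2)=e_G-4$ is correct. Two small points. First, your justification of the last step is slightly misplaced: you cannot invoke Corollary \ref{cordenom} for the remaining $e_G-n-2$ integrations, because linear reducibility is only assumed up to the $n^{\mathrm{th}}$ point, so there is no guarantee those later integrands have denominators of degree at most $2$ in the next variable. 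The correct (and simpler) justification is the paper's definition of weight: once $I_G$ is exhibited as an integral over the remaining $e_G-n-2$ parameters of hyperlogarithms of weight at most $n-2$, each of which is itself an $(n-2)$-fold iterated integral, $I_G$ is a sum of $(e_G-4)$-fold period integrals and hence has weight at most $e_G-4$ by definition --- no further integration needs to be performed. Second, the coherence issue you flag (that every summand's top-weight part must be killed simultaneously at the $(n+1)^{\mathrm{th}}$ step, and that the shared denominator $D_n$ must be tracked through the algebraic identities) is indeed the technical content deferred to \cite{Brbig}; you correctly identify it but do not close it, which is acceptable here since the paper itself imports that result rather than proving it.
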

The  linear reducibility condition guarantees that the integrands  in the integration process  can indeed be written as hyperlogarithms (i.e., they are multivalued functions on a Zariski open subset of projective space and have global unipotent monodromy). The previous calculations make it  clear that every graph $G$ is linearly reducible up to the $5^{\mathrm{th}}$ stage. Sufficient conditions for linear reducibility  are given in \cite{Brbig}. 
%Note that to prove a weight-drop at the $n^{\mathrm{th}}$ stage only requires verifying. 
 \begin{remark} If $G$ is linearly reducible and has no weight drop, then the expected transcendental weight of $I_G$ is $e_G-3$.
 %One can prove a more precise statement in this direction by working in the associated graded of a given ring of periods. However, this avails little since the existence of a weight grading on the simplest set of periods,  the ring of multiple zeta values, is completely conjectural at this point in time.
 \end{remark}The purpose of this paper is 
 to investigate the \emph{combinatorial} conditions under which a weight drop (defined by the vanishing of a denominator $D_n$) occurs.

\subsection{Weight-preserving operations}

Let
\[
  G = \raisebox{-1.3cm}{\includegraphics{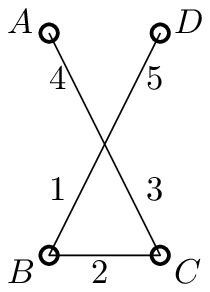}}
\]
where the circled vertices indicate where the explicitly drawn edges attach to the rest of the graph, which is left undrawn to avoid clutter.  Let $K$ be the rest of the graph, $K = G\backslash \{1,2,3,4,5\}$.
Let 
\[
 H=G\backslash \{2\} = \raisebox{-1.3cm}{\includegraphics{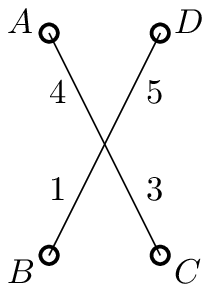}}
\]

\begin{prop}\label{dt5prop}
\begin{align*}
{}^5\Psi_G(1,2,3,4,5)& = \pm\Psi^{13,45}_{H} \Psi^{4,5}_{H,13} \\
& = \pm\left(\Phi^{\{A,B\},\{C,D\}}_K - \Phi^{\{A,C\},\{B,D\}}_K\right)\Phi^{\{A,D\},\{B\},\{C\}}_K
\end{align*}
\end{prop}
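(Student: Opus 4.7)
The plan is to prove the two equalities independently. The first, ${}^5\Psi_G(1,2,3,4,5) = \pm\Psi^{13,45}_H\,\Psi^{4,5}_{H,13}$, is a manipulation of Dodgson polynomials that exploits the 2-vertex-cut structure made visible after removing edge $2$. The second passes from Dodgson polynomials of $H$ to spanning forest polynomials of the core $K = G\setminus\{1,2,3,4,5\}$ using the machinery of Section 2.

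For the first equality, I would start from the determinantal definition
$${}^5\Psi_G(1,2,3,4,5) = \pm\det\!\begin{pmatrix}\Psi^{12,34}_{G,5} & \Psi^{125,345}_G\\ \Psi^{13,24}_{G,5} & \Psi^{135,245}_G\end{pmatrix}$$
and resolve every entry with respect to edge $2$ by Proposition \ref{contract}, so that each becomes (up to sign) an affine function of $\alpha_2$ whose coefficients are Dodgson polynomials of $H$ or of $G/2$. The geometric input is that the five marked edges reach $K$ only through the four external vertices $A,B,C,D$, so that with respect to the pair of edge-sets $\{1,3\}$ and $\{4,5\}$ the graph $H$ exhibits a two-vertex-join structure of the type described in Proposition \ref{2join}. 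That proposition forces several of the ``crossed'' Dodgson polynomials to vanish and, combined with the Pl\"ucker identity used in the proof of Proposition \ref{2join}, reduces the $2\times 2$ determinant to a single product of the two claimed factors.

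For the second equality I would apply Proposition \ref{exists} to expand each of $\Psi^{13,45}_H$ and $\Psi^{4,5}_{H,13}$ as a signed sum of spanning forest polynomials of $H$, indexed by partitions of the vertices incident to the respective $I\cup J$. The non-vanishing criterion $f_k = \varepsilon(I_P)\,\varepsilon(J_P)$ rules out all but the partitions $\{A,B\}\{C,D\}$ and $\{A,C\}\{B,D\}$ for $\Psi^{13,45}_H$, with opposite signs by Corollary \ref{trans} since they differ by a transposition on the $J$-side. After the further contraction of edges $1$ and $3$, which fuses three of the external vertices into a single class, only the partition $\{A,D\}\{B\}\{C\}$ survives for $\Psi^{4,5}_{H,13}$. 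Finally, because the endpoints of the distinguished edges $1,3,4,5$ land in different parts of each surviving partition, Proposition \ref{propdiffparts} lets me strip the variables $\alpha_1\alpha_3\alpha_4\alpha_5$ out of each $\Phi^P_H$ and replace it by $\Phi^P_K$, absorbing the resulting constant prefactor into the overall sign.

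The main obstacle I expect is sign bookkeeping rather than any conceptual difficulty: Proposition \ref{contract}, the Pl\"ucker relation, Proposition \ref{2join}, and Corollaries \ref{trans}/\ref{treeswitch} each introduce $\pm 1$ ambiguities, and reconciling them with the global $\pm$ (the polynomials $\Psi^{I,J}_{G,K}$ are only defined up to sign) requires consistent choices of edge orientations, vertex orderings and roots throughout. None of the individual steps is deep, but a sign error anywhere in the chain derails the final factorization.
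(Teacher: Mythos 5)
Your second equality is handled essentially as in the paper: expand via Proposition \ref{exists}, observe that only the partitions $\{A,B\}\{C,D\}$, $\{A,C\}\{B,D\}$ (with opposite signs, by Corollary \ref{trans}) survive for one factor and only $\{A,D\}\{B\}\{C\}$ for the other, and pass from $H$ to $K$. That part is fine, modulo the usual sign bookkeeping and the small point that one must also un-contract the identified triangle vertices (the unnumbered proposition after Proposition \ref{propdiffparts}) to land on partitions of $\{A,B,C,D\}$ rather than of $V(I\cup J)$.

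The first equality, however, has a genuine gap. Two problems. First, Proposition \ref{contract} requires $e\notin I\cup J\cup K$, but edge $2$ lies in $I\cup J$ of every entry of the determinant as you have written it ($\Psi^{12,34}_{G,5}$, $\Psi^{125,345}_G$, etc.), so you cannot ``resolve every entry with respect to edge $2$'' there; you must first use the permutation invariance of the $5$-invariant to move $2$ into the last slot, i.e.\ work with ${}^5\Psi_G(1,3,4,5,2)$, whose entries are $\Psi^{13,45}_{G,2}$, $\Psi^{132,452}_G=\Psi^{13,45}_H$, $\Psi^{14,35}_{G,2}$, $\Psi^{142,352}_G=\Psi^{14,35}_H$. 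Second, and more seriously, the appeal to Proposition \ref{2join} is unfounded and in fact self-defeating: $H$ is not a two-vertex join separating $\{1,3\}$ from $\{4,5\}$ --- the five drawn edges meet the arbitrary core $K$ in \emph{four} vertices $A,B,C,D$ --- and if it were, Proposition \ref{2join} would give $\Psi^{13,45}_H=0$, killing precisely the factor your formula asserts is the nontrivial one. The vanishing that actually collapses the determinant comes from the triangle, which your argument never uses: contracting edge $2$ makes edges $1$ and $3$ parallel, so $\{1,3\}$ contains a cycle in $G/2$ and can never be completed to a spanning tree, whence $\Psi^{13,45}_{G,2}=0$ and
\[
{}^5\Psi_G(1,3,4,5,2)=\pm\,\Psi^{13,45}_H\,\Psi^{14,35}_{G,2}.
\]
This is the one-line ``since $123$ forms a triangle'' step of the paper. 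Finally, the identification $\Psi^{14,35}_{G,2}=\pm\Psi^{4,5}_{H,13}$ is not a formal minor identity either (the roles of deleted versus contracted edges differ); it is proved by the same spanning-forest comparison you use in your second paragraph, so that part of your machinery does the remaining work once the factorization is in place.
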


\begin{proof} 
Since $123$ forms a triangle, we have
$${}^5\Psi_G(1,2,3,4,5)= \pm\Psi_G^{123,245}\Psi^{14,35}_{G,2}.$$  Drawing $K$ as a blob and using $\cap$ to indicate the polynomial formed of terms common to each argument with signs as in Proposition \ref{exists}, we have
\begin{align*}
  \Psi_{G,2}^{14,35} & = \raisebox{-1cm}{\includegraphics{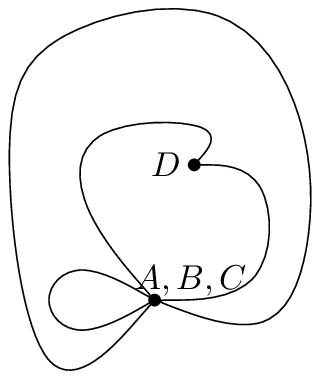}}\cap \raisebox{-1cm}{\includegraphics{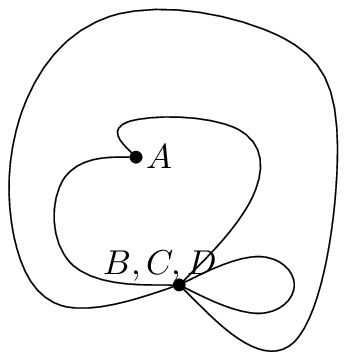}} \\ &= \pm\Phi^{\{A,D\},\{B\},\{C\}}_K \\
\end{align*}
\begin{align*}
  \Psi_{G}^{123,245} & = \raisebox{-1cm}{\includegraphics{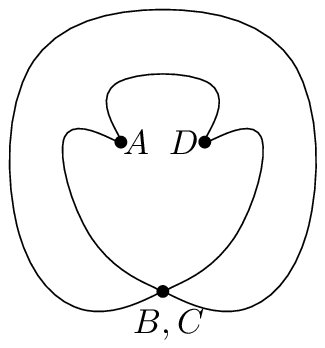}}\cap \raisebox{-1cm}{\includegraphics{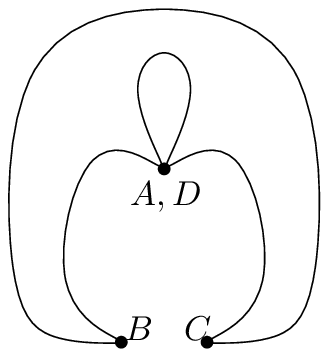}} \\ &= \pm\left(\Phi^{\{A,B\},\{C,D\}}_K - \Phi^{\{A,C\},\{B,D\}}_K\right)
\end{align*}
$\Psi_{H,13}^{4,5}$ gives the same intersection of blobs as $\Psi_{G,2}^{14,35}$ and $\Psi_H^{13,45}$ gives the same intersection of blobs as $\Psi_{G}^{123,245}$.  The result follows.
\end{proof}

Consider a `double triangle':
\[
  G' = \raisebox{-1.3cm}{\includegraphics{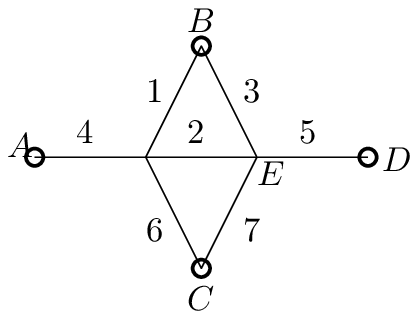}}
\]
%***NOTE THAT NOTHING IS ATTACHED TO E, SO WE NEED TO ADD THE BLOBS***
Let $K$ again be the rest of the graph, $K = G'\backslash \{1,2,3,4,5,6,7\}$.  Note that $H = \left(G'\backslash \{3, 7\}/2\right)|_{6 \leftrightarrow 3}$.
%Let 
%\[
% H'=G'\backslash \{2\} = \raisebox{-1.3cm}{\includegraphics{Hprime}}
%\]
\begin{prop}\label{dt7prop}
The denominator $D_7$ after reducing $G'$ with respect to the seven edges $1-7$ indicated above, is given by
\begin{align*}D_7(G') & = \pm \Psi^{13,45}_{H} \Psi^{4,5}_{H,13} \\ & = \pm\left(\Phi_K^{\{A,B\},\{C,D\}} - \Phi_K^{\{A,C\},\{B,D\}}\right)\Phi_K^{\{A,D\},\{B\},\{C\}} \end{align*}
\end{prop}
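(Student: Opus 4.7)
The plan is to carry out the denominator reduction on $G'$ through all seven edges and match the result to the $5$-reduction already computed for $G$ in Proposition \ref{dt5prop}. The graph $G'$ contains two triangles sharing edge $3$: the ``upper'' triangle on edges $\{1,2,3\}$ and the ``lower'' triangle on edges $\{3,6,7\}$ through the new vertex $E$. The proof proceeds in three phases.

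\emph{First phase.} Since $\{1,2,3\}$ forms a triangle in $G'$, the same triangle identity used in Proposition \ref{dt5prop} applies verbatim:
\[
D_5(G') \;=\; {}^5\Psi_{G'}(1,2,3,4,5) \;=\; \pm\, \Psi^{123,245}_{G'}\,\Psi^{14,35}_{G',2}.
\]

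\emph{Second phase.} Use Proposition \ref{contract} to expand each factor as a polynomial linear in $\alpha_6$ and in $\alpha_7$ separately, writing
\[
\Psi^{123,245}_{G'} = P_{11}\alpha_6\alpha_7 + P_{10}\alpha_6 + P_{01}\alpha_7 + P_{00},
\]
and analogously for $\Psi^{14,35}_{G',2}$, where every coefficient is a Dodgson polynomial of an appropriate minor of $G'$ obtained by deleting or contracting edges $6$ and $7$. Because vertex $E$ is incident only to edges $\{3,6,7\}$, the minors in which both $6$ and $7$ are deleted leave $E$ isolated and hence contribute zero, while the minors in which both are contracted identify $E$ with its two neighbours and turn edge $3$ into a short ``bridge'' that can be absorbed. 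After these simplifications every nonvanishing coefficient is expressible as a Dodgson polynomial of a minor of $H$, via the identification $H \cong (G'\backslash\{3,7\}/2)|_{6\leftrightarrow 3}$.

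\emph{Third phase.} Compute $D_6 = \pm\sqrt{\Delta_{\alpha_6} D_5(G')}$ and then $D_7 = \pm\sqrt{\Delta_{\alpha_7} D_6}$. Since $D_5(G')$ is already a product of two polynomials each linear in $\alpha_6$, its discriminant in $\alpha_6$ is a perfect square and $D_6$ emerges as a single Plücker-type difference of products of Dodgson polynomials. Iterating the discriminant in $\alpha_7$ and applying the Plücker identity used in the proof of Proposition \ref{2join} collapses the resulting four-term expression to a single product of two Dodgson polynomials on $H$, namely $\pm\Psi^{13,45}_H\,\Psi^{4,5}_{H,13}$. The second equality in the statement then follows from Proposition \ref{dt5prop} itself.

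\emph{Main obstacle.} The hard part is the bookkeeping: tracking signs through the triangle identity, the bilinear expansion in $(\alpha_6,\alpha_7)$, and two successive discriminants, while simultaneously verifying that each surviving minor Dodgson polynomial of $G'$ reduces under the isomorphism $H \cong (G'\backslash\{3,7\}/2)|_{6\leftrightarrow 3}$ to the correct Dodgson polynomial of $H$. The sign corollaries \ref{trans} and \ref{treeswitch} control the signs, but the algebraic collapse of the Plücker combination to a clean product (as opposed to a genuine four-term expression) is the delicate step that makes the double-triangle case succeed, and it is exactly this collapse that exhibits the insertion of the second triangle as a weight-preserving operation.
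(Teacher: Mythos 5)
Your reduction order is not the paper's, and the difference matters. The paper reduces the edges in the order $1,2,3,4,6$, then $7$, then $5$: it applies Proposition \ref{dt5prop} to $G'$ with edge $6$ playing the role of edge $5$ (so the ``blob'' is $K\cup\{5,7\}$), observes that the resulting factor $\Phi^{\{A,C\},\{B\},\{E\}}_{K\cup\{5,7\}}$ is literally divisible by $\alpha_7$ because the two ends of edge $7$ lie in different parts (Proposition \ref{propdiffparts}), so the $\alpha_7$-reduction is a one-line deletion/contraction; it then reduces edge $5$ by reading the contraction and deletion off the spanning-forest pictures, and finally recognizes the resulting four-term expression as the determinant expansion $\Psi_G^{145,235}\Psi_{G,5}^{12,34}-\Psi_{G,5}^{14,23}\Psi_G^{125,345}$ of ${}^5\Psi_G(1,2,3,4,5)$, to which Proposition \ref{dt5prop} applies a second time. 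Every step is an identity already established.

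Your order $1,\ldots,5$, then $6$, then $7$ pushes all of the difficulty into your third phase, and that phase is asserted rather than proved. After the $\alpha_6$-step you have $D_6=\pm(P_1Q_0-P_0Q_1)$, a difference of products which is a priori an irreducible quadratic in $\alpha_7$; you must exhibit a factorization into linear factors in $\alpha_7$ before the reduction can even continue, and then show that the next discriminant equals $\pm\Psi^{13,45}_H\Psi^{4,5}_{H,13}$. You attribute this collapse to ``the Pl\"ucker identity,'' but Pl\"ucker is a three-term \emph{linear} relation among Dodgson polynomials on a fixed set of four edges; it does not by itself factor a difference of products, and in the paper the collapse comes instead from recognizing a $5$-invariant of $G$ and invoking the triangle factorization. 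There is also a slip in your second phase: deleting both $6$ and $7$ does not isolate $E$, since $E$ still carries edge $3$, so the vanishing of that coefficient (if it vanishes) needs a different argument, and it is not shown that the surviving coefficients are Dodgson polynomials of minors of $H$. As written, the proposal correctly locates where the work lies but does not do it; either carry out the $\alpha_7$-factorization and the identification with $\Psi^{13,45}_H\Psi^{4,5}_{H,13}$ explicitly, or adopt the paper's ordering, where the divisibility by $\alpha_7$ makes the hard step disappear.
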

\begin{proof}By Proposition \ref{dt5prop} applied to edges $1,3,2,4,6$ we know that
\[
  {}^5\Psi_{G'}(1,2,3,4,6) = \pm\Phi_{K\cup\{5,7\}}^{\{A,C\},\{B\},\{E\}}\left( \Phi_{K\cup\{5,7\}}^{\{A,B\},\{C,E\}} - \Phi_{K\cup\{5,7\}}^{\{A,E\},\{B,C\}}\right)
\]
Notice that in the partition $\{A,C\}, \{B\},\{E\}$ the two ends of edge $7$ are in different parts.  Thus, by Proposition \ref{propdiffparts}
\[
 \Phi_{K\cup\{5,7\}}^{\{A,C\},\{B\},\{E\}} = \alpha_7 \Phi_{K\cup 5}^{\{A,C\},\{B\},\{E\}} .
\] 
This removes one term in the discriminant so we can easily apply a denominator reduction with respect to the edge $7$. We deduce that
\begin{align*}
  {}^6\Psi_{G'}(1,2,3,4,6,7) & = \pm\Phi_{K\cup 5}^{\{A,C\},\{B\},\{E\}}\Phi_{K \cup \{5,7\}/7}^{\{A,B\},\{C\}} \\
  & = \pm \raisebox{-1.3cm}{\includegraphics{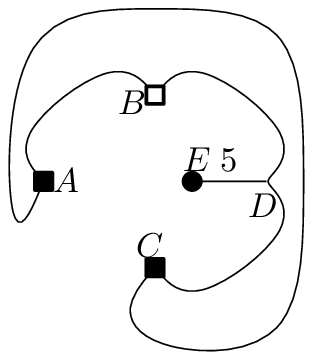}}\raisebox{-1.3cm}{\includegraphics{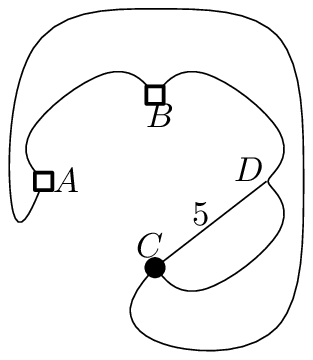}}
\end{align*}
From the pictures we can read off the contractions and deletions of edge $5$ and deduce that the reduction with respect to edge $5$ is
\begin{align*}
  & {}^7\Psi_{G'}(1,2,3,4,5,6,7) \\
  & = \pm\left(\Phi_K^{\{A,C\},\{B\}}\Phi_K^{\{A,B\},\{C\},\{D\}} - \Phi_K^{\{A,C\},\{B\},\{D\}}\Phi_K^{\{A,B\},\{C\}}\right)
\end{align*}
But this is itself a five-invariant, ${}^5\Psi_G(1,2,3,4,5)$, expanded as 
\[
  \Psi_G^{145,235}\Psi_{G,5}^{12,34} - \Psi_{G,5}^{14,23}\Psi_G^{125,345}
\]
 where $G$ is as in the previous proposition.  Applying Proposition \ref{dt5prop} to rewrite this $5$-invariant completes the proof.
\end{proof}

\begin{thm}\label{split}
Let $G$ and $G'$ be obtained, as above, by splitting a triangle. Suppose that $G'$ is linearly reducible with respect to a set of edges
% $\{1,\ldots, 5\}\cup S$ (in that order), where $S\subset G\backslash \{1,\ldots, 5\}$. Then $G'$ is linearly  reducible with respect to
 $\{1,\ldots, 7\}\cup S$ (in that order), where $S\subset G'\backslash \{1,\ldots, 7\}$. Then $G$ is linearly reducible with respect to $\{1,\ldots, 5\}\cup S$ and has a weight drop  if and only if $G'$ has a weight drop.
\end{thm}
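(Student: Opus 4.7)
The proof is essentially an immediate corollary of Propositions \ref{dt5prop} and \ref{dt7prop}: these two propositions compute the ``key'' denominators at opposite ends of the reduction range, and show that they coincide as polynomials in the edges of $K$. The plan is to observe that once these coincide, all subsequent denominator reductions (carried out with respect to edges of the common subgraph $S \subseteq E(K)$) are literally the same polynomial operations, so linear reducibility and weight drop transfer automatically.

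First, recall from the discussion preceding Corollary \ref{cordenom} that every graph is linearly reducible through the first five stages, so $D_5(G) = {}^5\Psi_G(1,2,3,4,5)$ is always well-defined. By Proposition \ref{dt5prop},
\[
D_5(G) \;=\; \pm\,\Psi^{13,45}_{H}\,\Psi^{4,5}_{H,13},
\]
and by Proposition \ref{dt7prop}, assuming $G'$ is linearly reducible through the first seven stages (which is part of the hypothesis, and of which the first five are automatic),
\[
D_7(G') \;=\; \pm\,\Psi^{13,45}_{H}\,\Psi^{4,5}_{H,13}.
\]
Thus $D_5(G) = \pm\, D_7(G')$ as polynomials in the Schwinger variables of $K$.

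Second, the edges in $S$ lie in $E(K)$ and carry the same Schwinger parameters in $G$ and in $G'$, since $G$ and $G'$ agree outside the triangle/double-triangle region. Therefore the denominator reduction algorithm --- which at each step only requires factoring the current polynomial linearly in the next Schwinger variable and taking the square root of the discriminant --- produces identical outputs (up to sign) when applied to $D_5(G)$ and to $D_7(G')$ with the same ordered sequence of edges in $S$. By induction on $k$,
\[
D_{5+k}(G) \;=\; \pm\, D_{7+k}(G') \qquad\text{for all } k \geq 0,
\]
with one side well-defined if and only if the other is. This establishes both the linear reducibility claim and, since a weight drop is by definition the vanishing (equivalently, perfect-square-ness at the preceding step) of some $D_n$, the weight-drop equivalence.

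There is no real obstacle here beyond keeping track of signs, which is harmless because the denominator reduction is itself only defined up to sign. All the genuine combinatorial content has been absorbed into Propositions \ref{dt5prop} and \ref{dt7prop}.
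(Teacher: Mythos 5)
Your proposal is correct and takes essentially the same route as the paper: the paper's entire proof is the one-line observation that ${}^5\Psi_G(1,2,3,4,5) = \pm\, {}^7\Psi_{G'}(1,2,3,4,5,6,7)$, which follows immediately from Propositions \ref{dt5prop} and \ref{dt7prop}. You simply make explicit the (routine) induction that the subsequent reductions over the common edges of $S$ coincide, which the paper leaves implicit.
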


\begin{proof}
It follows from the two previous propositions that:
$${}^5\Psi_G(1,2,3,4,5) =\pm {}^7\Psi_{G'}(1,2,3,4,5,6,7)\ .$$
\end{proof}

Note that $G'$  %(resp. $G'$)
 is always linearly reducible with respect to $\{1,2,3,4,5,6,7\}$ (the case $S=\emptyset$), because every 5-edge minor of $G'$  has either a triangle or a 3-valent vertex, and so $G'$ 
  contains no non-trivial 5-invariants.

By deleting edges $6$ and $7$ in $G'$ we get a special case of the double triangle where a single triangle with two three-valent vertices is contracted to two three-valent vertices connected by an edge.  If we also consider the two remaining edges adjacent to $B$, then by similar arguments $7$ and $5$ integrations respectively give the same denominator.  By deleting edge $4$ we get a special case with a three-valent vertex in the double triangle contracting to a single triangle with a three-valent vertex.  If we also consider one more edge adjacent to $C$ then by similar arguments $7$ and $5$ integrations respectively again give the same denominator.

\subsection{Families of weight drop graphs}

The first family of weight drop graphs is already well known.

\begin{prop}\label{2VR}
Let $G$ be two vertex reducible.  Then $G$ has a weight drop.
\end{prop}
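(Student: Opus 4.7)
The plan is to force a weight drop at the earliest possible stage of the denominator reduction by exhibiting a choice of five edges whose 5-invariant vanishes identically. Since $D_5=0$ is trivially a perfect square in any subsequent variable, it follows that $D_6=0$ and all later $D_m$ vanish, which is exactly the definition of weight drop.

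Since $G$ is two vertex reducible at a pair $v_1,v_2$, I view $G$ as a two-vertex join of two subgraphs $G_1$ and $G_2$ sharing only those vertices, and then pick three edges $i,j,m \in G_1$ and two edges $k,l \in G_2$. Two-vertex reducibility forces each side to contain at least one vertex outside $\{v_1,v_2\}$, and hence at least two edges, so such a choice is always available whenever $G$ is big enough for a 5-invariant to be defined at all.

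Next I would expand the 5-invariant as
$${}^5\Psi_G(i,j,k,l,m) = \pm \det\begin{pmatrix} \Psi_{G,m}^{ij,kl} & \Psi_G^{ijm,klm} \\ \Psi_{G,m}^{ik,jl} & \Psi_G^{ikm,jlm} \end{pmatrix}$$
and argue that its first row vanishes identically. Proposition \ref{contract} gives $\Psi_{G,m}^{ij,kl} = \Psi_{G/m}^{ij,kl}$; since $G/m$ remains a two-vertex join with $i,j$ on one side and $k,l$ on the other, this is zero by Proposition \ref{2join}. Similarly, Corollary \ref{IJdisj} rewrites $\Psi_G^{ijm,klm}$ up to sign as $\Psi_{G\backslash m}^{ij,kl}$, which vanishes by the same identity applied to $G\backslash m$. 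Therefore ${}^5\Psi_G(i,j,k,l,m)=0$, giving the desired weight drop.

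The main technical point to verify is that the two-vertex-join structure really descends to both $G/m$ and $G\backslash m$ with $\{i,j\}$ still separated from $\{k,l\}$. The only way this can fail is when $m$ is the edge $v_1v_2$ itself, whose contraction would identify the two join vertices; this is sidestepped by choosing $m \in G_1$ not equal to that edge, which is always possible unless $G_1$ consists entirely of parallel $v_1v_2$ edges, in which case one swaps the roles of $G_1$ and $G_2$ and runs the same argument.
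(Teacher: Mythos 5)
Your argument is correct, but it is not the route the paper takes, so a brief comparison is in order. The paper stops at the fourth integration stage: with edges $1,2\in G_1$ and $3,4\in G_2$, Proposition \ref{2join} gives both $\Psi^{12,34}_G=0$ and (via the Pl\"ucker identity) $\Psi^{13,24}_G=\Psi^{14,23}_G$, so the surviving denominator $\Psi^{13,24}_G\Psi^{14,23}_G=(\Psi^{13,24}_G)^2$ is a perfect square and Corollary \ref{cordenom} delivers the weight drop. You instead push on to the fifth stage and kill the $5$-invariant outright, using only the vanishing identity $\Psi^{ij,kl}=0$ of Proposition \ref{2join}, but applying it twice --- to the minors $G/m$ and $G\backslash m$, reached via Proposition \ref{contract} and Corollary \ref{IJdisj} --- so that the entire first row of the defining determinant vanishes and $D_5=0$. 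The two arguments are consistent: the $5$-invariant is the denominator produced by integrating out $\alpha_5$ against the stage-four denominator, and a perfect square there forces it to vanish. What your version buys is that it avoids the Pl\"ucker identity and lands directly on $D_5=0$, the form most convenient for the denominator-reduction formalism; what it costs is a marginally stronger edge count (three edges on one side rather than two on each), which, as you note, is harmless for any graph large enough to carry a $5$-invariant. One small remark: your caution about $m$ not being a $v_1v_2$-edge is unnecessary, since contracting such an edge merely merges the two cut vertices and the separation argument in the proof of Proposition \ref{2join} then applies a fortiori.
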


\begin{proof}  Write $G$ as the 2-vertex join of two graphs $G_1$ and $G_2$. Number the edges of $G$ in any way so that edges $1,2$ lie in $G_1$ and edges $3, 4$ lie in $G_2$.
Then by Proposition \ref{2join}, $\Psi^{12,34}_G= 0$ and $\Psi_G^{13,24}=\Psi_G^{14,23}$. At the fourth stage of integration ($I^4_G$ above) the denominator reduces to 
$$\Psi_G^{13,24}\Psi_G^{14,23} = (\Psi_G^{13,24})^2\ .  $$
Thus we have a weight drop. 
\end{proof}

The same argument shows that any graph with a double edge and more than 4 edges in total has weight drop.

The first family of weight drop graphs which goes beyond 2-vertex reducible graphs was observed empirically by one of us (KY) and independently by Oliver Schnetz, who later also found a proof.  The most beautiful proof, also observed by Oliver Schnetz, is in terms of the material of this section.

%\begin{cor} Suppose that $G$ contains a double triangle as above. Then $G$ has a weight drop.
%\end{cor}
%\begin{proof} $G$ is linearly reducible for the first 7 reductions, since all its 5-invariants are split (it has two  triangles). 
%By theorem \ref{split}, the denominator after 7 reductions
%${}^7\Psi_G(1,2,3,4,5,6,7)$
%coincides with ${}^5\Psi_{H}(1,2,3,4,5)$, where $H$  is a 2VR minor of $G$. By proposition $\ref{2VR}$ this is zero, and hence $G$ has a weight drop. 
%\end{proof}

\begin{example}
  Every graph of the form 
  \[
  \includegraphics{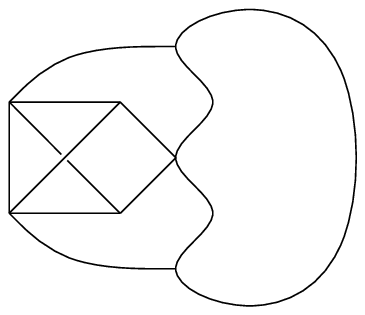}
  \]
  has a weight drop.

  To see this consider the pair of triangles marked below by heavy edges
  \[
  \includegraphics{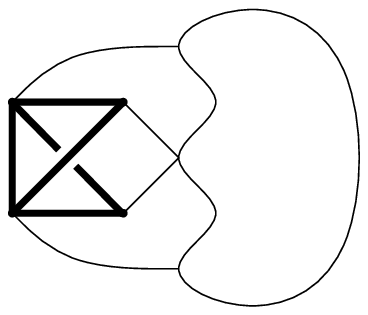}
  \]
  By Theorem \ref{split} this graph has weight drop iff 
  \[
  \includegraphics{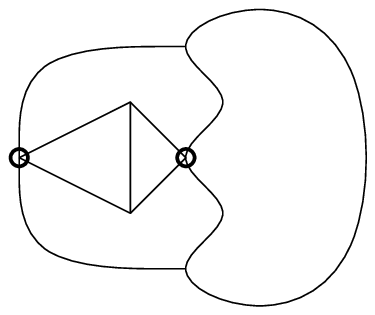}
  \]
  does.  The latter is two vertex reducible at the marked vertices and hence by Proposition \ref{2VR} has weight drop.
\end{example}

Further families can be built along these lines by repeated double triangles.  To make a more systematic search for weight preserving operations and weight drop families we can use the 3-vertex join result as in the next section.

\subsection{Operations on  3-connected graphs}\label{sect3vw}
Let $G$ be a 3-connected graph.   We can write $G=L\cup_3 R$ as the join of two graphs $L$ and $R$ along three distinguished vertices $v_1,v_2,v_3$ (below left).
By Theorem \ref{thm3connected}  there is a universal formula for the graph polynomial of $G$ in terms of Dodgson polynomials. Thus for any fixed left-hand side $L$,
we can consider the graph $\widetilde{L}$ obtained by joining a vertex $v$ to $v_1, v_2,v_3$ (right):

\[
\includegraphics{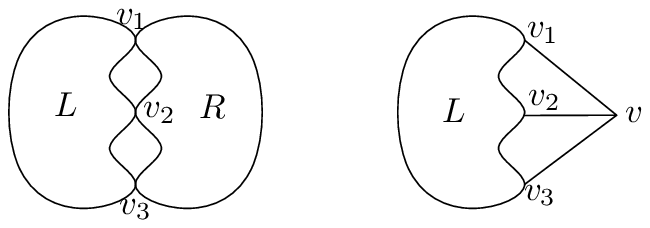}
\]

Suppose that $L$ has at least $5$ edges, and let $x,y,z$ denote the Schwinger parameters of the edges $\{v,v_1\},\{v,v_2\},\{v,v_3\}$ respectively.  Suppose that $\widetilde{L}$ is denominator
reducible. 
Then we define 
$$ \rho_L(x,y,z) = {}^{e_L}\Psi_{\widetilde{L}}(\alpha_1,\ldots, \alpha_{e_L}) \in \Q[x,y,z]$$
 where the edges of $L$ are numbered $1,\ldots, e_L$. By Theorem \ref{thm3connected}, the polynomial  $\rho_L$ computes the general shape of the denominator reduction of any graph $G=L\cup_3 R$ after reducing out all
 the  edges in $L$.
 
 \begin{prop} Suppose that $G$ is the 3-vertex join of $L$ and $R$. Let
  %$$X= \Psi^{2,3}_{\widetilde{R},1}\ , Y= \Psi^{1,3}_{\widetilde{R},2}\ , Z= \Psi^{1,2}_{\widetilde{R},3}\ ,  $$
$$x_R= \Phi_{R}^{\{1\},\{2,3\}}\ , \ y_R= \Phi_{R}^{\{2\},\{1,3\}}\ , \ z_R= \Phi_{R}^{\{1,2\},\{3\}}\ .$$
  We have shown that $x_Ry_R+x_Rz_R+y_Rz_R= \Psi_R \Phi_R^{\{1,2,3\}}$ (Proposition \ref{transfer}), and
  $$x_R+y_R= \Psi^{3}_{\widetilde{R},12} \ , \ x_R+z_R =\Psi^{2}_{\widetilde{R},13} \ , \  y_R+z_R =\Psi^{1}_{\widetilde{R},23}\ . $$
  If $G$ is denominator reducible, then Theorem $\ref{thm3connected}$ implies that: 
 $${}^{e_L} \Psi_G(\alpha_1,\ldots, \alpha_{e_L}) = (\Psi_R)^{2-\deg \rho_L} \rho_L ( x_R,y_R,z_R)\ .$$
 It follows in particular that if $\rho_L$ is of degree $0$  or is  a perfect square, then $G$ has  a weight drop.
 \end{prop}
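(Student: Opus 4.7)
The starting point is the identity from (the proof of) Theorem \ref{thm3connected}: $\Psi_G = \widetilde f\,g + \sum_{i\neq j} f_i g_j + f\,\widetilde g$ with $g=\Psi_R$, $g_1=x_R$, $g_2=y_R$, $g_3=z_R$, $\widetilde g = \widetilde g_R$. Multiplying through by $\Psi_R$ and using the Dodgson identity $\Psi_R\,\widetilde g_R = x_Ry_R+y_Rz_R+x_Rz_R$ from the statement yields
$$\Psi_R\,\Psi_G = \widetilde f\,\Psi_R^{\,2} + \Psi_R\bigl[f_1(y_R+z_R)+f_2(x_R+z_R)+f_3(x_R+y_R)\bigr] + f\,(x_Ry_R+y_Rz_R+x_Rz_R).$$
The right-hand side is precisely the bihomogenization $\Psi_{\widetilde L}^{\mathrm{hom}}(x,y,z,t)$ of $\Psi_{\widetilde L}(x,y,z)$ --- obtained by inserting $t$ to make every monomial homogeneous of total degree $2$ in $(x,y,z,t)$ --- evaluated at $(x_R,y_R,z_R,\Psi_R)$. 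This is the base identity that must be propagated through the denominator reduction.

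The plan is then to carry this substitution through every stage of the reduction. Theorem \ref{thm3connected} applies equally to each minor $G\backslash S_1/S_2$ with $S_1,S_2\subset E(L)$ (the $R$-side is untouched), so by Proposition \ref{contract} and Corollary \ref{IJdisj} every Dodgson polynomial $\Psi^{I,J}_{G,K}$ with $I\cup J\cup K\subset E(L)$ satisfies an identity of the same shape. Feeding these into the $2\times 2$ determinant defining ${}^5\Psi_G(1,\ldots,5)$ gives a base case $\Psi_R^{\,e_5}\,D_5^{(G)} = D_5^{(\widetilde L),\mathrm{hom}}(x_R,y_R,z_R,\Psi_R)$ for a suitable exponent $e_5$, where $D_n^{(\widetilde L),\mathrm{hom}}$ denotes the bihomogenization of $D_n^{(\widetilde L)}$ in the auxiliary grading $(x,y,z,t)$.

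For the inductive step $n\mapsto n+1$ with $n\ge 5$, the key observation is that the discriminant operation $D_n = (a\alpha_{n+1}+b)(c\alpha_{n+1}+d) \mapsto ad-bc$ is linear under multiplication by factors independent of the reduction variable $\alpha_{n+1}$, and preserves total degree in the auxiliary grading (each of $a,b,c,d$ absorbs half the degree and $ad-bc$ recombines). Since $\Psi_R, x_R, y_R, z_R$ are scalars from the viewpoint of $L$-reductions, the substitution identity persists at every stage. At $n = e_L$ we have $D_{e_L}^{(\widetilde L)} = \rho_L(x,y,z)$, and its bihomogenization is forced by being homogeneous of degree $2$ in $(x,y,z,t)$ to equal $t^{\,2-\deg\rho_L}\rho_L(x,y,z)$. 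Specialising $(x,y,z,t)\to(x_R,y_R,z_R,\Psi_R)$ and absorbing the bookkeeping exponent $e_{e_L}$ into the power of $\Psi_R$ recovers the claimed formula ${}^{e_L}\Psi_G = \Psi_R^{\,2-\deg\rho_L}\rho_L(x_R,y_R,z_R)$.

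I expect the main obstacle to be the base case at $n=5$: verifying that the four Dodgson polynomials in the $2\times 2$ determinant for the $5$-invariant transform with matched powers of $\Psi_R$, so that the determinant itself bihomogenises uniformly (and that any mixed-degree terms in $\rho_L$ clear against $\Psi_R$-factors as in the base identity for $\Psi_G$ itself). The inductive step for $n\ge 5$ is essentially formal, given the scaling linearity of discriminant reduction. The ``in particular'' clause then follows at once: if $\deg\rho_L = 0$ the formula reads $\Psi_R^{\,2}\cdot\rho_L$, manifestly a square; if $\rho_L = q^2$ is a perfect square then $\deg\rho_L$ is even, so $\Psi_R^{\,2-\deg\rho_L}$ is a square, $\rho_L(x_R,y_R,z_R) = q(x_R,y_R,z_R)^2$ is a square, and hence their product is a square, giving a weight drop for $G$.
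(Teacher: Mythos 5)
Your overall strategy is the right one, and it is essentially what the paper intends (the paper offers no proof beyond asserting that Theorem \ref{thm3connected} implies the formula): establish $\Psi_R\,\Psi_G=\Psi_R^2\,\Psi_{\widetilde L}(x_R/\Psi_R,\,y_R/\Psi_R,\,z_R/\Psi_R)$ and push this relation through the denominator reduction, using that the $R$-side quantities are constants for every reduction in an $L$-edge and that the discriminant step scales linearly in any prefactor independent of $\alpha_{n+1}$. Your base identity for $\Psi_G$ is correct, as is your treatment of the ``in particular'' clause.

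The genuine gap is exactly where you suspect it: the base case at $n=5$. Proposition \ref{contract} and Corollary \ref{IJdisj} only reduce a Dodgson polynomial to one with $I\cap J=K=\emptyset$; they do not convert $\Psi^{ij,kl}_{G,m}$ with $I\neq J$ into the graph polynomial of a minor, so Theorem \ref{thm3connected} ``applied to each minor'' says nothing about the four entries of the $5$-invariant, and the induction as written has no starting point. What is needed is the join decomposition for spanning forest polynomials: by Proposition \ref{exists} each entry is a signed sum of $\Phi_G^P$ with $P$ a partition of vertices of $L$; every tree of a contributing forest must meet $\{v_1,v_2,v_3\}$ on the $R$-side, so each $\Phi_G^P$ is a linear combination of $\Psi_R,x_R,y_R,z_R,\Phi_R^{\{1\},\{2\},\{3\}}$ whose coefficients are spanning forest polynomials of $L$ alone --- the same coefficients one obtains when $R$ is the three-edge star. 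Multiplying by $\Psi_R$ and using $\Psi_R\,\Phi_R^{\{1\},\{2\},\{3\}}=x_Ry_R+y_Rz_R+x_Rz_R$ (Proposition \ref{transfer}) gives the matched scaling of each matrix entry, hence of $D_5$; this is the generalization of the first step of the proof of Theorem \ref{thm3connected} from $\Psi$ to arbitrary $\Phi^P$, and it is not supplied by the results you cite.

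A smaller point: the ``bihomogenization to degree $2$'' does not literally exist once $\deg_{(x,y,z)}$ exceeds $2$, which already happens for $D_5^{(\widetilde L)}$ (a product of two entries each of degree up to $2$) and for $\rho_L$ itself (e.g.\ $\rho_{6_1}=(xy+yz+xz)^2$). The statement that survives the induction is $D_n^{(G)}=\pm\,\Psi_R^2\,D_n^{(\widetilde L)}(x_R/\Psi_R,\,y_R/\Psi_R,\,z_R/\Psi_R)$ as an identity of rational functions, both sides being polynomials because the negative powers of $\Psi_R$ clear against $x_Ry_R+y_Rz_R+x_Rz_R=\Psi_R\,\Phi_R^{\{1\},\{2\},\{3\}}$; at the last stage the homogeneity of $\rho_L$ converts this into $\Psi_R^{2-\deg\rho_L}\rho_L(x_R,y_R,z_R)$, with a possibly negative exponent exactly as in the statement of the proposition.
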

There are a limited number of possibilities for the polynomials $\rho_L$. This gives rise to families of weight-preserving operations as follows.
\begin{cor} \label{cor3vw} Suppose that $\rho_{L_1}=\rho_{L_2}$, and  let
 $G=L_1\cup_3 R$   and $G_2= L_2 \cup_3 R$ for any $R$.  Then the denominator reductions of $G_1$ and $G_2$ are the same after reducing out all the edges in $L_1$ and $L_2$:
 %In other words,
 $$ {}^{e_{L_1}} \Psi_{G_1}(\alpha_1,\ldots, \alpha_{e_{L_1}}) ={}^{e_{L_2}} \Psi_{G_2}(\alpha_1,\ldots, \alpha_{e_{L_2}})\ . $$
 Thus $G_1$ has a weight drop after reducing with respect to the edges $ E(L_1)\cup S$, where $S\subset E(R)$,  if and only if $G_2$ 
 has a weight drop after reducing with respect to $E(L_2) \cup S$.
\end{cor}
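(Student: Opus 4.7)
The plan is to observe that this corollary is an essentially immediate consequence of the preceding proposition, and the only real content is unpacking what equality of the reduced denominators implies about subsequent denominator reductions.

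First I would apply the displayed formula of the proposition separately to $G_1 = L_1 \cup_3 R$ and $G_2 = L_2 \cup_3 R$. Since the polynomials $x_R, y_R, z_R$ and $\Psi_R$ depend only on the Schwinger parameters of the edges of $R$ (and not at all on which of $L_1$ or $L_2$ is attached), the outputs
$$ {}^{e_{L_i}}\Psi_{G_i}(\alpha_1,\ldots,\alpha_{e_{L_i}}) \;=\; (\Psi_R)^{2-\deg \rho_{L_i}}\, \rho_{L_i}(x_R, y_R, z_R) $$
are polynomials purely in the edge variables of $R$. The hypothesis $\rho_{L_1} = \rho_{L_2}$ forces both $\deg \rho_{L_1} = \deg \rho_{L_2}$ and the agreement of the pulled-back polynomial in $(x_R,y_R,z_R)$, so the two reduced denominators are equal as polynomials in the $R$-edge variables.

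Next I would use the fact that the denominator reduction is defined intrinsically on its input polynomial: if after the first $e_{L_i}$ steps the two graphs yield the same polynomial, then reducing further with respect to any common ordered set of edges $S \subset E(R)$ produces identical polynomials at every subsequent stage. Because ``weight drop'' at a given stage is defined by the output being a perfect square (equivalently, vanishing of the discriminant) at that stage, a weight drop for $G_1$ along $E(L_1)\cup S$ occurs if and only if a weight drop occurs for $G_2$ along $E(L_2)\cup S$.

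Since the proposition has already done all the genuine combinatorial work, no step is really an obstacle; the only minor point to verify carefully is that $x_R, y_R, z_R, \Psi_R$ genuinely do not depend on the left-hand side (which is clear from their definitions as spanning forest polynomials of $R$ alone), so that the substitution $(x,y,z)\mapsto(x_R,y_R,z_R)$ commutes with the choice of $L_i$. This is the step I would make sure to state explicitly.
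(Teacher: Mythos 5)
Your proposal is correct and matches the paper's intent exactly: the paper offers no separate proof of this corollary, treating it as an immediate consequence of the formula ${}^{e_L}\Psi_G = (\Psi_R)^{2-\deg\rho_L}\,\rho_L(x_R,y_R,z_R)$ from the preceding proposition, which is precisely the argument you give. Your explicit remarks that $\rho_{L_1}=\rho_{L_2}$ forces equal degrees (so the $\Psi_R$ prefactors agree) and that denominator reduction depends only on the polynomial produced so far are the right points to make explicit.
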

We begin a classification of graphs $L$ and compute their polynomials $\rho_L$ as follows. 
In the following diagrams, the white vertices 
are $v_1,v_2,v_3$ from top to bottom, and the polynomials $\rho_L$ are indicated underneath. At 5 edges, there are only two possibilities for $L$ which have neither a double edge nor a two-valent vertex (these are the simple 5-local minors in the terminology of \cite{Brbig}):
\begin{center}
\fcolorbox{white}{white}{
  \begin{picture}(214,105) (228,-50)
    \SetWidth{1.0}
    \SetColor{Black}
    \Vertex(235,14){4}
    \SetWidth{1.0}
    \Line(235,14)(289,52)
    \Line(235,14)(289,14)
    \Line(235,14)(289,-25)
    \Arc(435,52)(4,270,630)
    \Arc(435,14)(4,270,630)
    \Arc(435,-25)(4,270,630)
    \SetWidth{1.0}
    \Vertex(381,38){4}
    \Vertex(381,-9){4}
    \SetWidth{1.0}
    \Line(381,38)(435,52)
    \Line(381,38)(381,-9)
    \Line(381,38)(435,14)
    \Line(435,14)(381,-9)
    \Line(381,-9)(435,-25)
    \Text(230,-49)[lb]{{\Black{$y(xy+xz+yz)$}}}
    \Text(405,-49)[lb]{{\Black{$y$}}}
    \Text(230,50)[lb]{{\Black{$5_1$}}}
    \Text(370,50)[lb]{{\Black{$5_2$}}}
   \Arc(289,52)(4,270,630)
    \Arc(289,14)(4,270,630)
    \Arc(289,-25)(4,270,630)
    \Line(289,52)(289,-25)
  \end{picture}
}
\end{center}
Note that  graphs $5_1$ and $5_2$  have a split triangle (resp. split 3-valent vertex),  so  if they occur in a graph, we can, as noted after Theorem \ref{split}, reduce to a smaller graph, except in trivial cases where the extra edges are not available.
At 6 edges, there are exactly six such 6-local minors:
\begin{center}
\fcolorbox{white}{white}{
  \begin{picture}(312,205) (72,-39)
    \SetWidth{1.0}
    \SetColor{Black}
    \Vertex(76,130){3}
    \SetWidth{1.0}
    \Arc(131,162)(4,270,630)
    \Arc(131,130)(4,270,630)
    \Arc(131,99)(4,270,630)
    \Arc(249,162)(4,270,630)
    \Arc(249,130)(4,270,630)
    \Arc(249,99)(4,270,630)
    \Arc(368,162)(4,270,630)
    \Arc(368,130)(4,270,630)
    \Arc(368,99)(4,270,630)
    \Arc(131,51)(4,270,630)
    \Arc(131,20)(4,270,630)
    \Arc(249,51)(4,270,630)
    \Arc(131,-12)(4,270,630)
    \Arc(368,51)(4,270,630)
    \Arc(368,20)(4,270,630)
    \Arc(368,-12)(4,270,630)
    \Arc(249,-12)(4,270,630)
    \Arc(249,20)(4,270,630)
    \SetWidth{1.0}
    \Vertex(202,146){3}
    \Vertex(202,114){3}
    \Vertex(320,146){3}
    \Vertex(320,114){3}
    \Vertex(76,36){3}
    \Vertex(76,4){3}
    \Line(76,130)(131,162)
    \Line(76,130)(131,130)
    \Line(76,130)(131,99)
    \Line(131,162)(131,99)
    \Arc(108,130.5)(39.003,-53.865,53.865)
    \Line(202,146)(249,162)
    \Line(202,146)(249,130)
    \Line(202,146)(249,99)
    \Line(202,114)(249,162)
    \Line(202,114)(249,99)
    \Line(320,146)(320,115)
    \Line(320,146)(367,162)
    \Line(320,114)(367,130)
    \Line(320,114)(367,99)
    \Line(320,146)(367,130)
    \Arc(341.433,130)(40.57,-50.936,50.936)
    \Vertex(202,36){3}
    \Vertex(202,4){3}
    \Vertex(320,44){3}
    \Vertex(320,-4){3}
    \Vertex(336,20){3}
    \Line(320,44)(320,-4)
    \Line(320,44)(336,20)
    \Line(336,20)(320,-4)
    \Line(320,-4)(367,-12)
    \Line(336,20)(367,20)
    \Line(320,44)(367,52)
    \Line(76,36)(76,5)
    \Line(76,4)(131,-12)
    \Line(76,36)(131,20)
    \Line(76,4)(131,20)
    \Line(76,36)(131,52)
    \Line(131,52)(131,20)
    \Line(202,36)(202,5)
    \Line(202,36)(249,52)
    \Line(202,36)(249,20)
    \Line(202,36)(249,-12)
    \Line(202,4)(249,52)
    \Line(202,4)(249,-12)
    \Line(202,114)(249,130)
    \Text(66,155)[lb]{{\Black{$6_1$}}}
    \Text(190,155)[lb]{{\Black{$6_2$}}}
    \Text(308,155)[lb]{{\Black{$6_3$}}}
   \Text(66,50)[lb]{{\Black{$6_4$}}}
    \Text(190,50)[lb]{{\Black{$6_5$}}}
    \Text(308,50)[lb]{{\Black{$6_6$}}}
    \Text(70,75)[lb]{{\Black{$(xy+yz+xz)^2$}}}
    \Text(190,75)[lb]{{\Black{$xy+yz+xz$}}}
    \Text(340,77)[lb]{{\Black{$xz$}}}
    \Text(84,-35)[lb]{{\Black{$(x+y)y$}}}
    \Text(220,-35)[lb]{{\Black{$xz$}}}
    \Text(340,-35)[lb]{{\Black{$1$}}}
  \end{picture}
}
\end{center}
The graphs $6_1$ and $6_6$ have weight drops, and we obtain the first identity:
$\rho_{6_3}= \rho_{6_5}$. However,  most of the above graphs (except for  $6_2$ and $6_6$)  contain a double-triangle or split 3-vertex and do not tell us anything new.  From now on,  we only consider  graphs which do not contain a double triangle, rather than giving the complete list.  It turns out that at 7 edges, we obtain polynomials $\rho_L$ which have already appeared
above, except for the following graph:
\begin{center}
\fcolorbox{white}{white}{
  \begin{picture}(70,95) (219,-150)
    \SetWidth{2.0}
    \SetColor{Black}
    \Arc(287,-63)(4,270,630)
    \Arc(287,-96)(4,270,630)
    \Arc(287,-130)(4,270,630)
    \SetWidth{1.0}
    \Vertex(220,-77){4}
    \Vertex(220,-116){4}
    \Line(220,-77)(287,-63)
    \Line(287,-63)(287,-96)
    \Line(220,-77)(287,-96)
    \Line(220,-77)(287,-130)
    \Line(220,-116)(287,-130)
    \Line(220,-116)(287,-96)
    \Line(220,-116)(287,-63)
 \Text(200,-60)[lb]{{\Black{$7$}}}
 \Text(200,-150)[lb]{{\Black{$(x+y)(xy+xz+yz)$}}}
  \end{picture}
}
\end{center}
Similarly, at 8 edges, we get a new identity for the graphs: %$\rho_{8_a}=\rho_{8_b}$ where   % the following two  graphs at 8 edges:
\begin{center}
\fcolorbox{white}{white}{
  \begin{picture}(266,100) (102,-140)
    \SetWidth{2.0}
    \SetColor{Black}
    \Arc(178,-47)(4,270,630)
    \Arc(178,-83)(4,270,630)
    \Arc(178,-118)(4,270,630)
    \SetWidth{1.0}
    \Vertex(107,-47){4}
    \Vertex(107,-83){4}
    \SetWidth{2.0}
    \Arc(363,-47)(4,270,630)
    \Arc(363,-83)(4,270,630)
    \Arc(363,-118)(4,270,630)
    \SetWidth{1.0}
    \Vertex(292,-47){4}
    \Vertex(292,-83){4}
    \Vertex(292,-118){4}
    \Vertex(107,-118){4}
    \Line(292,-47)(363,-47)
    \Line(107,-47)(178,-47)
    \Line(107,-118)(178,-118)
    \Line(107,-47)(107,-118)
    \Line(107,-47)(178,-83)
    \Line(107,-83)(178,-118)
    \Line(292,-47)(363,-83)
    \Line(107,-118)(178,-83)
    \Line(107,-83)(178,-47)
    \Line(292,-47)(292,-118)
    \Line(292,-118)(363,-118)
    \Line(292,-47)(363,-118)
    \Line(292,-118)(363,-47)
    \Line(292,-83)(363,-83)
    \Text(120,-140)[lb]{{\Black{$y(x+z)$}}}
    \Text(312,-140)[lb]{{\Black{$y(x+z)$}}}
 \Text(80,-50)[lb]{{\Black{$8_a$}}}
    \Text(265,-50)[lb]{{\Black{$8_b$}}}
     \end{picture}
}
\end{center}
neither of which is amenable to a double-triangle type reduction. %  Thus we obtain another identity: $\rho_{8_a}= \rho_{8_b}$.
We conclude that $\rho_{8_a}= \rho_{8_b}$.
%We conclude that $\rho_{8_a}= \rho_{8_b} = \rho_{8_b'}$ where $b'$ is the graph obtained from $b$ by interchanging
%the top and bottom white vertices $v_1$ and $v_3$.
We conclude with a few more examples to illustrate  the general principle: %  with a few more examples:
\begin{center}
\fcolorbox{white}{white}{
  \begin{picture}(331,114) (20,-82)
    \SetWidth{2.0}
    \SetColor{Black}
    \Arc(222,15)(4,270,630)
    \Arc(222,-24)(4,270,630)
    \Arc(222,-63)(4,270,630)
    \SetWidth{1.0}
    \Vertex(283,15){4}
    \SetWidth{2.0}
    \Arc(346,-63)(4,270,630)
    \Arc(346,-24)(4,270,630)
    \Arc(346,15)(4,270,630)
    \Arc(89,15)(4,270,630)
    \Arc(89,-24)(4,270,630)
    \Arc(89,-63)(4,270,630)
    \SetWidth{1.0}
    \Vertex(26,15){4}
    \Vertex(26,-63){4}
    \Vertex(49,-9){4}
    \Vertex(49,-39){4}
    \Vertex(159,15){4}
    \Vertex(159,-63){4}
    \Vertex(182,-9){4}
    \Vertex(182,-39){4}
    \Vertex(283,-63){4}
    \Vertex(307,-9){4}
    \Vertex(307,-39){4}
    \Line(26,15)(89,15)
    \Line(26,15)(50,-9)
    \Line(49,-9)(49,-40)
    \Line(49,-39)(25,-63)
    \Line(26,15)(26,-63)
    \Line(49,-9)(88,-24)
    \Line(49,-39)(88,-24)
    \Line(49,-39)(88,-63)
    \Line(26,-63)(89,-63)
    \Line(159,15)(159,-63)
    \Line(159,15)(183,-9)
    \Line(182,-9)(182,-40)
    \Line(182,-39)(158,-63)
    \Line(159,15)(222,15)
    \Line(283,15)(283,-63)
    \Line(307,-9)(307,-40)
    \Line(307,-39)(283,-63)
    \Line(283,-63)(346,-63)
    \Line(307,-39)(346,-63)
    \Line(307,-9)(346,-24)
    \Line(346,-24)(307,-39)
    \Line(283,15)(346,15)
    \Line(346,15)(307,-9)
    \Line(283,15)(307,-9)
    \Line(182,-9)(221,-24)
    \Line(222,-24)(183,-39)
    \Line(182,-39)(221,-63)
    \Line(222,-63)(159,-63)
    \Line(222,15)(222,-24)
    \Text(42,-82)[lb]{{\Black{$y+z$}}}
    \Text(182,-82)[lb]{{\Black{$z^2$}}}
    \Text(283,-82)[lb]{{\Black{$xy+xz+yz$}}}
    \Text(6,14)[lb]{{\Black{$9$}}}
    \Text(136,14)[lb]{{\Black{$10_a$}}}
    \Text(262,14)[lb]{{\Black{$10_b$}}}
  \end{picture}
}
\end{center}
Thus $\rho_{10_a}$ has weight drop, and we get an identity $\rho_{10_b} = \rho_{6_2}$.
One can continue generating larger and larger graphs $L$ (provided they are denominator-reducible) and compute their polynomials $\rho_L$, giving rise to more and more complicated identities of the form $\rho_{L_1}=\rho_{L_2}$.
\subsection{Examples of 3-connected operations}
The previous discussion enables one to prove results about graphs which are inaccessible by double-triangle type arguments.
\begin{cor}\label{K34andG}
The following   graphs have a  weight drop:
\begin{center}
\fcolorbox{white}{white}{
  \begin{picture}(267,96) (102,-160)
    \SetWidth{2.0}
    \SetColor{Black}
    \Arc(143,-140)(4,270,630)
    \SetWidth{1.0}
    \Vertex(107,-90){4}
    \Vertex(107,-119){4}
    \Text(107,-155)[lb]{{\Black{$K_{3,4}
$}}}
    \Vertex(178,-90){4}
    \Vertex(178,-126){4}
    \SetWidth{2.0}
    \Arc(143,-105)(4,270,630)
    \Arc(143,-69)(4,270,630)
    \SetWidth{1.0}
    \Line(143,-69)(178,-126)
    \Line(143,-69)(178,-91)
    \Line(178,-90)(143,-140)
    \Line(178,-126)(143,-140)
    \Line(178,-90)(143,-105)
    \Line(143,-105)(178,-127)
    \Line(107,-90)(143,-69)
    \Line(107,-90)(143,-105)
    \Line(143,-105)(108,-119)
    \Line(107,-119)(143,-140)
    \Line(143,-69)(108,-119)
    \Line(107,-90)(143,-140)
    \Vertex(271,-69){4}
    \SetWidth{2.0}
    \Arc(328,-140)(4,270,630)
    \Arc(328,-105)(4,270,630)
    \Arc(328,-69)(4,270,630)
    \SetWidth{1.0}
    \Vertex(271,-140){4}
    \Vertex(293,-90){4}
    \Vertex(293,-119){4}
    \Line(271,-69)(271,-140)
    \Line(293,-90)(293,-119)
    \Line(293,-119)(271,-140)
    \Line(271,-140)(328,-140)
    \Line(293,-119)(328,-140)
    \Line(293,-90)(328,-105)
    \Line(328,-105)(293,-119)
    \Line(271,-69)(328,-69)
    \Line(328,-69)(293,-90)
    \Line(271,-69)(293,-90)
    \Text(306,-155)[lb]{{\Black{$G$}}}
    \Vertex(364,-90){4}
    \Vertex(364,-126){4}
    \Line(328,-69)(364,-90)
    \Line(364,-90)(328,-105)
    \Line(328,-105)(363,-127)
    \Line(364,-126)(328,-140)
    \Line(328,-69)(363,-126)
    \Line(364,-90)(328,-140)
  \end{picture}
}
\end{center}
The same result holds more generally for  $K_{3,4} \cup_3 R$ or $G\cup_3 R$ where $R$ is any  3-vertex reducible graph connected to the 3 white vertices.
\end{cor}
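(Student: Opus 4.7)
The strategy is to realise each of $K_{3,4}$ and $G$ as the left-graph $L$ of a 3-vertex join with the three white vertices as the distinguished boundary, and to verify that $\rho_L(x,y,z) \in \Q[x,y,z]$ is either identically zero or a perfect square. By the proposition proved just above, this forces $L \cup_3 R$ to have a weight drop for every $R$ for which denominator-reducibility holds through $E(L)$, and Corollary \ref{cor3vw} packages the conclusion. The hypothesis that $R$ is 3-vertex-reducible is there precisely to guarantee that the composite graph remains in the scope of the iterated $\cup_3$ machinery (in particular remains denominator-reducible on $E(L) \cup S$ for some $S \subset E(R)$), which is the only thing needed beyond the shape of $\rho_L$.

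For $\rho_{K_{3,4}}$, I would form $\widetilde{K_{3,4}} = K_{3,5}$ by attaching a new vertex $v$ to $v_1,v_2,v_3$ via edges of Schwinger parameters $x,y,z$, and then run denominator reduction on the twelve edges of $K_{3,4}$. The key leverage is symmetry: the $S_3$-action permuting $v_1,v_2,v_3$ forces $\rho_{K_{3,4}}$ to be a symmetric polynomial in $x,y,z$, while the $S_4$-action on the four black vertices lets me batch the reduction three edges at a time (one black vertex per batch). At each stage the intermediate denominator is controlled by the spanning-forest identities of Section 2 (Propositions \ref{exists} and \ref{transfer}) and by Proposition \ref{dt5prop} for the initial $5$-invariant. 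The output must be symmetric of predictable degree, and my expectation is that it is recognisable as a scalar multiple of $(xy+yz+zx)^2$ (or a product with an obvious symmetric square factor), giving the weight drop immediately.

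For $\rho_G$, note that $G$ has ten non-boundary edges and, by inspection, a top-bottom involution exchanging $v_1 \leftrightarrow v_3$ (fixing $v_2$), so $\rho_G(x,y,z)=\rho_G(z,y,x)$. Although $G$ contains several triangles (e.g.\ $CDW_2$, $BDW_3$, $ACW_1$), none share an edge in the configuration required by Theorem \ref{split}, so the double-triangle shortcut is unavailable. Rather than a brute ten-edge reduction, I would first look for a decomposition $G = G_1 \cup_3 G_2$ where $G_1$ matches one of the entries in the table of $\rho_L$ already computed above (natural candidates being $6_1$ with $\rho = (xy+yz+xz)^2$, or $10_a$ with $\rho = z^2$), and then invoke Corollary \ref{cor3vw} to reduce the statement to a case already settled. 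If no such decomposition is available, the direct denominator reduction together with the $v_1\leftrightarrow v_3$ symmetry should still produce a closed-form $\rho_G$ identifiable as a perfect square.

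The principal obstacle is bookkeeping in the intermediate denominators: even with the spanning-forest formalism, iterating ten or twelve reduction steps is delicate, and without some organising principle the expressions explode. The two ingredients I would rely on to keep it tractable are (i) the automorphism groups of $L$, which force $\rho_L$ into a small-dimensional symmetric subspace and cut down the candidates drastically, and (ii) the tabulated identities $\rho_{L_1}=\rho_{L_2}$ among small left-graphs, so that as many reduction steps as possible are replaced by a lookup rather than a fresh computation.
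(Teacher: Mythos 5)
Your framework is the right one (compute $\rho_L$, invoke the perfect-square criterion and Corollary \ref{cor3vw}), but the proposal stops short of the two concrete identifications on which the paper's proof actually rests, and in one place it looks in the wrong direction. For $K_{3,4}$ you leave the value of $\rho_{K_{3,4}}$ as an ``expectation'' that it is a scalar multiple of $(xy+yz+zx)^2$; the paper instead takes the weight drop of the closed graph $K_{3,4}$ as a known direct calculation from \cite{Brbig}, and for the general statement $K_{3,4}\cup_3 R$ it records (again by direct computation) that $\rho_{\widetilde{K_{3,4}}}=0$ --- i.e.\ the polynomial vanishes identically rather than being a nonzero square. Either outcome would give the weight drop, but as written your argument has no proof of the key computational fact, only a guess at its shape; symmetry alone does not force a symmetric polynomial in $x,y,z$ to be a square.

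The more serious gap is in the treatment of $G$. The decomposition you need is visible in the figure: the three white vertices split $G$ as $G=10_b\cup_3 6_2$, and the identity $\rho_{10_b}=\rho_{6_2}$ was already established in the classification table. Corollary \ref{cor3vw} then identifies the denominator reduction of $G$ (after reducing out the edges of $10_b$) with that of $6_2\cup_3 6_2=K_{3,4}$, which has a weight drop; the same substitution inside $G\cup_3 R = 10_b\cup_3(6_2\cup_3 R)$ reduces that case to $K_{3,4}\cup_3 R$, which is handled by $\rho_{\widetilde{K_{3,4}}}=0$. Your candidate left-hand sides $6_1$ and $10_a$ do not occur in $G$ in the required way, and your fallback --- a brute ten-edge reduction hoping that $\rho_G$ comes out a perfect square --- is not needed once the $10_b\cup_3 6_2$ structure is recognised. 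A minor point: the hypothesis that $R$ be ``3-vertex reducible'' simply says $R$ is attached only at the three white vertices, which is what Theorem \ref{thm3connected} requires; denominator reducibility of the composite is a separate standing assumption, not a consequence of that hypothesis.
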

\begin{proof}
The graph $K_{3,4}=6_2\cup_3 6_2$ on the left is denominator reducible and has a weight drop by direct calculation  \cite{Brbig}.
 Since $\rho_{10_b}=\rho_{6_2}$ it follows that  $G=10_b\cup_3 6_2$ also has  a weight drop by 
  Corollary  \ref{cor3vw}. The last statement follows from a similar argument, after noting (again by direct computation) that $\rho_{\widetilde{K_{3,4}}}=0$.
  \end{proof}

Now we can consider the graphs obtained by gluing $8_a$ and $8_b$ together.
There are three possibilities which preserve the symmetries of the $\rho$ polynomials: $8_a\cup 8_a$, $8_a\cup 8_b$, $8_b\cup 8_b$
\begin{equation}\label{aandb}
\includegraphics[width=.85\linewidth]{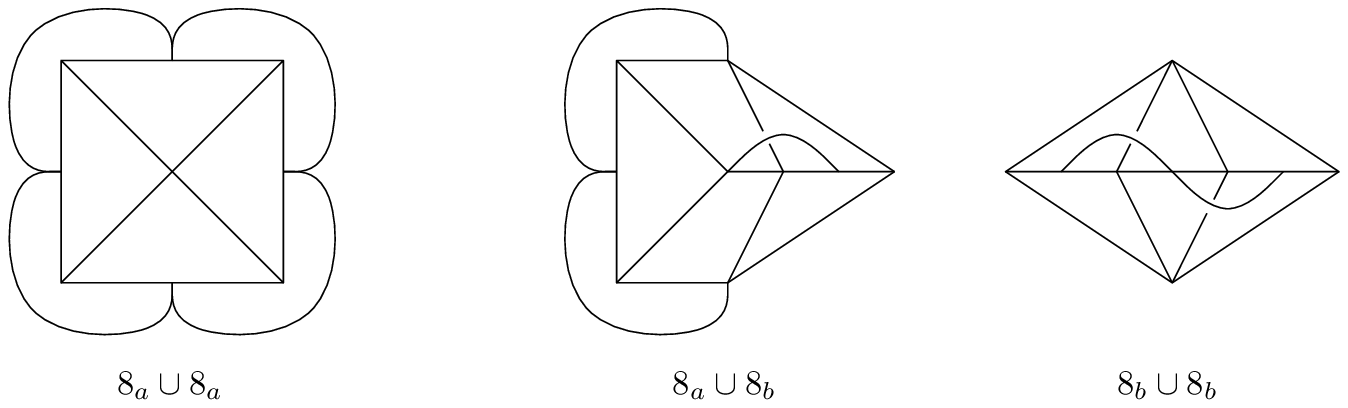}
\end{equation}
Arguing as above we conclude that all 3 have the same weight. 

\begin{remark}
  It is known that two graphs $G_1$, $G_2$, which, when completed by adjoining a new vertex to all 3-valent vertices give rise to the same graph, necessarily
have the same period.  O. Schnetz has shown this nicely in \cite{Sphi4}. In particular, they have the same weight. This should imply that $G_1$ has a weight drop in the denominator sense if and only if $G_2$ does. Can one find a combinatorial proof of this fact?
\end{remark}

\begin{remark} We obtained a list of 3-vertex connected operations on graphs simply by calculating their universal polynomials $\rho$. Is it possible to do the same for graphs with higher
degrees of vertex connectivity? In other words, for every $n\geq 3$,  is there a  finite list of  `right-hand sides' $R_1,\ldots, R_{N_n}$ such that if a property holds for all $L\cup_n R_i$ for $1 \leq i \leq N_n$
then it holds for all graphs $L\cup_n R$? In the 3-vertex connected case we have shown that $N_3=1$. It would be interesting to draw up a list of 4-vertex connected weight-preserving operations, of which the 
triangle splitting operation is one example.
\end{remark}

The results above are almost sufficient to explain all known weight-drops.  Of the graphs up to 8 loops which have been calculated to be weight drop, the application of Theorem \ref{split} and Proposition \ref{2VR} without any further identities explains all but seven of the graphs.  One of these can be explained immediately by planar duality.  Two more of them are $K_{3,4}$ and $G$ from Corollary \ref{K34andG}.  Three more are the graphs of \eqref{aandb} which all must have the same weight. The remaining graph is
\begin{equation}\label{1badgraph}
  \includegraphics{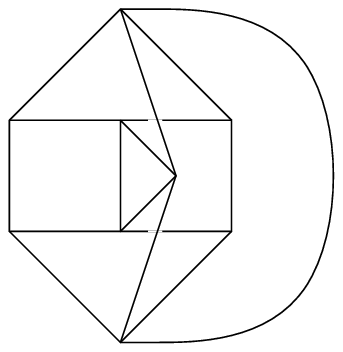}
\end{equation}
So the methods of this paper suffice to prove all known weight drops except for two.  Fortunately, these graphs are amenable to the denominator reduction algorithm.  Alternately, if we further allow ourselves Schnetz' completion and twist operations, which preserve the period \cite{Sphi4}, but which are not understood in this language of graph polynomials, then the graphs of \eqref{1badgraph} and \eqref{aandb} must all have the same value as $G$ from Corollary \ref{K34andG}, giving weight drop for all of them.

\bibliographystyle{plain}
\bibliography{main}

\end{document}